\renewcommand{\epsilon}{\varepsilon}
\newtheorem{theorem}{Theorem}[section]
\newtheorem{lemma}[theorem]{Lemma}
\newtheorem{definition}[theorem]{Definition}
\newtheorem*{theorem*}{Theorem}
\newtheorem{remark}[theorem]{Remark}
\newtheorem{problem}{Problem}
\newtheorem{assumption}{Assumption}
\def \be{\begin{equs}}
\def \ee{\end{equs}}
\title{\bf Nonconvex sampling with the Metropolis-adjusted Langevin algorithm}
\author{Oren Mangoubi\thanks{{\'{E}cole Polytechnique F\'{e}d\'{e}rale de Lausanne (EPFL).}}
\and
Nisheeth K. Vishnoi\thanks{Yale University}
}
\date{}
\begin{document}
\maketitle

\begin{abstract}

The Langevin Markov chain algorithms are widely deployed methods to sample from distributions in challenging high-dimensional and non-convex statistics and machine learning applications.
Despite this, current bounds for the Langevin algorithms are slower than those of competing algorithms in many important situations, for instance when sampling from weakly log-concave distributions, or when sampling or optimizing non-convex log-densities.
  In this paper, we obtain improved bounds in many of these situations, showing that the Metropolis-adjusted Langevin algorithm (MALA) is faster than the best bounds for its competitor algorithms when the target distribution satisfies weak third- and fourth- order regularity properties associated with the input data.  
    In many settings, our regularity conditions are weaker than the usual Euclidean operator norm regularity properties, allowing us to show faster bounds for a much larger class of distributions than would be possible with the usual Euclidean operator norm approach,  including in statistics and machine learning applications where the data satisfy a certain incoherence condition.   
    In particular, we show that using our regularity conditions one can obtain faster bounds for applications which include sampling problems in Bayesian logistic regression with weakly convex priors, and the nonconvex optimization problem of learning linear classifiers with zero-one loss functions.

Our main technical contribution in this paper is our analysis of the Metropolis acceptance probability of MALA in terms of its ``energy-conservation error," and our bound for this error in terms of third- and fourth- order regularity conditions.
  Our combination of this higher-order analysis of the energy conservation error with the conductance method is key to obtaining bounds which have a sub-linear dependence on the dimension $d$ in the non-strongly logconcave setting.
\end{abstract}

\newpage
\tableofcontents
\newpage

\section{Introduction}
Sampling from a probability distribution is a fundamental algorithmic problem that arises in several areas including machine learning, statistics, optimization, theoretical computer science, and molecular dynamics.  %
In many situations, for instance when the dimension $d$ is large or the target distribution is nonconvex, sampling problems become computationally difficult, and MCMC algorithms are among the most popular methods used to solve them.

Formally, we consider the problem of sampling from a distribution $\pi(x) \propto e^{-U(x)}$, where one is given access to a function $U: \mathbb{R}^d \rightarrow \mathbb{R}$ and its gradient ${\nabla U}$:
\begin{problem}
Given access to a function $U: \mathbb{R}^d \rightarrow \mathbb{R}$ and its gradient ${\nabla U}$, an initial point $X_0$, and $\epsilon>0$, generate a sample with total variation error $\epsilon$ from the distribution $\pi(x) \propto e^{-U(x)}$.
\end{problem}

\noindent
We also consider the problem of optimizing a function $U$.  Any generic sampling method can also be used as an optimization technique: if one samples from the distribution  $\propto e^{- \mathcal{T}^{-1}U(x)}$ for a low enough temperature parameter $\mathcal{T}$ then the samples will concentrate near the global optima.  Specifically, we consider the problem of optimizing a function $U(x)$ on $\mathsf{S} \subseteq \mathbb{R}^d$, where one is given access to a function $U: \mathbb{R}^d \rightarrow \mathbb{R}$, its gradient ${\nabla U}$, and a membership oracle for $\mathsf{S}$:
\begin{problem}
Given access to a function $U: \mathbb{R}^d \rightarrow \mathbb{R}$ and its gradient ${\nabla U}$, a membership oracle for $\mathsf{S}$, an initial point $X_0$, and $\epsilon>0$, generate an approximate minimizer $\hat{x}^\star$ such that $F(\hat{x}^\star) - \inf_{x\in \mathsf{S}}F(x) \leq \epsilon$.
\end{problem}

  \noindent
  The Langevin Monte Carlo algorithms can be thought of as discretizations of the Langevin diffusion with invariant measure $\pi$.  
  The Langevin algorithms without Metropolis adjustment work by approximating a particular outcome of this diffusion. 
  For instance, each step of the unadjusted Langevin algorithm (ULA) Markov chain $\tilde{X}$ is given as  $\tilde{X}_{i+1} = \tilde{X}_i + \eta V_i- \frac{1}{2}\eta^2 \nabla U(\tilde{X}_i)$, where $V_i \sim N(0,I_d)$ is a Gaussian ``velocity" term, and $\eta>0$ is a step-size. 
  At each step, the unadjusted Langevin algorithm chain accumulates some error in its approximation of the Langevin diffusion.  To sample with some given accuracy $\epsilon$, the step size $\eta$ should be chosen small enough so that the total error accumulated  by the time the Langevin diffusion has reached a new roughly independent point is no more than $\epsilon$.

The Metropolis-adjusted Langevin algorithm (MALA) avoids the accumulation of error by introducing a Metropolis correction step. 
  The Metropolis correction step ensures that the MALA Markov chain has the correct stationary distribution.  For this reason, MALA does not need to approximate a particular outcome of the Langevin diffusion process in order to sample from the correct stationary distribution. Instead, $\eta$ only needs to be set small enough that each individual step of the MALA Markov chain has a high enough (in practice, $\Omega(1)$) acceptance probability. 
  In many situations, this lack of error accumulation is thought to allow MALA to take longer steps than ULA while still sampling from the correct stationary distribution (\cite{roberts1998optimal}).   

Another advantage of the Metropolis correction is that it allows the MALA Markov chain to converge exponentially quickly to the target distribution, meaning that MALA can sample with accuracy $\epsilon$ in a number of steps that depends logarithmically on $\epsilon^{-1}$.  ULA, on the other hand, requires a step size that is polynomial in $\epsilon^{-1}$ to approximate the Langevin diffusion with accuracy $\epsilon$.  
This logarithmic dependence on  $\epsilon^{-1}$ was shown in \cite{dwivedi2018log} to hold in the special case when the target distribution is strongly logconcave.

In the case of MALA, the proposal step is $\hat{X}_{i+1} = X_i + \eta V_i- \frac{1}{2}\eta^2 \nabla U(X_i)$, and the Metropolis correction step is  $\min(e^{\mathcal{H}(\hat{X}_{i+1},\hat{V}_{i+1})- \mathcal{H}(X_i,V_i)},1)$, where $\hat{V}_{i+1} = V_i +\frac{1}{2}\eta \nabla U(X_i)  - \frac{1}{2}\eta \nabla U(\hat{X}_{i+1})$.  The Hamiltonian functional $\mathcal{H}$ is defined as $\mathcal{H}(x,v) := U(x) + \mathcal{K}(v)$, where $U(x)$ is the ``potential energy" of a particle and $\mathcal{K}(v) = \frac{1}{2}\|v\|^2$ is its ``kinetic energy" (see for instance \cite{neal2011mcmc}).   The pair $(\hat{X}_{i+1}, \hat{V}_{i+1})$  approximates the position and velocity of a particle in classical mechanics with initial position $X_i$ and initial velocity $V_i$; this approximation is referred to as the ``leapfrog integrator" and is known to be a second-order method (that is, the error scales as $\eta^3$ in the limit as $\eta \downarrow 0$).  
  The acceptance probability for MALA therefore measures the extent to which our approximation of the particle's trajectory conserves the Hamiltonian. 

\paragraph{Our contributions.}
In this paper we obtain improved mixing time bounds for the Metropolis-adjusted Langevin algorithm.  In particular, to obtain faster bounds, we use the fact that the velocity term $V_i$ in the MALA algorithm points in a random direction.  Since the Hamiltonian changes much more quickly when the velocity term points in a worst-case direction than in a typical random direction, bounding the change in the Hamiltonian for this ``average-case" velocity in many cases allows us to use a relatively larger step size than would be possible using a worst-case analysis, while still having an O(1) acceptance probability.  This is in contrast to previous analyses of Langevin-based algorithms  (\cite{raginsky2017non,  zhang2017hitting, dwivedi2018log, cheng2017convergence}), whose bounds are obtained by assuming that $V_i$ travels in the worst-case direction at every step.
  
We bound the change in the Hamiltonian as a function of the third and fourth derivatives of $U$. Our bounds rely on the fact that in many applications the third derivative $\nabla^3 U(x)[V_i,V_i,V_i]$ and fourth derivative $\nabla^4 U(x)[V_i,V_i,V_i, V_i]$ are much larger if  $V_i$ points in the worst-case directions than if it points in a typical random direction.  We obtain bounds in terms of regularity constants $C_3$ and $C_4$, which, roughly speaking, bound these derivatives of $U$ as a function of $\|\mathsf{X}^\top V_i\|_\infty$.  The columns of the matrix $\mathsf{X}$ represent the ``bad" directions in which the potential function has larger higher-order derivatives.  For instance, in Bayesian logistic regression, these directions correspond to the independent variable data vectors.  Since $V_i \sim N(0,I_d)$, the velocity $V_i$ is unlikely to have a large component in any of these bad directions, meaning that $\|\mathsf{X}^\top V_i\|_\infty$ in many cases is much smaller than the Euclidean norm $\|V_i\|_2$.

The regularity condition for the third derivative is similar to the condition introduced in \cite{mangoubi2018dimensionally} to analyze the Hamiltonian Monte Carlo algorithm in the special case when the log-density $U$ is strongly convex.  However, in this paper, we prove bounds for the more general case when $U$ may be weakly convex or even non-convex.  To obtain these bounds in this more general case, we use the conductance method.  This allows us to bound the mixing time of MALA as a function of the Cheeger constant $\psi_\pi$  of the (possibly nonconvex) target log-density.  For many distributions, our bounds are faster than the current best bounds for the problem of sampling from these distributions.  For instance, when $\pi$ is weakly log-concave with identity covariance matrix, the log density has  $M$-Lipschitz gradient with $M=O(1)$, third-order smoothness \footnote{See Assumption \ref{assumption:derivatives} for a detailed definition of the smoothness constants $C_3$ and $C_4$.} $C_3 = O(\sqrt{d})$, and fourth-order smoothness $C_4 = O(d)$, we show that MALA can sample with TV accuracy $\epsilon$ in $d^{\frac{7}{6}} \log(\frac{\beta}{\epsilon})$ gradient evaluations given a $\beta$-warm start \footnote{We say $X_0$ is a $\beta$-warm start if it is sampled from a distribution $\mu_0$ where $\sup_{S\subseteq \mathbb{R}^d} \frac{\mu_0(S)}{\pi(S)} \leq \beta$.} (Section \ref{thm:main}), improving in this setting on the previous best bound of $d^{2.5} \log(\frac{1}{\epsilon})$ function evaluations which were obtained for the Random walk Metropolis (RWM) algorithm (\cite{lee2017eldan}).  As one concrete application, we show that MALA can sample in $d^{\frac{7}{6}} \log(\frac{\beta}{\epsilon})$ gradient evaluations for a class of Bayesian logistic regression problems with weakly convex priors, obtaining the fastest bounds for this class of problems (Section \ref{sec:applications}).  More generally, for these values of $M$, $C_3$, and $C_4$, we show that the number of gradient evaluations required to sample from possibly nonconvex targets is $d^{\frac{2}{3}}\psi_\pi^{-2} \log(\frac{\beta}{\epsilon})$.  For this setting our bounds for MALA are faster than the $\psi_{\pi}^{-10} d^{10} \log^5(\frac{1}{\epsilon})$ bounds of \cite{raginsky2017non} for the Stochastic gradient Langevin dynamics algorithm, as well as the best current bound of $d^{2}\psi_\pi^{-2}  \log(\frac{1}{\epsilon})$ for RWM in this setting, which we formally prove in Section \ref{sec:RWM}. 

We also prove related bounds when MALA is used as an optimization technique.  Our bounds for the optimization problem are given in terms of the restricted Cheeger constant, which was first introduced in \cite{zhang2017hitting}.  As one application, we obtain the fastest running time bounds for the zero-one loss minimization problem analyzed in both \cite{awasthi2015efficient} and \cite{zhang2017hitting} (Section \ref{sec:applications}).


\section{Previous results}

\paragraph{Previous results for sampling.}
In the setting where $U$ is (weakly) convex, \cite{lee2017eldan} show that one can sample with TV error $\epsilon$ in $O(d^{2.5} \log(\frac{\beta}{\epsilon}))$ function evaluations from a $\beta$-warm start if the target distribution $\pi$ is in isotropic position (that is, it has covariance matrix where the ratio of the largest to smallest eigenvalue is $O(1)$).  \cite{durmus2016sampling2} and \cite{dalalyan2017theoretical}  show that one can sample from a weakly log-concave distribution with $d^3 \epsilon^{-4}  \log(\frac{\beta}{\epsilon})$ gradient evaluations with the unadjusted Langevin algorithm (ULA) (see also \cite{cheng2017convergence} \footnote{ \cite{cheng2017convergence} show that ULA can sample in $\frac{d M^2 \hat{\beta}^4}{\epsilon^6}$ gradient evaluations, if given a ``Wasserstein warm start" $\mu_0$ such that $W_2(\mu_0, \pi) \leq \hat{\beta}$, and $U$ is $M$-smooth.  If the target density is in isotropic position, and given a $\beta$-warm start and exponential tails with $\mathsf{a} = \Omega(1)$, we have $\hat{\beta} = O(\sqrt{d} \log(\beta))$, meaning that the bound in \cite{cheng2017convergence} gives $O(d^3 \epsilon^{-6} \log^4(\beta))$ gradient evaluations for the usual warm start if $M=O(1)$.}).   \cite{dwivedi2018log} also analyze the MALA algorithm in the weakly log-concave setting, and obtain a bound of $O(d^3 \epsilon^{-1.5}) \log(\frac{\beta}{\epsilon})$,  if $M=O(1)$ and the fourth moments of $U$ are bounded by $\nu = O(d^2)$.

In the setting where $U$ is non-convex, \cite{raginsky2017non} show that the stochastic gradient Langevin dynamics algorithm can sample with Wasserstein error $\epsilon$ in 
$\tilde{O}([\lambda_{\pi}^{-1} \frac{M}{m} d ((b+d)M^2 + \sqrt{\sigma} M \sqrt{b+d}) \epsilon^{-4} \log(\frac{1}{\beta})]^5)$ stochastic gradient evaluations
 from a $\beta$-warm start, where $\lambda_{\pi}$ is the spectral gap of the Langevin diffusion on $U$, if $U$ is $(m,b)$-dissipative \footnote{$U$ is $(m,b)$-dissipative if $\nabla U(x)^\top x \geq m\|x\|_2^2 - b$} and the variance of the stochastic gradient is bounded by $\sigma^2 M^2 \|x\|_2^2$.  \cite{raginsky2017non} show that $\lambda_{\pi}^{-1}$ is bounded above by the Poincar\'e constant.  Since the Poincar\'e constant is bounded above by $\psi_{\pi}^{-2}$, this gives $\lambda_{\pi}^{-1} \leq \psi_\pi^{-2}$ (\cite{ledoux2000geometry}).  Therefore, in terms of the Cheeger constant, their bound gives $\tilde{O}([\psi_\pi^{-2} \frac{M}{m} d ((b+d)M^2 + \sqrt{\sigma} M \sqrt{b+d}) \epsilon^{-4} \log(\frac{1}{\beta})]^5)$.  See also \cite{bou2013nonasymptotic} for geometric ergodicity results for MALA, and \cite{eberle2014error} for an analysis of MALA on logdensities which are strongly convex outside a ball centered at the minimizer of the logdensity.

\begin {table}[t]
\begin{tabular}{ | l | c | c | c |}
  \hline			
   & \# of (stochastic) gradient  \\
   & or function calls   \\
  \hline
      Hit-and-run, \cite{lovasz2003hit, lovasz2006fast} & $d^{3} \log(\frac{\beta}{\epsilon})$  \\
  Ball walk or RWM, \cite{lee2017eldan}  & $d^{2.5} \log(\frac{\beta}{\epsilon})$   \\
  ULA, \cite{durmus2016sampling2}, \cite{dalalyan2017theoretical} & $d^3 \epsilon^{-4}  \log(\frac{\beta}{\epsilon})$ \\
  MALA, \cite{dwivedi2018log} &   $d^3 \epsilon^{-1.5} \log(\frac{\beta}{\epsilon})$ \\
  MALA, this paper &   $\max \left(C_3^{\frac{2}{3}}d^{\frac{5}{6}} , d^{\frac{7}{6}}, C_4^{\frac{1}{2}} d^{\frac{1}{2}}\right) \log(\frac{\beta}{\epsilon})$  \\

  \hline  
\end{tabular}
\caption{Number of gradient or function evaluations to sample from a weakly log-concave distribution with TV error $\epsilon$, with $\beta$-warm start, if target density has identity covariance matrix. For simplicity, we assume that $\pi$ has exponential tails with decay rate $\Omega(\frac{1}{\sqrt{d}})$, and that $M, \nu= O(1)$.}\label{table:WeakConvexity}
\end{table}

\begin {table}[t]
\begin{tabular}{ | l | c | c |  c |}
  \hline			
   & \# of (stochastic) gradient & \# Markov chain   & mode of \\
   & or function calls & steps  & convergence  \\
  \hline
    ULA \cite{raginsky2017non}  & $\psi_{\pi}^{-10} m^{-5} d^{10} \log^5(\frac{\beta}{\epsilon})$  & same & Wasserstein  \\
    SGLD \cite{raginsky2017non}  & $\psi_{\pi}^{-10}  m^{-5} d^{10} \log^5(\frac{1}{\beta}) \times  (1 + \sqrt{\frac{\sigma}{d}})$  & same & Wasserstein \\
    RWM   [this paper]  & $d^2 \psi_{\pi}^{-2} \log(\frac{\beta}{\epsilon})$ & same & TV \\
    MALA [this paper]  & $\min \left(C_3^{\frac{2}{3}}d^{\frac{1}{3}} , d^{\frac{2}{3}}, C_4^{\frac{1}{2}}\right) \psi_\pi^{-2} \log(\frac{\beta}{\epsilon})$  & same & TV\\
      \hline			
      RHMC Markov chain & Not an algorithm in this setting &  $d^{\frac{1}{2}}\tilde{\psi}_\pi^{-2} R \log(\frac{\beta}{\epsilon})$ & TV  \\
      \cite{lee2017convergence} & &   &   \\
  \hline  
\end{tabular}
\caption{Number of gradient (or stochastic gradient) evaluations to sample with TV error $\epsilon$, from a possibly nonconvex target distribution with Cheeger isoperimetric constant $\psi_\pi$, given a $\beta$-warm start.  R is a regularity parameter for $U$ with respect to the Riemmannian metric used by RHMC, and $\tilde{\psi}_\pi$ is an isoperimetric constant for the target $\pi$ with respect to this Riemmannian metric; note that $\tilde{\psi}_\pi$ is equal to $\psi_\pi$ when RHMC uses the Euclidean metric.  For simplicity, we assume in this table that $M=O(1)$ and that $\pi$ has exponential tails with decay rate $\Omega(\frac{1}{\sqrt{d}})$ (that is, $\mathsf{a}=\Omega(1)$ in Assumption \ref{assumption:tails}.  For ULA and SGLD, we assume that $\pi$ is $(m,b)$-dissipative with $b = O(d)$.).}\label{table:WeakConvexity}
\end{table}

\paragraph{Previous results for nonconvex optimization.}

One can also consider the problem of optimizing a function $F: \mathbb{R}^d \rightarrow \mathbb{R}$ on some subset $\mathsf{S} \subseteq \mathbb{R}^d$. \cite{raginsky2017non} show that they can obtain an $\tilde{O}(\frac{(\epsilon+\sqrt{\sigma}) d^2}{\psi_{\pi}^2} +d)$-approximate minimizer in $\tilde{O}(\frac{d}{\psi_{\pi}^2 \frac{M}{m}\epsilon^4})$ stochastic gradient evaluations.

 \cite{zhang2017hitting} show that, under certain assumptions on the constraint set $\mathsf{S}$, given a $\beta$-warm start, the stochastic gradient Langevin dynamics algorithm can be used to obtain an approximate minimizer $\hat{x}^\star$ such that $F(\hat{x}^\star)-\min_{x\in \mathsf{S}}F(x) \leq \epsilon$ with probability at least $1-\delta$ in $d^4 \hat{\psi}^{-4} (G^4 +M^2) \log(\frac{\beta}{\delta})$ stochastic gradient evaluations. The quantity $\hat{\psi} \equiv \hat{\psi}_{e^{-F}}(\mathsf{S} \backslash \mathcal{U})$, is the ``restricted" version of the Cheeger constant for the log-density $F$, restricted to the set $\mathsf{S} \backslash \mathcal{U}$, where  $\mathcal{U}$ is a set consisting of only $\epsilon$-approximate minimizers of $F$, and $G^2$ is a bound on the variance of the stochastic gradient.

\section{Algorithms}

\subsection{Sampling algorithm}
We now state the usual version of the MALA algorithm which is used for sampling:
\begin{algorithm}[h]
\caption{MALA for sampling \label{alg:sampling}}
\textbf{input:} First-order oracle for gradient $\nabla U$, step size $\eta>0$\\
 \textbf{input:}   Initial point $X_0 \in \mathbb{R}^d$.\\
 \textbf{output:} Markov chain $X_0, X_1, \ldots, X_{i_{\max}}$ with stationary distribution $\pi \propto e^{-U}$.
\begin{algorithmic}[1]

\For{$i=0$ to $i_{\mathrm{max}}-1$}
Sample $V_i \sim N(0,I_d)$.\\
Set $\hat{X}_{i+1} = X_i + \eta V_i - \frac{1}{2}\eta^2 \nabla U(X_i)$\\
Set $\hat{V}_{i+1} = V_i - \eta \nabla U(X_i)  - \frac{1}{2}\eta^2 \frac{\nabla U(\hat{X}_{i+1})  - \nabla U(X_i)}{\eta}$.\\
 Set \be
X_{i+1} &= \begin{cases}\hat{X}_{i+1} \qquad \textrm{ with probability } \min(1, \, e^{\mathcal{H}(\hat{X}_i,\hat{V}_i)-\mathcal{H}(X_i, V_i)})
&\\ X_i \qquad \textrm{ otherwise}  \end{cases}
\ee
\EndFor
\end{algorithmic}
\end{algorithm}

\noindent
Every time a proposal $\hat{X}_{i+1}$  is made, the MALA algorithm accepts the proposal with probability $\min(1, \, e^{\mathcal{H}(\hat{X}_{i+1},\hat{V}_{i+1})- \mathcal{H}(X_i,V_i)})$.  One way to view this acceptance rule is that it is simply the Metropolis-Hastings rule for this proposal, which causes the transition kernel $K$ of the Markov chain to satisfy the detailed balance equations $K(x,y) \pi(x) = K(y,x) \pi(y)$, ensuring that MALA has stationary distribution $\pi$.  

One can also interpret the Metropolis acceptance rule in a different way, inspired by classical mechanics, which is the approach we use to obtain our bounds in this paper.  In this view  $\mathcal{H}(x,v) := U(x) + \mathcal{K}(v)$ gives the energy of a particle with position $x$ and velocity $v$, where $U(x)$ is the ``potential energy" of the particle and $\mathcal{K}(v) = \frac{1}{2}\|v\|^2$ is its ``kinetic energy". The values of $\hat{X}_{i+1}$ $\hat{V}_{i+1}$ can be viewed as a second-order numerical approximation to the position and velocity of a particle in classical mechanics, with initial position and velocity $X_i,V_i$.  The continuous dynamics, determined by Hamilton's equations, conserve the Hamiltonian.  If $(\hat{X}_{i+1}, \hat{V}_{i+1})$ approximate the outcome of the continuous dynamics with low error, the acceptance probability will be $\Omega(1)$.  The goal is to choose $\eta$ as large as possible while still having an $\Omega(1)$ acceptance probability.

\subsection{Constrained optimization algorithm}
One can also use MALA for constrained optimization.  For instance, we apply MALA to constrained optimization in Algorithm \ref{alg:optimization}.
\begin{algorithm}[h]
\caption{MALA for constrained optimization \label{alg:optimization}}
\flushleft
\textbf{input:} zeroth-order oracle for $U:\mathbb{R}^d \rightarrow \mathbb{R}$, first-order oracle for gradient $\nabla U$, membership oracle for a constraint set $\mathsf{S} \subseteq \mathbb{R}^d$, step size $\eta>0$\\
 \textbf{input:}   Initial point $X_0 \in \mathbb{R}^d$.\\
 \textbf{output:} An approximate global minimizer $\hat{x}^\star \in \mathsf{S}$
 
\begin{algorithmic}[1]
\For{$i=0$ to $i_{\mathrm{max}}-1$}
Sample $V_i \sim N(0,I_d)$.\\
Set $\hat{X}_{i+1} = X_i + \eta V_i - \frac{1}{2}\eta^2 \nabla U(X_i)$\\
Set $\hat{V}_{i+1} = V_i - \eta \nabla U(x)  - \frac{1}{2}\eta^2 \frac{\nabla U(z)  - \nabla U(x)}{\eta}$.
\be
\textrm{Set }& \qquad \qquad Z_{i+1} = \begin{cases}\hat{X}_{i+1} \qquad \textrm{ with probability } \min(1, \, e^{\mathcal{H}(\hat{X}_i,\hat{V}_i)-\mathcal{H}(X_i, V_i)})
&\\ X_i \qquad \textrm{ otherwise}  \end{cases} \qquad \qquad \quad\\
\textrm{Set }&\qquad \qquad X_{i+1} = \begin{cases}Z_{i+1} \qquad \textrm{ if } Z_{i+1} \in \mathsf{S}
\\ X_i \qquad \textrm{ otherwise}  \end{cases}
\ee
\EndFor\\
Set $\hat{x}^\star = X_{i^\star}$, where $i^\star = \mathrm{argmin}_{i \in \{0,\ldots, i_{\mathrm{max}}\}} U(X_i)$
\end{algorithmic}
\end{algorithm}

\section{Assumptions and notation}
\subsection{Smoothness and tail bound assumptions}
In our main result we show that, under certain regularity conditions, MALA can sample from $O(d^{\frac{2}{3}}\psi_\pi^{-2} \log(\frac{\beta}{\epsilon}))$ gradient evaluations.  In this section we explain why these regularity conditions are needed to obtain bounds for MALA with dimension dependence smaller than $d^1$.

   We start by noting that if one attempts to bound the number of gradient evaluations required by MALA using a conventional Euclidean operator norm bound on the higher derivatives of $U$,  then the bounds that one obtains in terms of the Cheeger constant are no faster than $d \psi_\pi^{-2}$ gradient evaluations.
 Recall that $\hat{X}_{i+1}$ $\hat{V}_{i+1}$ can be viewed as a second-order numerical approximation to the $\hat{x}$ position and velocity $\hat{v}$ of a particle in classical mechanics after time $\eta$, which has initial position and velocity $X_i,V_i$.  Bounding the numerical error $\hat{X}_{i+1} - \hat{x}$ and $\hat{V}_{i+1} - \hat{v}$ gives us a bound on the Hamiltonian.  In particular, for the kinetic energy error we have:
 \be
 |\mathcal{K}(\hat{v}) - \mathcal{K}(\hat{V}_{i+1})| & \approx  |(\hat{V}_{i+1}- \hat{v})^\top \nabla \mathcal{K}(\hat{v})| = |(\hat{V}_{i+1}- \hat{v})^\top \hat{v}|\\
&\approx  |\int_0^\eta \int_0^r  V_i^\top [\nabla^2U(X_i)- \nabla^2U(X_i+V_i\tau)] V_i\mathrm{d}\tau \mathrm{d}r|\\
&\approx \left|\eta^3 \nabla^3U(X_i)[V_i,V_i, V_i] + \eta^4 \nabla^4U(X_i)[V_i,V_i,V_i,V_i] \right|.
\ee
If we assume the usual ``operator norm" Euclidean bound on $\nabla^3U$ and $\nabla^4U$, we have\\ $\eta^3 \nabla^3U(X_i)[V_i,V_i, V_i] \leq L_3 \eta^3 \|V_i\|_2^3$ and $\eta^4 \nabla^4U(X_i)[V_i,V_i,V_i,V_i] \leq \eta^4 L_4 \|V_i\|_2^4$ for some $L_3,L_4>0$.  Since $V_i \sim N(0,I_d)$, we have  $\|V_i\|_2 = \tilde{O}(\sqrt{d})$ with high probability.  Hence, to obtain an $O(1)$ bound on the kinetic energy error, we require $\eta = d^{-\frac{1}{2}}$ if $L_3, L_4 = \Theta(1)$.  Since the distance traveled by the MALA Markov chain after $i$ steps is roughly proportional to $\eta \sqrt{d} \sqrt{i}$, the number of steps to explore a distribution with most of the probability measure in a ball of diameter $\sqrt{d}$ is roughly  $i=d$ for this choice of $\eta$ if $\psi_\pi^{-1} = 1$ (for instance, this is the case when $\pi$ is a standard Gaussian, and $\psi_\pi^{-1}=1$ by the Gaussian isoperimetric inequality).

To obtain an $O(1)$ energy error for a larger step size $\eta$, we need to control $\nabla^3U(X_i)[V_i,V_i, V_i] $ and  $\nabla^4U(X_i)[V_i,V_i, V_i, V_i]$  with respect to a norm which does not grow as quickly with the dimension as the Euclidean norm for a random $N(0,I_d)$ velocity vector $V_i$.  One way to do so would be to replace these bounds with an infinity-norm condition 
$
\nabla^3U(X_i)[V_i,V_i,V_i] \leq C_3 \|V_i\|_\infty^3
$ 
and 
$
\nabla^4U(X_i)[V_i,V_i,V_i,V_i] \leq C_4 \|V_i\|_\infty^4.
$ 
  For this norm, $\|V_i\|_\infty =O(\log(d))$ with high probability since $V_i \sim N(0,I_d)$, implying that $\eta^3 \nabla^3U(X_i)[V_i,V_i, V_i] \leq C_3 \eta^3 \log^3(d)$ rather than $\eta^3 \nabla^3U(X_i)[V_i,V_i, V_i] \leq L_3 \eta^3 d^{\frac{3}{2}}$, and $\eta^4 \nabla^4U(X_i)[V_i,V_i, V_i, V_i] \leq C_4 \eta^4 \log^4(d)$ rather than\\ $\eta^4 \nabla^4U(X_i)[V_i,V_i, V_i, V_i] \leq L_4 \eta^4 d^{2}$.  Since for many distributions of interest this condition does not hold for small values of $C_3$ and $C_4$, we use a more generalize condition, to obtain smaller $C_3$ and $C_4$ constant for a wider class of distributions.  Specifically, we replace the norm  $\|V_i\|_\infty$  with a more general norm  $\| \mathsf{X}^\top V_i\|_\infty$ for some matrix $\mathsf{X}$.  Roughly speaking, this regularity condition allows the third and fourth derivatives to be large in $r>0$ ``bad" directions $\mathsf{X}_1, \ldots, \mathsf{X}_r$, as long as they are small in a typical random direction.  More specifically, we assume that
  \begin{assumption}[$C_3, C_4>0, \mathsf{X} = {[}\mathsf{X}_1,\ldots, \mathsf{X}_r{]}$ where  $\|\mathsf{X}_i\|_2 = 1$ for all $i \in {[r]}$]\label{assumption:derivatives}
For all $x, u,v,w \in \mathbb{R}^d$, we have
\be
| \nabla^3U(x)[u,v, w] | &\leq C_3 \|\mathsf{X}^\top u\|_{\infty} \|\mathsf{X}^\top v\|_{\infty}  \|w\|_{2},\\
| \nabla^4U(x)[u,u, u, u] | &\leq C_4 \|\mathsf{X}^\top u\|_{\infty}^4.
\ee
\end{assumption}
We expect this assumption to hold with relatively small values of $C_3$ and $C_4$ when the target function $U$ is of the form $U(x) = \sum_{i=1}^r f_i(u_i^\top x)$ for functions $f_i:\mathbb{R}\rightarrow \mathbb{R}$ with uniformly bounded third and fourth derivatives.  In particular, this class includes the target functions used in logistic regression as well as smoothed versions of the nonconvex target functions used when learning linear classifiers with zero-one loss.  Finally, we note that our assumption on $\nabla^3U$ includes both infinity norms and a Euclidean norm, since our rough approximation of the error in this section ignores higher-order terms which are best bounded with a slightly different assumption that incorporates both norms.
\begin{remark}
Assumption \ref{assumption:derivatives} has two infinity-norms on the right hand side, and one Euclidean norm.  One could instead make a strictly stronger assumption which instead has three infinity norms.  It is an interesting open question whether this stronger assumption would lead to an even stronger bound on the number of gradient evaluations in special cases.
\end{remark}

We also make the assumption that the target distribution $\pi$ has exponential tails (here $x^\star$ is a global minimizer of $U$ on $\mathbb{R}^d$):
\begin{assumption}[exponential tail bounds ($\mathsf{a}>0$)] \label{assumption:tails}
Suppose that $X\sim \pi$.  Then $\mathbb{P}(\|X -x^\star\|_2> s) \leq e^{-\frac{\mathsf{a}}{\sqrt{d}} s}$.
\end{assumption}
We also assume that $U$ has Lipschitz gradient
\begin{assumption}[Lipschitz gradient ($M \geq 0$)] \label{assumption:gradient}
For all $x \in \mathbb{R}^d$ we have $\|\nabla U(x)\|_2 \leq M$.
\end{assumption}

For the problem of constrained optimization on a subset $\mathsf{S} \subseteq \mathbb{R}^d$, we make the following regularity assumption on $\mathsf{S}$:
\begin{assumption}[Constraint set exit probability] \label{assumption:constraint}
For any $z\in \mathsf{S}$, let $\gamma_z := z+ \eta v - \frac{1}{2} \eta^2 \nabla U(x)$ where $v \sim N(0,I_d)$.  We assume that $\mathbb{P}(\gamma_z \in \mathsf{S}) \geq \frac{1}{10} \qquad \forall z \in \mathsf{S}$.
\end{assumption}
\begin{remark}
We note that Assumption \ref{assumption:tails} always holds for some value of $\mathsf{a}>0$ if the target distribution is logconcave.  Specifically, a logconcave probability distribution must integrate to 1, and have convex sublevel sets, implying that these level sets must be compact.  Let $h_0$ be the height of the maximizer of the target density.  For the log-density to integrate to 1, one must have a compact sublevel set with height $h$ strictly less than $h_0$, bounded by a ball of some radius $r$.  By convexity of the log-density, the decay rate is at least $\mathsf{a} \geq (h_0 - h)/r$.
\end{remark}

\subsection{Cheeger constants}\label{sec:Cheeger}
\vspace{-2mm}
For any set $A\subset \mathbb{R}^d$, define $A_\epsilon := \{x\in \mathbb{R}^d : \inf_{y \in A} \|x - y\|_2 \leq \epsilon\}$.
We define the Cheeger constant $\psi_{\pi}$ of a distribution $\pi$ with support $\mathsf{S}\subseteq \mathbb{R}^d$ as follows: $\psi_{\pi} := \liminf_{\epsilon \downarrow 0} \inf_{S  \subseteq \mathsf{S} \, : \, 0 < \pi(S) < \frac{1}{2}}   \frac{ \pi(S_\epsilon) - \pi(S)}{ \epsilon \pi(S) }$.
For any Markov chain with transition kernel $K$ and stationary distribution $\pi$, we define the conductance $\Psi_{K}$ of the Markov chain to be: $\Psi_{K} := \inf_{S  \subseteq \mathsf{S} \, : \, 0 < \pi(S) < \frac{1}{2}}   \frac{ K(S,\mathsf{S} \backslash S) }{\pi(S) }$.

Next, for any $V\subseteq \mathbb{R}^d$ we define the ``restricted Cheeger constant," originally introduced in \cite{zhang2017hitting}, as $\hat{\psi}_\pi(V) := \liminf_{\epsilon \downarrow 0} \inf_{S \subseteq V  \, : \, \pi(S)>0}  \frac{ \pi(S_\epsilon) - \pi(S)}{\epsilon \pi(S) }$,
and the restricted conductance $\hat{\Psi}_{K} := \inf_{S  \subseteq V \, : \,  \pi(S)>0}   \frac{ K(S,\mathsf{S} \backslash S) }{\pi(S) }$.

\subsection{Other Notation}
\vspace{-3mm}
We say $X_0$ is a $\beta$-warm start if it is sampled from a distribution $\mu_0$ where $\sup_{S\subseteq \mathbb{R}^d} \frac{\mu_0(S)}{\pi(S)} \leq \beta$.  For any probability distribution $\mu: \mathbb{R}^d \rightarrow \mathbb{R}$ denote $\Sigma_{\mu}$ the covariance matrix of the distribution $\mu$.  We denote the $d\times d$ identity matrix by $I_d$.  For any subset $\mathcal{U} \subseteq \mathbb{R}^d$ and $\Delta>0$, we define the $\Delta$-thickening of $\mathcal{U}$ by $\mathcal{U}_\Delta := \{x\in \mathbb{R}^d : \inf_{y\in\mathcal{U}} \|y-x\|_2 \leq \Delta\}$. We denote the total variation norm of a measure $\mu$ by $\|\mu\|_{\mathrm{TV}} := \sup_{S\subseteq \mathbb{R}^d} \mu(S)$.  For any random variable $Z$, let $\mathcal{L}(Z)$ denote the distribution of this random variable.

\section{Main results}
\vspace{-2mm}
\subsection{Main Theorems for sampling and optimization}
First, we state our main theorem for the sampling problem:
\begin{theorem} [sampling]\label{thm:main}
Suppose that $U$ satisfies Assumptions \ref{assumption:derivatives} and \ref{assumption:tails}, and has $M$-Lipschitz gradient on $\mathbb{R}^d$. %
 Then given a $\beta$-warm start, for any step-size parameter\\ $\eta \leq \tilde{O}\left(\min \left(C_3^{-\frac{1}{3}}d^{-\frac{1}{6}} , d^{-\frac{1}{3}}, C_4^{-\frac{1}{4}}\right) \min(1, M^{-\frac{1}{2}}) [\log \log(\frac{1}{\mathsf{a}})]^{-1}\right)$ there exists $\mathcal{I} = O( ((\eta^{-1} + \eta M)\psi_\pi)^{-2} \log(\frac{\beta}{\epsilon}))$ for which $X_i$ of Algorithm \ref{alg:sampling} satisfies $\| \mathcal{L}(X_i) - \pi\|_{\mathrm{TV}} \leq \epsilon$ for all $i\geq \mathcal{I}$.
\end{theorem}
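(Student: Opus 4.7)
The plan is to use the conductance method. The scheme is to bound the conductance $\Psi_K$ of the MALA chain from below and then apply the standard mixing bound $\mathcal{I} = O(\Psi_K^{-2}\log(\beta/\epsilon))$ from a $\beta$-warm start. To reach the stated complexity, the target is $\Psi_K \gtrsim \eta\,\psi_\pi/(1+\eta M)$, so that $\mathcal{I} = O((\eta^{-1}+\eta M)^2\psi_\pi^{-2}\log(\beta/\epsilon))$ as claimed.

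The conductance bound itself comes from combining the Cheeger isoperimetric inequality $\pi(S_\Delta)-\pi(S)\geq \Delta\psi_\pi\pi(S)$ with a one-step overlap argument: for any $x,y$ with $\|x-y\|_2 \leq \Delta$, I will show $\|K(x,\cdot)-K(y,\cdot)\|_{\mathrm{TV}} \leq 1-c$ for a constant $c>0$, provided $\Delta \lesssim \eta/(1+\eta M)$. This overlap splits into (a) closeness of the Gaussian proposals $q(x,\cdot) = N(x - \tfrac12\eta^2\nabla U(x), \eta^2 I_d)$: their means differ by at most $\|x-y\|_2(1 + O(\eta^2 M))$ by the $M$-Lipschitz gradient assumption, and matching this against the proposal standard deviation $\eta$ explains the $(1+\eta M)$ factor; and (b) a uniform lower bound $\Omega(1)$ on the Metropolis acceptance probability.

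The heart of (b) is the control of the Hamiltonian error $\Delta\mathcal{H} := \mathcal{H}(\hat X_{i+1},\hat V_{i+1}) - \mathcal{H}(X_i,V_i)$. Taylor-expanding both $U(\hat X_{i+1})-U(X_i)$ and $\tfrac12(\|\hat V_{i+1}\|_2^2 - \|V_i\|_2^2)$ around $(X_i,V_i)$, the first- and second-order terms cancel (reflecting the second-order accuracy of the leapfrog integrator) and one is left with an expression dominated by $\eta^3\nabla^3 U(X_i)[V_i,V_i,V_i]$ and $\eta^4\nabla^4 U(\xi)[V_i,V_i,V_i,V_i]$, together with cross terms built from $\nabla^2 U$, $\nabla U$ and $V_i$. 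By Assumption~\ref{assumption:derivatives}, the cubic term is bounded by $C_3\eta^3\|\mathsf X^\top V_i\|_\infty^2\|V_i\|_2$ and the quartic by $C_4\eta^4\|\mathsf X^\top V_i\|_\infty^4$. Since $V_i\sim N(0,I_d)$ and the columns of $\mathsf X$ are unit vectors, standard Gaussian tail bounds give $\|\mathsf X^\top V_i\|_\infty = \widetilde O(1)$ and $\|V_i\|_2 = \widetilde O(\sqrt d)$ with high probability, so $|\Delta\mathcal{H}| = \widetilde O(C_3\eta^3\sqrt d) + \widetilde O(C_4\eta^4) + \widetilde O(\eta^3 d)$, where the final piece bounds the residual cross terms that are not controlled by Assumption~\ref{assumption:derivatives} and instead use only the Lipschitz-gradient bound together with generic Euclidean estimates. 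Requiring $|\Delta\mathcal{H}| = O(1)$ yields precisely the three step-size constraints $\eta \lesssim C_3^{-1/3}d^{-1/6}$, $\eta \lesssim C_4^{-1/4}$, and $\eta \lesssim d^{-1/3}$ stated in the theorem, after which $\min(1,e^{-\Delta\mathcal{H}})\geq e^{-O(1)} = \Omega(1)$.

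The principal obstacle will be the careful bookkeeping in (i) the Taylor remainders in $\Delta\mathcal{H}$, in particular respecting the asymmetric structure of Assumption~\ref{assumption:derivatives} (two infinity norms and one Euclidean norm on the cubic term) so that each term is actually dominated by one of the three constraints above, and (ii) the restriction to a high-probability good set, both for $V_i$ (handled by a union bound, which produces the $\widetilde O$ logarithmic factors) and for the location $X_i$, for which Assumption~\ref{assumption:tails} confines the stationary mass to a ball of radius $\widetilde O(\sqrt d/\mathsf a)$ carrying $1-\epsilon$ of the probability; this confinement is absorbed into the $[\log\log(1/\mathsf a)]^{-1}$ factor in the allowed step size, and also lets us apply the isoperimetric inequality uniformly over the relevant region when converting the one-step overlap plus Cheeger inequality into a conductance bound.
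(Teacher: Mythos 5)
Your proposal follows essentially the same route as the paper: conductance via the overlap criterion of Lemma \ref{lemma:conductance}, closeness of the Gaussian proposal kernels from the $M$-Lipschitz gradient (Lemma \ref{lemma:conductance1}), an $\Omega(1)$ acceptance probability obtained by showing the energy error is $O(1)$ under the directional third- and fourth-order bounds of Assumption \ref{assumption:derivatives}, and a high-probability good set for $(X_i,V_i)$; your three step-size constraints and the residual $\widetilde O(\eta^3 d)$ cross term match the paper's accounting. The one substantive organizational difference is in the central computation: you propose a direct Taylor expansion of $\mathcal{H}(\hat X_{i+1},\hat V_{i+1})-\mathcal{H}(X_i,V_i)$ and rely on explicit cancellation of the first- and second-order terms, whereas the paper instead compares the leapfrog output to the exact Hamiltonian flow $(q_\eta,p_\eta)$, which conserves $\mathcal{H}$ exactly, and then bounds the potential and kinetic energy errors separately in the direction of $\nabla\mathcal{H}$ (Lemmas \ref{lemma:potential} and \ref{lemma:kinetic}); the exact-flow comparison makes the low-order cancellations automatic and keeps the error terms in the integral form needed to apply the asymmetric norm in Assumption \ref{assumption:derivatives}, but both organizations produce the same leading terms. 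One technicality you gloss over: the acceptance-probability bound only holds on the good set, so the conductance lemma cannot be applied directly to the MALA kernel; the paper resolves this by introducing a toy chain $X^\dagger$ that agrees with $X$ until the first exit from $G$ (controlled via Lemmas \ref{lemma:good_set} and \ref{lemma:WarmPreserved}) and then resamples from $\pi$, proving mixing for $X^\dagger$ and transferring it to $X$ up to an additive $\epsilon$. Your ``union bound'' remark points in the right direction but you would need some such coupling device to make the argument rigorous.
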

\noindent Theorem \ref{thm:main} states that, from a $\beta$-warm start, the MALA Markov chain generates a sample from $\pi$ with TV error $\epsilon$ in $O( ((\eta^{-1} + \eta M)\psi_\pi)^{-2} \log(\frac{\beta}{\epsilon}))$ gradient evaluations if $U=-\log(\pi)$ satisfies Assumptions \ref{assumption:derivatives} and \ref{assumption:tails} and has $M$-Lipschitz gradient (Assumption \ref{assumption:gradient}).  Recall from Section \ref{sec:Cheeger} that $\psi_\pi$ is the Cheeger constant of $\pi$.  In particular, when $\pi$ is weakly log-concave with identity covariance matrix, we have that $\psi_{\pi} = \Omega(d^{-\nicefrac{1}{4}})$ by Theorem 7 in \cite{lee2017eldan}.  If we also have that the log-density has  $M$-Lipschitz gradient with $M=O(1)$, third-order smoothness $C_3 = O(\sqrt{d})$, and fourth-order smoothness $C_4 = O(d)$, then MALA can sample with TV accuracy $\epsilon$ in $d^{\frac{7}{6}} \log(\frac{\beta}{\epsilon})$ gradient evaluations given a $\beta$-warm start.

Next, we state our main theorem for the problem of optimizing a function on a subset $\mathsf{S}\subset \mathbb{R}^d$: 

\begin{theorem} [optimization]\label{thm:main_optimization}
Suppose that $U:\mathbb{R}^d \rightarrow \mathbb{R}$ satisfies Assumptions \ref{assumption:derivatives} and \ref{assumption:tails}, and has $M$-Lipschitz gradient on $\mathbb{R}^d$, and that $\mathsf{S} \subseteq \mathbb{R}^d$ satisfies Assumption \ref{assumption:constraint}. %
Choose a step-size $\eta \leq \tilde{O}\left(\min \left(C_3^{-\frac{1}{3}}d^{-\frac{1}{6}} , d^{-\frac{1}{3}}, C_4^{-\frac{1}{4}}\right) \min(1, M^{-\frac{1}{2}})  [\log \log(\frac{1}{\mathsf{a}})]^{-1}\right)$ in Algorithm \ref{alg:optimization}.   
   Let $\pi(x) \propto e^{-U(x)} \mathbbm{1}_{\mathsf{S}}$ and %
     let $\mathcal{U}\subseteq \mathsf{S}$. Then given an initial point which is $\beta$-warm with respect to $\pi$, for any $\delta>0$ we have %
$
\inf\{i :X_i \in \mathcal{U}_\Delta\} \leq \mathcal{I}
$
 with probability at least $1- \delta$, where $\mathcal{I} = \frac{4 \log(\frac{\beta}{\delta})}{\Delta^2 \hat{\psi}_{\pi}^2(\mathsf{S} \backslash \mathcal{U})}$ and $\Delta = \frac{1}{100}(\frac{1}{2}\eta^{-1} + \frac{1}{4} \eta M)^{-1}$.
\end{theorem}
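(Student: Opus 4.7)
The plan is to reduce Theorem \ref{thm:main_optimization} to a restricted-conductance estimate for the MALA kernel $K$ of Algorithm \ref{alg:optimization}. Set $V := \mathsf{S}\setminus\mathcal{U}$ and $\tau := \inf\{i : X_i \in \mathcal{U}_\Delta\}$. I would derive $\mathbb{P}(\tau > \mathcal{I}) \leq \delta$ in three steps: (i) a ``one-step closeness'' bound $\|K(x,\cdot)-K(y,\cdot)\|_{\mathrm{TV}} \leq 1 - c_0$ for pairs $x,y\in\mathsf{S}$ at Euclidean distance at most $\Delta$; (ii) a Lov\'asz--Simonovits lower bound $\hat{\Psi}_K(V) \gtrsim \Delta\,\hat{\psi}_\pi(V)$ deduced from (i); (iii) a conductance-to-hitting-time conversion via Cheeger's inequality for killed chains.

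For step (i), the MALA proposals from $x$ and $y$ are Gaussians with common covariance $\eta^2 I_d$ and means $x-\tfrac{1}{2}\eta^2\nabla U(x)$ and $y-\tfrac{1}{2}\eta^2\nabla U(y)$, so their total-variation distance is at most a constant multiple of $(\|x-y\|_2 + \tfrac{1}{2}\eta^2\|\nabla U(x)-\nabla U(y)\|_2)/\eta$; under Assumption \ref{assumption:gradient} this is bounded by a constant multiple of $\Delta\,(\tfrac{1}{2}\eta^{-1}+\tfrac{1}{4}\eta M)$, which by the definition $\Delta = \tfrac{1}{100}(\tfrac{1}{2}\eta^{-1}+\tfrac{1}{4}\eta M)^{-1}$ is at most $\tfrac{1}{4}$. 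Next, the Metropolis acceptance probability at both $x$ and $y$ is $\Omega(1)$: this is the technical heart of the paper, relying on the same energy-conservation analysis that drives Theorem \ref{thm:main}, which uses the third- and fourth-order regularity of Assumption \ref{assumption:derivatives} together with the step-size restriction in the statement. Finally, each proposal lands in $\mathsf{S}$ with probability at least $\tfrac{1}{10}$ by Assumption \ref{assumption:constraint}. Combining these three ingredients yields step (i) with an absolute constant $c_0 > 0$.

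Step (ii) then follows by the standard Lov\'asz--Simonovits argument: for any measurable $S\subseteq V$ with $0<\pi(S)$, partition $S$ according to whether $K(x,\mathsf{S}\setminus S) < c_0/4$ or not, and partition $\mathsf{S}\setminus S$ analogously; step (i) implies that the two ``interior'' (small-crossing-probability) pieces are at Euclidean distance more than $\Delta$, and applying the definition of the restricted Cheeger constant to the smaller of them forces the combined $\pi$-mass of the two ``boundary'' pieces to be at least $\Delta\,\hat{\psi}_\pi(V)\,\pi(S)$. Reversibility of $K$ from the Metropolis--Hastings construction then gives $K(S, \mathsf{S}\setminus S) \gtrsim \Delta\,\hat{\psi}_\pi(V)\,\pi(S)$, and hence $\hat{\Psi}_K(V) \geq c_1\,\Delta\,\hat{\psi}_\pi(V)$ for an absolute constant $c_1$. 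For step (iii), Cheeger's inequality applied to the chain killed upon leaving $V$ gives principal eigenvalue at most $1 - \tfrac{1}{2}\hat{\Psi}_K(V)^2$, so from a $\beta$-warm start,
\[
\mathbb{P}(\tau_V > n) \;\leq\; \beta\,\exp\!\bigl(-\tfrac{n}{2}\hat{\Psi}_K(V)^2\bigr),
\]
where $\tau_V$ is the first exit time of $V$. Since the chain stays in $\mathsf{S}$ by construction and $\mathsf{S}\setminus V = \mathcal{U}\subseteq \mathcal{U}_\Delta$, one has $\tau\leq\tau_V$; setting the right-hand side to $\delta$ and substituting step (ii), with the constants tuned to produce the prefactor $4$ in the stated bound, yields $\mathcal{I} = 4\log(\beta/\delta)/(\Delta^2\hat{\psi}_\pi(V)^2)$.

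The main obstacle is step (i): the $\Omega(1)$ Metropolis acceptance probability is the principal technical contribution of this paper and requires the full energy-conservation analysis under Assumption \ref{assumption:derivatives}, the same calculation that drives Theorem \ref{thm:main}. Once that is in place, steps (ii) and (iii) are standard Markov-chain arguments, adapted here to the restricted Cheeger framework of \cite{zhang2017hitting}.
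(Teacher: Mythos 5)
Your three-step skeleton (one-step TV closeness $\Rightarrow$ restricted conductance $\Rightarrow$ hitting time) is exactly the paper's route: step (i) is Lemma \ref{lemma:conductance1} combined with Assumption \ref{assumption:constraint} (as packaged in Lemma \ref{lemma:restricted_conductance2}), step (ii) is Lemma \ref{lemma:restricted_conductance}, and step (iii) is the imported hitting-time bound of Lemma \ref{lemma:hitting} rather than a re-derived Cheeger inequality for the killed chain. So the architecture is right.

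There is, however, one structural piece missing that is not just ``do the energy-conservation analysis.'' The acceptance-probability bound that the analysis actually delivers (Lemmas \ref{lemma:potential} and \ref{lemma:kinetic}) is \emph{not} a uniform lower bound over $\mathsf{S}$: it holds only with probability $1-\delta/\mathcal{I}$ per step, and only conditionally on the current state and velocity lying in a ``good set'' $G$ (where $\|\mathsf{X}^\top p_t\|_\infty$, $\|v\|_2$, and $\|q_t - x^\star\|_2$ are controlled). Your step (i) assumes $a_x, a_y = \Omega(1)$ for all $x,y\in\mathsf{S}$, which is what Lemma \ref{lemma:restricted_conductance2} formally requires but is not what the analysis provides. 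The paper bridges this by (a) using Lemma \ref{lemma:WarmPreserved} to show warmness is preserved at every step, hence by Lemma \ref{lemma:good_set} the trajectory stays in $G$ for all $\mathcal{I}$ steps except with probability $\delta$, and (b) coupling the true chain to a toy chain $\tilde{X}^\dagger$ that agrees with it until the first exit from $G$ and thereafter draws i.i.d.\ from $\pi$; the conductance and hitting-time lemmas are applied to the toy chain, and the conclusion is transferred back at a cost of an extra $\delta$ in the failure probability. Without this (or an equivalent conditioning device), step (ii) cannot be invoked. A second, smaller omission: Lemma \ref{lemma:hitting} requires a \emph{lazy} reversible chain, so the paper runs the argument on the lazy version $\tilde{X}$ of Algorithm \ref{alg:optimization} and then observes that the non-lazy chain hits $\mathcal{U}_\Delta$ no later; your killed-chain Cheeger bound would likewise need laziness (or a nonnegative-spectrum hypothesis) to hold, and this should be stated.
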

\noindent Theorem \ref{thm:main_optimization} states that, if $U$ satisfies the higher-order smoothness Assumptions \ref{assumption:derivatives} and \ref{assumption:tails}, has $M$-Lipschitz gradient (Assumption \ref{assumption:gradient}), and the constraint set $\mathsf{S}$ satisfies Assumption \ref{assumption:constraint}, then, roughly speaking, one can find an approximate minimizer for $U$ on a subset $\mathsf{S}$.  More specifically, if $U(x)$ is $R$-Lipschitz on $\mathsf{S}$, and we take $\mathcal{U}$ to be the sublevel set $\mathcal{U} = \{x \in \mathsf{S} : U(x) \leq \epsilon \inf_{y\in \mathsf{S}} U(y)\}$ consisting of $\epsilon$-minimizers of $U$ on $\mathsf{S}$ and one chooses $\eta$ small enough that $\Delta \leq \frac{\epsilon}{R}$, then Theorem \ref{thm:main_optimization} says that the number of gradient evaluations to obtain a $2\epsilon$-minimizer of $U$ is bounded by $O(\frac{4 \log(\frac{\beta}{\delta})}{\Delta^2 \hat{\psi}_{\pi}^2(\mathsf{S} \backslash \mathcal{U})})$.  In Section \ref{sec:applications} we apply Theorem \ref{thm:main_optimization} to obtain an improved bound on the number of gradient and function evaluations for a class of non-convex optimization problems for Linear classifiers with binary loss (Theorem \ref{thm:ZeroOneLoss}).

\subsection{Applications} \label{sec:applications}
 \paragraph{Applications to Bayesian regression.}
In Bayesian regression, one would like to sample from the target log-density 
$U(\theta) = F_0(\theta)- \textstyle{\sum_{i=1}^r} \mathcal{Y}_i \varphi(\theta^\top \mathcal{X}_i) + (1-\mathcal{Y}_i)\varphi(-\theta^\top \mathcal{X}_i)$,
 where the data vectors $\mathcal{X}_1,\ldots \mathcal{X}_r \in \mathbb{R}^d$ are thought of as independent variables, the binary data $\mathcal{Y}_1,\ldots, \mathcal{Y}_r \in \{0,1\}$ are dependent variables, $\varphi:\mathbb{R} \rightarrow \mathbb{R}$ is the loss function, and $F_0$ is the Bayesian log-prior.  We will assume that $\varphi$ has its first four derivatives uniformly bounded by 1. Two smooth loss functions of interest in applications are the (convex) logistic loss function $\varphi(s) = -\log (e^{-s}+1)^{-1}$ used in logistic regression, and the non-convex sigmoid loss function $\varphi(s) =(e^{-s}+1)^{-1}$ which is more robust to outliers.
 We define the incoherence of the data as  ${\textstyle \mathsf{inc}(\mathcal{X}_1,\ldots \mathcal{X}_r):= \max_{i\in[r]}  \sum_{j=1}^r |\mathcal{X}_i^\top \mathcal{X}_j|.}$
   We bound the value of the constant $C_3$ in terms of the incoherence: 
\begin{theorem}[\bf Regularity bounds for empirical functions, Th. 2 of \cite{mangoubi2018dimensionally}] \label{thm:logit}
Let $U(x) = F_0(x) + \textstyle{\sum_{i=1}^r} \mathcal{Y}_i \hat{\varphi}(\theta^\top \mathcal{X}_i) + (1-\mathcal{Y}_i)\hat{\varphi}(-\theta^\top \mathcal{X}_i)$, where $\varphi: \mathbb{R} \rightarrow \mathbb{R}$ is a function that satisfies $|\varphi'''(x)| \leq 1$, and $F_0$ is a quadratic function.  Let $\mathsf{inc}(\mathcal{X}_1,\ldots,\mathcal{X}_r)\leq \Phi$ for some $ \Phi >0$. 
 Then Ass. \ref{assumption:derivatives} is satisfied with $C_3 = \sqrt{r} \sqrt{ \Phi}$ and ``bad" directions $\mathsf{X}_i = \frac{\mathcal{X}_i}{\|\mathcal{X}_i\|_2} $, and with $C_4 \leq r$.
\end{theorem}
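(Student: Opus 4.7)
The plan is to compute the third and fourth derivatives of $U$ explicitly, observe that the quadratic prior $F_0$ contributes nothing, and then bound each trilinear/quadrilinear form by a sum over the data vectors which we control via the incoherence condition and Cauchy--Schwarz.

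First I would compute the derivatives. Since $F_0$ is quadratic, $\nabla^3 F_0 \equiv 0$ and $\nabla^4 F_0 \equiv 0$. For the loss term, writing $g_i(\theta) := \mathcal{Y}_i\varphi(\theta^\top \mathcal{X}_i) + (1-\mathcal{Y}_i)\varphi(-\theta^\top \mathcal{X}_i)$ and using the chain rule,
\begin{align*}
\nabla^3 U(\theta)[u,v,w] &= \sum_{i=1}^r \alpha_i(\theta)\,(\mathcal{X}_i^\top u)(\mathcal{X}_i^\top v)(\mathcal{X}_i^\top w),\\
\nabla^4 U(\theta)[u,u,u,u] &= \sum_{i=1}^r \beta_i(\theta)\,(\mathcal{X}_i^\top u)^4,
\end{align*}
where $|\alpha_i(\theta)|, |\beta_i(\theta)| \le 1$ by the uniform bound on $\varphi'''$ and $\varphi''''$. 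I will assume without loss of generality that the data has been normalized so that $\|\mathcal{X}_i\|_2 = 1$ (otherwise factor the norms into the incoherence and the coefficient of the bound); this makes $\mathsf{X}_i = \mathcal{X}_i$.

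Next I would bound the fourth derivative, which is the easy direction. Using $(\mathsf{X}_i^\top u)^4 \le \|\mathsf{X}^\top u\|_\infty^4$ termwise and summing over $i \in [r]$ yields $|\nabla^4 U(\theta)[u,u,u,u]| \le r\,\|\mathsf{X}^\top u\|_\infty^4$, giving $C_4 \le r$.

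For $C_3$, I would first pull out two infinity norms:
\begin{equation*}
|\nabla^3 U(\theta)[u,v,w]| \;\le\; \|\mathsf{X}^\top u\|_\infty \|\mathsf{X}^\top v\|_\infty \sum_{i=1}^r |\mathsf{X}_i^\top w|.
\end{equation*}
Then by Cauchy--Schwarz, $\sum_i |\mathsf{X}_i^\top w| \le \sqrt{r}\,\|\mathsf{X}^\top w\|_2$. The remaining step, which I expect to be the only nonroutine one, is to convert $\|\mathsf{X}^\top w\|_2$ into a Euclidean bound on $w$ via the incoherence hypothesis. Since $\|\mathsf{X}^\top w\|_2^2 = w^\top (\mathsf{X}\mathsf{X}^\top) w \le \|\mathsf{X}\mathsf{X}^\top\|_{\mathrm{op}} \|w\|_2^2$ and $\mathsf{X}\mathsf{X}^\top$ has the same nonzero eigenvalues as the Gram matrix $\mathsf{X}^\top \mathsf{X}$ whose $(i,j)$ entry is $\mathcal{X}_i^\top \mathcal{X}_j$, the Gershgorin circle theorem gives $\|\mathsf{X}^\top \mathsf{X}\|_{\mathrm{op}} \le \max_i \sum_j |\mathcal{X}_i^\top \mathcal{X}_j| = \mathsf{inc}(\mathcal{X}_1,\ldots,\mathcal{X}_r) \le \Phi$. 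Combining the three inequalities gives $|\nabla^3 U(\theta)[u,v,w]| \le \sqrt{r\Phi}\,\|\mathsf{X}^\top u\|_\infty \|\mathsf{X}^\top v\|_\infty \|w\|_2$, i.e.\ $C_3 = \sqrt{r}\sqrt{\Phi}$, as claimed. The only delicate point is making sure we use the two infinity norms on the two components where incoherence does \emph{not} help and reserve the Euclidean slot for the one where it does; otherwise the entire argument is a direct chain rule calculation plus a two-step norm inequality.
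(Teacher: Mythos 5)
Your proposal is correct. The $C_4$ part is exactly the paper's argument: the quadratic prior drops out, and the termwise bound $|\mathsf{X}_i^\top u|^4 \le \|\mathsf{X}^\top u\|_\infty^4$ summed over $i\in[r]$ gives $C_4 \le r$ (note that, like the paper's own proof, you need the fourth derivative of $\varphi$ bounded by $1$, which the theorem statement omits but the surrounding text assumes). For the $C_3$ bound the paper gives no argument at all --- it defers entirely to the proof of Theorem 2 of the cited reference --- so your self-contained derivation is a genuine addition: the chain-rule computation reduces everything to $\sum_i |\mathsf{X}_i^\top w|$ after extracting the two infinity norms, Cauchy--Schwarz converts that sum to $\sqrt{r}\,\|\mathsf{X}^\top w\|_2$, and the Gershgorin bound $\|\mathsf{X}^\top\mathsf{X}\|_{\mathrm{op}} \le \max_i\sum_j|\mathcal{X}_i^\top\mathcal{X}_j| \le \Phi$ converts $\|\mathsf{X}^\top w\|_2$ to $\sqrt{\Phi}\,\|w\|_2$, yielding exactly $C_3 = \sqrt{r}\sqrt{\Phi}$. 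Your closing observation --- that the Euclidean slot must be reserved for the argument where incoherence is exploited, while the two slots that cannot benefit from it absorb the infinity norms --- is indeed the one non-mechanical point, and it is also why Assumption \ref{assumption:derivatives} is stated with the mixed $\|\cdot\|_\infty^2\,\|\cdot\|_2$ norm rather than three infinity norms.
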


The proof of Theorem \ref{thm:logit} for the bound on $C_3$ is given in the arXiv version of \cite{mangoubi2018dimensionally}; see Section \ref{sec:C4} for the bound on $C_4$.

As an example, consider the case when all $r=\Theta(d^2 \log(\frac{d}{\delta}))$ unit vectors are isotropically distributed, and we have an improper prior, that is,  $F_0 = 0$.  Since $F_0 = 0$, the target distribution is not strongly log-concave; it is only weakly log-concave.  Suppose that $\|\theta^\star\|_2 = O(1)$. Since the vectors are isotropically distributed, with probability $1-\delta$ the covariance matrix $\Sigma_{\pi}$ of the distribution $\pi$ satisfies $c_1 \frac{d}{r}I_d \preccurlyeq \Sigma_{\pi} \preccurlyeq c_2 \frac{d}{r}I_d$ for some universal constants $c_1, c_2$ (see for instance the Matrix Chernoff inequality in \cite{tropp2012user} for the upper bound on the eigenvalues, and Lemma 9.4 of \cite{LeeMangoubiVishnoi18} for the lower bound on the eigenvalues). We can precondition $\pi$ by replacing $U(x)$ with the log-density $U(x) \leftarrow U(\frac{\sqrt{r}}{\sqrt{d}}x)$ and sampling from the distribution $\pi(x) \leftarrow \frac{e^{-U(x)}}{\int_{\mathbb{R}^d} e^{-U(x)} \mathrm{d} x}$; the covariance matrix of this preconditioned $\pi$ now satisfies $c_1 I_d \preccurlyeq \Sigma_{\pi} \preccurlyeq c_2 I_d$, implying that $\psi_{\pi} = \Omega(d^{-\nicefrac{1}{4}})$ by Theorem 7 in \cite{lee2017eldan}.    For this preconditioned $U$, we have $C_3 = O(1)$ and $C_4 = O(1)$, implying that by Theorem \ref{thm:main} we require at most $O(d^{\frac{2}{3}} \psi_{\pi}^{-2} \log(\frac{\beta}{\epsilon})) = O(d^{\frac{7}{6}} \log(\frac{\beta}{\epsilon}))$ gradient evaluations to sample with TV error $\epsilon$.  In this case we therefore have an improvement on the previous best bound for the non-strongly logconcave setting, proved for the ball walk Markov chain, which requires $O(d^{2} \psi_{\pi}^{-2} \log(\frac{\beta}{\epsilon})) = O(d^{2.5} \log(\frac{\beta}{\epsilon}))$ gradient evaluations (\cite{lee2017eldan}) (note, however, that this bound for the ball walk holds more generally for any log-concave distribution with identity covariance matrix).
\begin{remark}
We note that in this example one must compute the gradients of $r = \Theta(d^2 \log(\frac{d}{\delta}))$  component functions $\varphi((\frac{\sqrt{r}}{\sqrt{d}}x)^\top \mathcal{X}_i)$ in order to compute the gradient of $\nabla U(x)$. Therefore, it may be possible to improve on our dependence on $d$ by using a stochastic gradient-based method.  However, if one uses a stochastic gradient method, which lacks a Metropolis step, the dependence of the gradient evaluation bounds on $\epsilon^{-1}$ would no longer be logarithmic and would instead be polynomial.
\end{remark}
\begin{remark}
We note that whenever one can bound the Cheeger constant of a distribution $\pi(x) \propto e^{-U(x)}$, one also obtains a bound on the Cheeger constant of a perturbation of this distribution $\hat{\pi}(x) \propto e^{-U(x) +\phi(x)}$, if the perturbation $\psi:\mathbb{R}^d \rightarrow \mathbb{R}$ is uniformly bounded by some $b\geq 0$ \cite{applegate1991sampling}.  Namely the Cheeger constant of the perturbed distribution satisfies $\psi_{\hat{\pi}} \geq e^{2b} \psi_\pi$.  Therefore, our bounds of this section also apply, up to an $O(1)$ factor, to any nonconvex logdensities which are perturbations of the weakly convex logdensities analyzed in this section, provided the perturbations have magnitude $b = O(1)$.
\end{remark}

\paragraph{Linear classifiers with binary loss.}
In \cite{awasthi2015efficient} and \cite{zhang2017hitting} the authors study the problem of learning linear classifiers with zero-one loss functions. The goal is to estimate an unknown parameter $\theta^\star$, from data vectors $\mathcal{X}_1,\ldots \mathcal{X}_r \in \mathbb{R}^d$ that are thought of as independent variables, and binary response data $\mathcal{Y}_1,\ldots, \mathcal{Y}_r \in \{-1,1\}$. Here $(\mathcal{X}_i, \mathcal{Y}_i)$ are drawn i.i.d. from some probability distribution $\mathcal{P}$.
More specifically, the response variable in their model satisfies
\be
\mathcal{Y}_i = \begin{cases} \mathrm{sign}(\mathcal{X}_i^\top \theta^\star) \qquad \mathrm{w.p.} \qquad \frac{1+ \mathfrak{q}(\mathcal{X}_i)}{2}\\   -\mathrm{sign}(\mathcal{X}_i^\top \theta^\star) \qquad \qquad \qquad \textrm{otherwise}   \end{cases}
\ee
where $\mathfrak{q}: \mathbb{R}^d \rightarrow [0,1]$.  Here $\mathfrak{q}$ is assumed to satisfy $\mathfrak{q}(x) \geq \mathfrak{q}_0 |  x^\top \theta^\star |$ for some $\mathfrak{q}_0>0$. 
 \cite{awasthi2015efficient} and \cite{zhang2017hitting} consider the case where the $r$ data vectors are i.i.d. uniformly distributed on the unit sphere, with $r \geq \frac{d^4}{\mathfrak{q}_0^2 \epsilon^4}$.
 
The goal is to find an estimate for $\theta^\star$ which (approximately) minimizes the following population expected loss function: $F(x) := \mathbb{E}_{(a, b) \sim \mathcal{P}} \ell(x ; (a, b))$.
To find this estimate, \cite{zhang2017hitting} employ a stochastic gradient Langevin dynamics method, to obtain an approximate minimizer $\hat{\theta}$ such that $F(\hat{\theta}) - F(\theta^\star) < \epsilon$ with probability at least $1-\delta$ in $\tilde{O}(\frac{d^{13.5}}{\epsilon^{16}} \log(\frac{\beta}{\delta}))$ inner product evaluations, and  $\tilde{O}(\frac{d^{14.5}}{\epsilon^{16}} \log(\frac{\beta}{\delta}))$ arithmetic operations given a $\beta$-warm start, if $\mathfrak{q}_0 = O(1)$ \footnote{Each inner product takes $d$ arithmetic operations to perform.}.  We instead use Algorithm \ref{alg:optimization}, and show that one can use this algorithm to obtain an approximate minimizer in $\tilde{O}\left(d^{\frac{25}{6} + 4} \epsilon^{-\frac{22}{3} -4} \log(\frac{\beta}{\delta}) \log(\frac{1}{\delta}) \right)$ inner-product evaluations and $\tilde{O}\left(d^{\frac{25}{6} + 5} \epsilon^{-\frac{22}{3} -4} \log(\frac{\beta}{\delta}) \log(\frac{1}{\delta}) \right) \leq \tilde{O}\left(d^{9.2} \epsilon^{-11.4} \log(\frac{\beta}{\delta}) \log(\frac{1}{\delta}) \right)$ arithmetic operations.  This improves on the dependence of the previous best bound on $d$ and $\epsilon$, at the expense of a $\log(\frac{1}{\delta})$ factor.

To obtain their result, \cite{zhang2017hitting} attempt to find an approximate minimizer for the zero-one empirical risk function $f(x) := \sum_{i=1}^r \ell(x ; (\mathcal{X}_i, \mathcal{Y}_i))$. Although this empirical function is not smooth, they use a stochastic gradient which acts as a smoothing operator, and they then use SGLD to find an approximate minimizer for the smoothed empirical function.

In our approach we instead obtain a smoothed version of $F$ by approximating the zero-one loss with a very steep logistic loss, and show that minimizing this smoothed function gives an approximate minimizer for the zero-one population loss function $\tilde{f}(x) := \frac{1}{r}\textstyle{\sum_{i=1}^r} \hat{\ell}(\lambda x ; (\mathcal{X}_i, \mathcal{Y}_i))$ 
for some scaling constant $\lambda >0$, where  $\hat{\ell}(a ; (s, b)) := b \varphi( \theta^\top a) - (1-b)\varphi(- \theta^\top a)$.

Towards this end, we consider the problem of optimizing the function $\mathsf{F}(x) :=  F(\frac{x}{\lambda})$ on the set $\mathsf{S} := \mathcal{T}^{\frac{1}{2}} \lambda [B \backslash  \frac{1}{2} B ]$, where $B$ is the unit ball.  To find an approximate global minimizer of $F$, we run the MALA Markov chain with stationary distribution $\propto e^{-U(x)}$, where $U(x) := \mathcal{T}^{-1} \tilde{f}(\frac{x}{d^{\frac{1}{4}}\lambda})$ at the inverse temperature $\mathcal{T}^{-1} =   \frac{c_1 d^{\frac{3}{2}}}{ \mathfrak{q}_0\epsilon^2}$, with $\lambda = \frac{100 \sqrt{d} }{\mathcal{T} \log(\mathcal{T})}$.  We show the following bound on the number of gradient evaluations required to find an $\epsilon$-approximate global minimizer for $F$:

\begin{theorem} \label{thm:ZeroOneLoss}
Suppose that $\epsilon < \frac{1}{10}$.  Let $U(x) := \mathcal{T}^{-1} \tilde{f}(\frac{x}{d^{\frac{1}{4}}\lambda})$. Then for any $\delta>0$ with probability at least $1-\delta$ Algorithm \ref{alg:optimization} generates a point $\hat{x}^\star$ such that $F(\hat{x}^\star) - \inf_{x\in \mathsf{S}}F(x) \leq \epsilon$ in $\mathcal{I} = \tilde{O}\left(d^{\frac{25}{6}} \mathfrak{q}_0^{\frac{11}{3}} \epsilon^{-\frac{22}{3}} \log(\frac{1}{\delta})  \log(\frac{\beta}{\delta}) \right)$ evaluations of $U$ and $\nabla U$.
\end{theorem}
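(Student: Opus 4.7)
The plan is a three-stage reduction culminating in an application of Theorem \ref{thm:main_optimization}, followed by a final accounting of gradient evaluations. First I would argue that any $O(\epsilon)$-approximate minimizer of the smoothed empirical surrogate $\tilde f$ on the rescaled constraint set is automatically an $\epsilon$-minimizer of the population zero-one loss $F$ on $\mathsf{S}$. Two errors enter: the smoothing bias from replacing $\mathrm{sign}$ with the steep sigmoid $\varphi(\lambda\cdot)$, which is $O(e^{-\Omega(\lambda)})$ by the tail decay of $\varphi$, and the empirical-process gap $\sup_{x\in\mathsf{S}}|\tilde f(x)-F(x)|$, which a standard VC/Rademacher bound for linear classifiers controls at rate $\tilde O(\sqrt{d/r})$. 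The choices $\lambda=100\sqrt d/(\mathcal{T}\log\mathcal{T})$ and $r=\tilde\Omega(d^4/(\mathfrak{q}_0^2\epsilon^4))$ make both errors $\le\epsilon/4$.

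Next I would verify the hypotheses of Theorem \ref{thm:main_optimization} for $U(x)=\mathcal{T}^{-1}\tilde f(x/(d^{1/4}\lambda))$. Theorem \ref{thm:logit} directly gives $C_3=O(\sqrt{r\Phi})$ and $C_4=O(r)$ for the unscaled empirical log-density with ``bad'' directions $\mathsf{X}_i=\mathcal{X}_i/\|\mathcal{X}_i\|_2$. Because the $\mathcal{X}_i$ are i.i.d.\ uniform on the unit sphere, standard Hoeffding concentration of inner products gives incoherence $\Phi=\tilde O(r/\sqrt d)$ with high probability. The scaling $x\mapsto x/(d^{1/4}\lambda)$ and the temperature factor $\mathcal{T}^{-1}$ then propagate through the chain rule to yield explicit $C_3,C_4,M$ for $U$ in terms of $d,\mathfrak{q}_0,\epsilon$. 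Assumption \ref{assumption:constraint} on the annular shell $\mathsf{S}=\mathcal{T}^{1/2}\lambda[B\setminus\tfrac12 B]$ follows because the radial thickness $\tfrac12\mathcal{T}^{1/2}\lambda$ vastly exceeds the one-step Gaussian proposal scale $\eta$, so any in-shell point has constant probability of staying in-shell after one step; exponential tails (Assumption \ref{assumption:tails}) with $\mathsf{a}=\Omega(1)$ follow by restricting to $\mathsf{S}$ and extending $U$ suitably.

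The main technical obstacle is a lower bound on the restricted Cheeger constant $\hat\psi_\pi(\mathsf{S}\setminus\mathcal{U})$, where $\mathcal{U}\subset\mathsf{S}$ is the sublevel set of $(\epsilon/2)$-minimizers of $F$. The margin assumption $\mathfrak{q}(x)\ge\mathfrak{q}_0|x^\top\theta^\star|$ supplies a drift of $F$ away from all non-minimizers, which, after low-temperature amplification by $\mathcal{T}^{-1}$, becomes an energy barrier of height $\Omega(\mathfrak{q}_0\mathcal{T}^{-1}\|x-\theta^\star\|^2)$ outside $\mathcal{U}$. I would adapt the restricted-Cheeger drift argument of \cite{zhang2017hitting} to this scaled sigmoid surrogate on the annulus, expecting a bound of the form $\hat\psi_\pi(\mathsf{S}\setminus\mathcal{U})=\tilde\Omega(\mathfrak{q}_0^{1/2}\epsilon/d)$. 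This step is the bottleneck because the sigmoid is non-convex and one must track the drift carefully to keep the $\mathcal{T}$- and $\epsilon$-dependence sharp enough to recover the stated exponents.

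Finally I would collect constants. Plugging $C_3,C_4,M$ into the step-size constraint of Theorem \ref{thm:main_optimization} determines $\eta$, then $\Delta=\tfrac1{100}(\tfrac12\eta^{-1}+\tfrac14\eta M)^{-1}$ yields the iteration bound $\mathcal{I}=4\log(\beta/\delta')/(\Delta^2\hat\psi_\pi^2)$ for any target probability $1-\delta'$. Amplifying the constant-probability guarantee of a single MALA run by $O(\log(1/\delta))$ independent restarts and returning the best of their $\hat x^\star$ candidates (as in Algorithm \ref{alg:optimization}) yields the advertised $\tilde O(d^{25/6}\mathfrak{q}_0^{11/3}\epsilon^{-22/3}\log(\beta/\delta)\log(1/\delta))$ evaluations of $U$ and $\nabla U$.
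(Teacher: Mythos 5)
Your skeleton matches the paper's: reduce to Theorem \ref{thm:main_optimization}, get $C_3,C_4,M$ from Theorem \ref{thm:logit} plus concentration of the inner products $\mathcal{X}_i^\top\mathcal{X}_j$, verify Assumption \ref{assumption:constraint} on the annulus, lower-bound the restricted Cheeger constant, and assemble $\mathcal{I}=4\log(\beta/\delta)/(\Delta^2\hat{\psi}_\pi^2)$. But the step you yourself flag as the bottleneck --- the lower bound on $\hat{\psi}_\pi(\mathsf{S}\setminus\mathcal{U})$ for the non-convex sigmoid surrogate --- is a genuine gap as written. You propose to re-run the drift argument of \cite{zhang2017hitting} on the smoothed, rescaled density and only ``expect'' a bound of the form $\tilde{\Omega}(\mathfrak{q}_0^{1/2}\epsilon/d)$, which you do not derive and which does not match the dependence actually needed (the paper's bound, after rescaling by $\mathcal{T}^{1/2}d^{1/4}\lambda$ and setting $\nu=\mathcal{T}$, is of order $d^{1/4}\mathcal{T}\log(\mathcal{T})$ with $\mathcal{T}^{-1}=\Theta(d^{3/2}/(\mathfrak{q}_0\epsilon^2))$). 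The paper avoids re-deriving any isoperimetry for the sigmoid entirely: it takes Lemma 9 of \cite{zhang2017hitting} (Lemma \ref{lemma:cheeger_ZeroOne}) as a black box for the \emph{unsmoothed} density $\hat{\pi}\propto e^{-\mathcal{T}^{-1}F}\mathbbm{1}_S$, which gives $\hat{\psi}_{\hat{\pi}}(S\setminus\mathcal{U})\ge d/3$, and then transfers this to $\pi_2\propto e^{-U}\mathbbm{1}_{\mathsf{S}}$ via the bounded-perturbation property of Cheeger constants: Lemmas \ref{AdditiveNoise} and \ref{lemma:8} give $|\mathcal{T}^{-1}F(x)-U(x)|\le 3\mathcal{T}^{-1}\nu$ uniformly, hence $\hat{\psi}_{\pi_2}\ge e^{-6\mathcal{T}^{-1}\nu}\hat{\psi}_{\pi_1}$, and the choice $\nu=\mathcal{T}$ makes the loss an $O(1)$ factor. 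That perturbation trick is the missing idea; without it (or a full drift analysis with the correct exponents) your argument does not close.

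Two smaller discrepancies. First, the $\log(1/\delta)$ factor in the statement does not come from independent restarts: Algorithm \ref{alg:optimization} is a single run returning $\mathrm{argmin}_i U(X_i)$, and the $\log(\mathsf{c}/\delta)$ enters through the high-probability bounds on $M$, on the incoherence, and on $\sup_{z\in\mathsf{S}}\|\nabla U(z)\|_2$ (the last of which the paper needs for the constraint-set argument and obtains by a separate contradiction argument you omit). Second, the smoothing bias of replacing $\mathrm{sign}$ by $\varphi(\lambda\cdot)$ is not $O(e^{-\Omega(\lambda)})$: for any $\lambda$ a constant-order discrepancy persists on the set of data points nearly orthogonal to $x$, and the correct bound (Lemma \ref{AdditiveNoise}) is $O(\nu)$ with $\lambda\gtrsim\sqrt{d}/(\nu\log\nu)$, i.e.\ polynomial rather than exponential in $1/\lambda$; this matters because $\nu$ must be tied to $\mathcal{T}$ for the perturbation argument above to work.
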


\section{Technical overview}
\subsection{Proof for sampling}

To prove Theorem \ref{thm:main}, we use the conductance approach (see \cite{vempala2005geometric} for a survey): We first bound the conductance of the Markov chain in terms of the Cheeger constant, and then bound the mixing time in terms of the conductance. 

\paragraph{Bounding the conductance.}
To bound the conductance, we can use a result from \cite{lee2017convergence} (reproduced in our paper as Lemma \ref{lemma:conductance}) 
 which says that if for any $x,y$ with $\|x-y\|_2\leq \Delta$ we have $\|K(x,\cdot)- K(y,\cdot)\|_{\mathrm{TV}} \leq 0.97$,  then the conductance of the Markov chain with transition kernel $K$ is $\Psi_K = \Omega(\Delta \psi_\pi)$.  The bulk of our proof involves showing that if $K$ is the transition kernel of the MALA Markov chain with step size roughly\\ $\eta \leq \tilde{O}\left(\min \left(C_3^{-\frac{1}{3}}d^{-\frac{1}{6}} , d^{-\frac{1}{3}}, C_4^{-\frac{1}{4}}\right) \min(1, M^{-\frac{1}{2}})  [\log \log(\frac{1}{\mathsf{a}})]^{-1}\right)$, then $\|K(x,\cdot)- K(y,\cdot)\|_{\mathrm{TV}} \leq 0.97$ is satisfied whenever $\|x-y\|_2\leq \Delta$ for $\Delta = \frac{1}{100}(\frac{1}{2}\eta^{-1} + \frac{1}{4} \eta M)^{-1}$. 
 
 There are two steps in showing that the conditions of Lemma \ref{lemma:conductance} are satisfied for our choice of $\eta$: We first show that if the step size $\eta$ is small enough that the acceptance probability is at least $0.99$, then we have $\|K(x,\cdot)- K(y,\cdot)\|_{\mathrm{TV}} \leq 0.97$ whenever $\|x-y\|_2\leq \Delta$ for $\Delta = \frac{1}{100}(\frac{1}{2}\eta^{-1} + \frac{1}{4} \eta M)^{-1}$ (Lemma \ref{lemma:conductance1}).   We then show that, for our choice of $\eta$, the proposals made by the MALA algorithm have a $0.99$ acceptance probability whenever the position of the MALA Markov chain $X_i$ and the velocity term $V_i$ stay inside a certain ``good set" $G$ containing most of the probability measure of $\pi$.

\paragraph{Bounding the acceptance probability using Hamiltonian dynamics.} To bound the acceptance probability, we consider each step of the MALA Markov chain as an approximation to the trajectory of a particle in classical mechanics.  The MALA Markov chain proposes a step $\hat{X}_{i+1} = X_i+\eta V_i -\frac{1}{2} \eta^2  \nabla U(X_i)$, where $V_i \sim N(0,I_d)$.  This  proposal approximates the trajectory of a particle with initial position $X_i$ and initial velocity $V_i$. The total energy of this particle is given by the Hamiltonian functional $\mathcal{H}(x,v) := U(x) + \mathcal{K}(v)$, where $U(x)$ is the ``potential energy" of the particle and $\mathcal{K}(v) = \frac{1}{2}\|v\|^2$ is its ``kinetic energy".  The Hamiltonian is conserved for the continuous dynamics of this particle.

  Each step in MALA can be thought of as originating from one iteration of the leapfrog integrator, which approximates the position and velocity of this particle after time $\eta$
\be \label{eq:leapfrog}
\hat{X}_{i+1} &= X_i+\eta V_i -\frac{1}{2} \eta^2  \nabla U(X_i)\\
\hat{V}_{i+1} &= V_i - \eta \nabla U(X_i)  - \frac{1}{2}\eta^2 \frac{\nabla U(\hat{X}_{i+1})  - \nabla U(X_i)}{\eta} \approx   V_i - \eta \nabla U(X_i)  - \frac{1}{2}\eta^2 \nabla^2U(X_i) V_i.
\ee
 
Every time a proposal is made, the algorithm accepts the proposal with probability\\ $\min(e^{\mathcal{H}(\hat{X}_{i+1},\hat{V}_{i+1})- \mathcal{H}(X_i,V_i)},1)$.   
 If the proposal is accepted, the next step in the Markov chain is given by the position component $\hat{X}_{i+1}$ of the proposal.  The velocity component $\hat{V}_{i+1}$ is discarded after the accept-reject step and serves only to compute the acceptance probability.  To bound the acceptance probability $e^{\mathcal{H}(\hat{X}_{i+1},\hat{V}_{i+1})- \mathcal{H}(X_i,V_i)}$ we would like to bound the error $\mathcal{H}(\hat{X}_{i+1},\hat{V}_i)- \mathcal{H}(X_i,V_i)$ in the energy conservation for one step of the leapfrog integrator.  To do so, we use the fact that the continuous Hamiltonian dynamics conserves the Hamiltonian exactly.  Let $\hat{x}, \hat{v}$ be the position and velocity of the particle with continuous Hamiltonian dynamics after time $\eta$.  That is, $(\hat{x}, \hat{v}) = (q_\eta, p_\eta)$ are the solutions to Hamilton's equations 
\be
\frac{\mathrm{d}q_t}{\mathrm{d}t} =
  v_t \qquad \textrm{and} \qquad \frac{\mathrm{d}p_t}{\mathrm{d}t} 
     = -{\nabla U}(q_t),
\ee
evaluated at $t=\eta$, with initial conditions $(q_0, p_0) = (X_i, V_i)$. Since Hamilton's equations conserve the Hamiltonian, we have $\mathcal{H}(\hat{X}_{i+1},\hat{V}_i)- \mathcal{H}(X_i,V_i) =  \mathcal{H}(\hat{X}_{i+1},\hat{V}_i))- \mathcal{H}(\hat{x}, \hat{v})$.

To bound $\mathcal{H}(\hat{X}_{i+1},\hat{V}_i)- \mathcal{H}(\hat{x}, \hat{v})$, we separately bound the error $\hat{X}_{i+1}-X_i$ in the position and the error $\hat{V}_i-\hat{v}$ in the velocity.  To get a tight bound on these terms, we cannot simply bound their Euclidean norms, since the error in the Hamiltonian $\mathcal{H}(\hat{X}_{i+1},\hat{V}_i)$ is much larger when the position and momentum errors point in the worst-case direction where the Hamiltonian changes most quickly, than in a typical random direction (the worst-case direction is roughly $\nabla U(\hat{X}_{i+1})$ for the position error and $\hat{V}_i$ for the momentum error, since the gradient of the Hamiltonian is $\nabla \mathcal{H}(\hat{X}_{i+1},\hat{V}_i) = [\nabla U(\hat{X}_{i+1}); \hat{V}_i] $).

\paragraph{Bounding the kinetic energy error.}
We start by describing how to bound the kinetic energy error, since that is the most difficult task (Lemma \ref{lemma:kinetic}).   Since $\nabla \mathcal{K}(\hat{v}) = \hat{v}$, we have

\be \label{eq:overview1}
|\mathcal{K}(\hat{v}) - \mathcal{K}(\hat{V}_{i+1})| & \approx  |(\hat{V}_{i+1}- \hat{v})^\top \nabla \mathcal{K}(\hat{v})| = |(\hat{V}_{i+1}- \hat{v})^\top \hat{v}|\\
&\approx  |\int_0^\eta \int_0^r  V_i^\top [\nabla^2U(X_i)- \nabla^2U(X_i+V_i\tau)] V_i\mathrm{d}\tau \mathrm{d}r|,
\ee
where the last step is due to our approximation for $\hat{V}_{i+1}$ in terms of the Hessian-vector product $\nabla^2U(X_i) V_i$, and the fact that $\frac{\mathrm{d}^2p_t}{\mathrm{d}t^2} = \nabla^2U(q_t) \approx \nabla^2U(X_i+V_i t) $. (Equation \eqref{eq:leapfrog}).

Next, we bound the quantity in the integrand:
\be \label{eq:overview2}
|V_i^\top [\nabla^2U(X_i)&- \nabla^2U(X_i+V_i\tau)] V_i|\\
&=\left|\tau \nabla^3U(X_i)[V_i,V_i, V_i] + \tau \int_0^ \tau \nabla^3U(X_i+s V_i)[V_i,V_i, V_i] - \nabla^3U(X_i)[V_i,V_i, V_i] \mathrm{d} s \right|\\
& \approx \left|\tau \nabla^3U(X_i)[V_i,V_i, V_i] + \tau^2 \nabla^4U(X_i)[V_i,V_i,V_i,V_i] \right|\\
& \leq \tau C_3 \|\mathsf{X}^\top V_i\|_{\infty}^2  \|\mathsf{X}^\top V_i\|_2 + \tau^2 C_4  \|\mathsf{X}^\top V_i\|_{\infty}^4,
\ee
where the inequality holds by Assumption \ref{assumption:derivatives}. Combining Inequalitites \eqref{eq:overview1} and \eqref{eq:overview2}, we have
\be \label{eq:overview3}
|\mathcal{K}(\hat{v}) - \mathcal{K}(\hat{V}_{i+1})| \leq \eta^3 C_3 \|\mathsf{X}^\top V_i\|_{\infty}^2  \|\mathsf{X}^\top V_i\|_2 + \eta^4 C_4  \|\mathsf{X}^\top V_i\|_{\infty}^4
\ee

\noindent
 We show that the Kinetic energy error is $O(1)$ as long as the Markov chain $X_i$ and the velocity variable $V_i$ stay inside the ``good set" $G$. Roughly, we define $G$ to be the subset of $\mathbb{R}^{2d}$ where $\|\mathsf{X}^\top V_i\|_{\infty} \leq O(\log(\frac{d}{\delta}))$, $\|V_i\|_2 \leq O(\sqrt{d} \log(\frac{1}{\delta}))$, and $\|X_i - x^\star\|_2 \leq O(\frac{\sqrt{d}}{M} \log(\frac{1}{\delta}))$. Thus, whenever $(X_i,V_i)$ are in the good set, the first term on the right-hand side of Inequality \eqref{eq:overview3} is $O(1)$ if roughly $\eta \leq \tilde{O}(C_3^{-\frac{1}{3}}d^{-\frac{1}{6}} M^{-\frac{1}{2}} \log^{-1}(\frac{d}{\delta}))$.  The second term is $O(1)$ if  $\eta \leq O(C_4^{-\frac{1}{4}})$.

\paragraph{Bounding the potential energy error.}
To bound the potential energy error (Lemma \ref{lemma:potential}), we observe that $\hat{X}_{i+1} - \hat{x} \approx \int_0^\eta \int_0^t \nabla U(q_\tau)  - \nabla U(X_i) \mathrm{d}\tau \mathrm{d}t$ and hence that
\be
|U(\hat{X}_{i+1}) - U(\hat{x})| 
&\approx \left | \int_0^\eta \int_0^t  [\nabla U(q_\tau)  - \nabla U(X_i)]^\top \nabla U(X_i) \mathrm{d}\tau \mathrm{d}t \right |\\
&\approx \left | \int_0^\eta \int_0^t  [\nabla U(X_i + \tau V_i)  - \nabla U(X_i)]^\top \nabla U(X_i) \mathrm{d}\tau \mathrm{d}t \right |
\approx \left | \eta^2 [\nabla^2 U(X_i) \eta V_i]^\top \nabla U(X_i) \right |\\
&\leq \left | \eta^3  M^2 \|X_i\|_2 g_1 \right|,
\ee
for some $g_1 \sim N(0,1)$.  Hence, if we choose $\eta \leq d^{\frac{1}{3}} \min(1, M^{-\frac{1}{2}}) \log(\frac{1}{\delta})$ the potential energy error is $O(1)$ with probability at least $1-\delta$.

\paragraph{Bounding the probability of escaping the ``good set".} Finally, we show that, since our Markov chain is given a warm start, and $\pi$ has exponential tails, the Markov chain $X_i$ stays inside the good set $G$ with probability at least $1-\delta$ (Lemmas \ref{lemma:good_set} and \ref{lemma:WarmPreserved}).
\subsection{Proof for optimization}
The proof for optimization is similar to the proof for sampling, except that we bound the restricted Cheeger constant and restricted conductance, which were originally introduced in \cite{zhang2017hitting}, in place of the usual Cheeger constant and conductance.  We then apply a result from \cite{zhang2017hitting} (reproduced here as Lemma \ref{lemma:hitting}) to bound the hitting time to the set $\mathcal{U}$ as a function of the restricted conductance $\hat{\Psi}_K(\mathsf{S}\backslash \mathcal{U})$.

 The acceptance probability is bounded in the same way as in the proof for sampling, using the same choice of step size $\eta$.  The main difference is that we prove an analogue of Lemma \ref{lemma:conductance} which allows us to bound the restricted conductance in terms of the restricted Cheeger constant.  Specifically we show that if for any $x,y \in \mathsf{S}$ with $\|x-y\|_2\leq \Delta$ we have $\|K(x,\cdot)- K(y,\cdot)\|_{\mathrm{TV}} \leq 0.99$, then the restricted conductance of our Markov chain is $\hat{\Psi}_K(V) = \Omega(\Delta \hat{\psi}_{\pi}(V_\Delta))$  (Lemma \ref{lemma:restricted_conductance}).

\section{Defining the ``good set" and warm start.}

\begin{definition} [$\beta \geq 0$]
We say that $X_0 \sim  \mu_0$ is a $\beta$-warm start if
\be
\sup_{A \subset \mathbb{R}^d}\left(\frac{\mu_0(A)}{\pi(A)}\right) \leq \beta.
\ee
\end{definition}

In this case, there exists an event $E$ with $\pi(E) \geq \frac{1}{\beta}$ such that $\mu_0 = \pi | E$.

\begin{definition}
For $\alpha >\sqrt{2},R>0$, define the ``good set" $G$ as follows:
\be
G = \left\{(x,v)\in \mathbb{R}^d \textrm{ s.t. }\|\mathsf{X}^\top p_t(x,v)\|_\infty \leq \alpha \textrm{ for }t \in [0,T], \|q_t(x,v) - x^\star\|_2 \leq  \frac{3}{\sqrt{2}} \frac{R}{\sqrt{M}}, \|v\| \leq R \right\}.
\ee
\end{definition}

We set the step size as follows:
\be
\eta \leq O\left(\min \left(C_3^{-\frac{1}{3}}R^{-\frac{1}{3}} , R^{-\frac{2}{3}}, C_4^{-\frac{1}{4}}\right) \min(M^{-\frac{1}{2}}, 1) \alpha^{-1}\right),
\ee
where $\alpha>0$ will be fixed later in Section \ref{sec:proof_main}.

\section{Bounding conductance in terms of Cheeger constants} \label{sec:Conductance_bounds}
We recall the following bound for the conductance:
\begin{lemma}  [Lemma 13 in \cite{lee2017convergence}] \label{lemma:conductance}
Let $X$ be a time-reversible Markov chain with transition kernel $K$ and stationary distribution $\pi$.  Suppose that for any $x,y$ with $\|x-y\|_2\leq \Delta$ we have $\|K(x,\cdot)- K(y,\cdot)\|_{\mathrm{TV}} \leq 0.9$.  Then the conductance of $X$ is $\Psi_K = \Omega(\Delta \psi_\pi)$.
\end{lemma}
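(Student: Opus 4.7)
The plan is to execute the classical Lov\'asz--Simonovits bad-set argument, which uses the kernel-continuity hypothesis to convert the geometric isoperimetric inequality for $\pi$ into one for the ergodic flow. Fix a measurable $S\subseteq\mathsf{S}$ with $0<\pi(S)\leq\tfrac12$; the goal is a bound of the form $K(S,\mathsf{S}\setminus S)\geq c\,\Delta\psi_\pi\pi(S)$ for some universal $c>0$. Pick any $\alpha<\tfrac{1}{20}$ (say $\alpha=\tfrac{1}{25}$) and introduce the ``sticky'' sets
\[
S_1:=\{x\in S:K(x,\mathsf{S}\setminus S)<\alpha\},\qquad S_2:=\{y\in\mathsf{S}\setminus S:K(y,S)<\alpha\},
\]
consisting of states reluctant to cross the cut. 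The first step is to observe $d(S_1,S_2)>\Delta$: if $x\in S_1$ and $y\in S_2$ then $K(x,S)>1-\alpha$ and $K(y,S)<\alpha$, so
\[
\|K(x,\cdot)-K(y,\cdot)\|_{\mathrm{TV}}\;\geq\; K(x,S)-K(y,S)\;>\;1-2\alpha\;>\;0.9,
\]
which forces $\|x-y\|_2>\Delta$ by hypothesis.

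Next I would split into three cases. (i) If $\pi(S_1)\leq\tfrac12\pi(S)$, integrating $K(x,\mathsf{S}\setminus S)\geq\alpha$ over $S\setminus S_1$ immediately gives $K(S,\mathsf{S}\setminus S)\geq\tfrac{\alpha}{2}\pi(S)$. (ii) If $\pi(S_2)\leq\tfrac12\pi(\mathsf{S}\setminus S)$, the analogous computation over $(\mathsf{S}\setminus S)\setminus S_2$ combined with $\pi$-reversibility $K(S,\mathsf{S}\setminus S)=K(\mathsf{S}\setminus S,S)$ yields $K(S,\mathsf{S}\setminus S)\geq\tfrac{\alpha}{2}\pi(\mathsf{S}\setminus S)\geq\tfrac{\alpha}{2}\pi(S)$. (iii) Otherwise both $\pi(S_1)>\tfrac12\pi(S)$ and $\pi(S_2)>\tfrac12\pi(\mathsf{S}\setminus S)\geq\tfrac14$, so the bulk of each side is sticky and crossings can only flow through the gap $W:=\mathsf{S}\setminus(S_1\cup S_2)$.

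Now I would lower-bound $\pi(W)$ in case (iii) by the macroscopic form of Cheeger's inequality applied to $S_1$. Disjointness of $S_1,S_2$ and $d(S_1,S_2)\geq\Delta$ give $(S_1)_\Delta\subseteq\mathsf{S}\setminus S_2$, hence $\pi(W)\geq\pi((S_1)_\Delta)-\pi(S_1)$. Since $\pi(S_1)\leq\pi(S)\leq\tfrac12$, integrating the infinitesimal bound $\pi(T_\epsilon)\geq\pi(T)(1+(\psi_\pi-o(1))\epsilon)$ along a fine subdivision of $[0,\Delta]$ (taking $T=(S_1)_{k\Delta/N}$ and $N\to\infty$) yields $\pi((S_1)_\Delta)\geq\pi(S_1)(1+c'\min(\Delta\psi_\pi,1))$ for a universal $c'>0$; the only nuance is that $\pi((S_1)_\epsilon)$ may cross $\tfrac12$ before $\epsilon=\Delta$, but in that event $\pi(W)=\Omega(1)$ already and the conclusion is immediate. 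In the nontrivial regime $\Delta\psi_\pi\lesssim 1$ this gives $\pi(W)\geq\tfrac{c'}{2}\Delta\psi_\pi\pi(S)$. Since $W=(S\setminus S_1)\sqcup((\mathsf{S}\setminus S)\setminus S_2)$, at least one of these two pieces has measure $\geq\tfrac{c'}{4}\Delta\psi_\pi\pi(S)$, and integrating the $\alpha$-flow bound over whichever piece is large (directly, or via reversibility as in case (ii)) gives $K(S,\mathsf{S}\setminus S)\geq\tfrac{\alpha c'}{4}\Delta\psi_\pi\pi(S)$. Combining the three cases yields $\Psi_K=\Omega(\Delta\psi_\pi)$.

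The argument is essentially combinatorial once the kernel-continuity hypothesis has been used to space $S_1$ and $S_2$ apart by more than $\Delta$. The only mildly technical point is the bootstrap from the $\liminf$-definition of $\psi_\pi$ to the macroscopic expansion at scale $\Delta$, which is standard but worth stating explicitly; the implicit constant in the $\Omega(\cdot)$ depends only on the choice of $\alpha<\tfrac{1}{20}$ and the constant $c'$ from the integrated Cheeger inequality.
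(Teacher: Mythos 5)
Your argument is the standard Lov\'asz--Simonovits ``sticky set'' decomposition, and it is essentially the route the paper takes: the paper cites this lemma from \cite{lee2017convergence} without reproving it, but its proof of the restricted analogue (Lemma \ref{lemma:restricted_conductance}) uses exactly your decomposition into the two reluctant sets and the middle set, the same TV-separation step to force the two reluctant sets at least $\Delta$ apart, the same reversibility symmetrization, and the same isoperimetric lower bound on the measure of the middle set. You are in fact more careful than the paper on the passage from the $\liminf$ definition of $\psi_\pi$ to expansion at the macroscopic scale $\Delta$, which the paper asserts in one line.

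One step in your case (iii) is not correct as stated: you claim that if $\pi((S_1)_\epsilon)$ crosses $\tfrac12$ before $\epsilon=\Delta$ then $\pi(W)=\Omega(1)$ and you are done. All that follows from $(S_1)_\Delta\subseteq S_1\cup W$ is $\pi(W)\geq\tfrac12-\pi(S_1)$, which degenerates when $\pi(S)$ (and hence possibly $\pi(S_1)$) is close to $\tfrac12$ --- precisely the regime where all of $S$ can be sticky. The standard repair is to run the growth estimate with $\min\bigl(\pi((S_1)_t),\,1-\pi((S_1)_t)\bigr)$ in place of $\pi((S_1)_t)$, applying the isoperimetric inequality to the complement of $(S_1)_t$ once its measure exceeds $\tfrac12$. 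Since $(S_1)_t$ is disjoint from $S_2$ for all $t\leq\Delta$ and in case (iii) you have $\pi(S_2)>\tfrac14$, the quantity $1-\pi((S_1)_t)$ stays bounded below by $\tfrac14$ throughout, so the expansion rate never degenerates and you obtain $\pi(W)\geq c\,\Delta\psi_\pi\min(\pi(S_1),\pi(S_2))\geq c'\,\Delta\psi_\pi\,\pi(S)$. With that substitution your proof is complete and matches the paper's.
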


Next, we show a related bound on the restricted conductance:
\begin{lemma} [Restricted conductance] \label{lemma:restricted_conductance}
Let $\pi$ be a probability distribution on $\mathsf{S} \subseteq \mathbb{R}^d$, let $V \subseteq \mathsf{S}$, and let $X$ be a time-reversible Markov chain with transition Kernel $K$ and stationary distribution $\pi$.  Suppose that for any $x,y \in \mathsf{S}$ with $\|x-y\|_2\leq \Delta$ we have $\|K(x,\cdot)- K(y,\cdot)\|_{\mathrm{TV}} \leq 0.99$.  Then the restricted conductance of $X$ is $\hat{\Psi}_\pi(V) = \Omega(\Delta \hat{\psi}_{\pi}(V_\Delta))$.
\end{lemma}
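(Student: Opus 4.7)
The strategy is to adapt the standard conductance argument underlying Lemma \ref{lemma:conductance} to the restricted setting, tracking carefully where the Cheeger constant is invoked so that we only need the restricted version on the thickening $V_\Delta$.

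Fix an arbitrary $S\subseteq V$ with $\pi(S)>0$; I need a lower bound for $K(S,\mathsf{S}\setminus S)/\pi(S)$. Following Lovász--Simonovits, define the two ``bad'' sets $A:=\{x\in S:K(x,\mathsf{S}\setminus S)<c\}$ and $B:=\{y\in\mathsf{S}\setminus S:K(y,S)<c\}$ for a small absolute constant $c$ (taking $c=\tfrac{1}{200}$ will be enough). For $x\in A$ and $y\in B$ we have $K(x,S)>1-c$ and $K(y,S)<c$, hence $\|K(x,\cdot)-K(y,\cdot)\|_{\mathrm{TV}}\ge K(x,S)-K(y,S)>1-2c>0.99$. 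The hypothesis then forces $\|x-y\|_2>\Delta$, so $\mathrm{dist}(A,B)>\Delta$ and in particular $A_\Delta\cap B=\emptyset$.

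Next I split on the size of $\pi(A)$. If $\pi(A)\le\tfrac12\pi(S)$, then $\pi(S\setminus A)\ge\tfrac12\pi(S)$ and each $x\in S\setminus A$ sends mass at least $c$ to $\mathsf{S}\setminus S$, giving $K(S,\mathsf{S}\setminus S)\ge \tfrac{c}{2}\pi(S)$, which is already $\Omega(\Delta\hat{\psi}_\pi(V_\Delta))\pi(S)$ (since without loss of generality $\Delta\hat{\psi}_\pi(V_\Delta)=O(1)$). Suppose instead $\pi(A)>\tfrac12\pi(S)$. Since $A\subseteq V\subseteq V_\Delta$ and every intermediate thickening $A_s$ ($s\in[0,\Delta]$) also lies inside $V_\Delta$, the restricted Cheeger inequality applied to each $A_s$ yields the infinitesimal growth $\frac{d}{ds}\pi(A_s)\ge\hat{\psi}_\pi(V_\Delta)\,\pi(A_s)$, and integrating (or using the clean one-shot form) gives
\[
\pi(A_\Delta)-\pi(A)\;\ge\;\Omega(\Delta\,\hat{\psi}_\pi(V_\Delta))\,\pi(A).
\]
Because $A_\Delta\cap B=\emptyset$ and $A\subseteq S$, $B\subseteq\mathsf{S}\setminus S$, we have $A_\Delta\setminus A\subseteq(S\setminus A)\cup((\mathsf{S}\setminus S)\setminus B)$, so
\[
\pi(S\setminus A)+\pi((\mathsf{S}\setminus S)\setminus B)\;\ge\;\Omega(\Delta\,\hat{\psi}_\pi(V_\Delta))\,\pi(A)\;\ge\;\Omega(\Delta\,\hat{\psi}_\pi(V_\Delta))\,\pi(S).
\]

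Finally, time-reversibility gives $\int_S K(x,\mathsf{S}\setminus S)\,d\pi(x)=\int_{\mathsf{S}\setminus S}K(y,S)\,d\pi(y)$, whence
\[
2K(S,\mathsf{S}\setminus S)\;\ge\;\int_{S\setminus A}K(x,\mathsf{S}\setminus S)\,d\pi+\int_{(\mathsf{S}\setminus S)\setminus B}K(y,S)\,d\pi\;\ge\;c\bigl[\pi(S\setminus A)+\pi((\mathsf{S}\setminus S)\setminus B)\bigr],
\]
combining with the previous display and dividing by $\pi(S)$ yields $K(S,\mathsf{S}\setminus S)/\pi(S)=\Omega(\Delta\,\hat{\psi}_\pi(V_\Delta))$. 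Taking the infimum over $S\subseteq V$ completes the proof. The main subtlety, and the reason $V_\Delta$ (rather than $V$) appears on the right, is the integration step: although $A$ itself lies in $V$, the intermediate thickenings $A_s$ need not, so we must invoke the restricted Cheeger constant on the enlarged set $V_\Delta$ that contains every $A_s$ for $s\le\Delta$.
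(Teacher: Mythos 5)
Your proof is correct and follows essentially the same route as the paper's: the Lov\'asz--Simonovits decomposition into the two ``bad'' sets, the $\Delta$-separation forced by the TV hypothesis, the restricted Cheeger bound applied to the thickening of the larger bad set, and the reversibility symmetrization. Your explicit justification of why the intermediate thickenings $A_s$ force $\hat{\psi}_\pi(V_\Delta)$ rather than $\hat{\psi}_\pi(V)$, and your choice $c=\tfrac{1}{200}$ (which actually matches the $0.99$ hypothesis better than the paper's $0.05$), are welcome refinements of the same argument.
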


\begin{proof}
Let $\rho_x = K(x,\cdot)$ be the transition distribution at $x$.  For any $S \subseteq \mathsf{S}$, let
\be
S^{(1)} &=\{x\in S : \rho_x(\mathsf{S} \backslash S) < 0.05 \}\\
S^{(2)} &=\{x\in \mathsf{S} \backslash S : \rho_x(S) < 0.05 \}\\
S^{(3)} &=\mathsf{S} \backslash (S^{(1)} \cup S^{(2)}).
\ee

Then the Euclidean distance between $S_1$ and $S_2$ is at least $\Delta$.

Without loss of generality, we may assume that $\pi(S_1) \geq \frac{1}{2} \pi(S)$, since otherwise we would have $\int_S \rho_x(\mathsf{S} \backslash S) \mathrm{d} \pi(x) = \Omega(1)$, implying a conductance of $\Omega(1)$.

\be
\pi(S^{(3)}) \geq \pi(S^{(1)}_\Delta) -  \pi(S^{(1)}) \geq   \Delta \times \hat{\psi}_\pi(V_\Delta) \times \pi(S^{(1)}).
\ee
We can now bound the restricted conductance:
\be
\int_S \rho_x(\mathsf{S} \backslash S) \mathrm{d} \pi(x) &\stackrel{{\scriptsize \textrm{Reversibility} }}{=}  \frac{1}{2}\left( \int_S \rho_x(\mathsf{S} \backslash S) \mathrm{d} \pi(x) + \int_{\mathsf{S} \backslash S} \rho_x(S) \mathrm{d} \pi(x) \right)\\
&\geq \frac{1}{2} \int_{S^{(3)}} 0.05 \mathrm{d} \pi(x)\\
&= 0.025 \pi(S^{(3)})\\
&\geq 0.025 \Delta \times \hat{\psi}_\pi(V_\Delta) \times \pi(S^{(1)})\\
&\geq 0.0125 \Delta \times \hat{\psi}_\pi(V_\Delta) \times \pi(S).
\ee

Hence, we have
\be
\hat{\Psi}_K(S) = \inf_{S\subseteq \mathsf{S}} \frac{\int_S \rho_x(\mathsf{S} \backslash S) \mathrm{d} \pi(x)}{\pi(S)} \geq 0.0125 \Delta \times \hat{\psi}_\pi(V_\Delta).
\ee
\end{proof}

\section{Bounding the mixing and hitting times as a function of conductance}

\begin{lemma} [Theorem 1.4 in \cite{lovasz1993random}] \label{lemma:mixing}
Let $X$ be a Markov chain with transition kernel $K$ and stationary distribution $\pi$ and initial distribution $\mu_0$. Suppose that $X$ is given a $\beta$-warm start (that is, $\mu_0(x) \leq \beta \pi(x)$ for every $x \in \mathbb{R}^d$).  Then for any $\hat{\epsilon}>0$ we have
\be
\|\mathcal{L}(X_i) - \pi\|_{\mathrm{TV}} \leq \hat{\epsilon}+\sqrt{\frac{\beta}{\hat{\epsilon}}} \left(1-\frac{1}{4} \Psi_K^2 \right)^i \qquad \forall i \in \mathbb{N}.
\ee
\end{lemma}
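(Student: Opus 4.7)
The plan is to invoke the Lovász--Simonovits concave function method. For each $i\geq 0$ define the envelope
$$g_i(s) \;=\; \sup\{\mu_i(A):A\subseteq\mathbb{R}^d,\ \pi(A)=s\},\qquad s\in[0,1].$$
A standard check shows $g_i$ is concave on $[0,1]$, $g_i(0)=0$, $g_i(1)=1$, and $\|\mathcal{L}(X_i)-\pi\|_{\mathrm{TV}}=\sup_{s\in[0,1]}(g_i(s)-s)$. The $\beta$-warm-start hypothesis gives $g_0(s)\le\beta s$.

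First I would establish the one-step ``conductance averaging'' inequality: for every $s\in[0,1]$,
$$g_{i+1}(s) \;\le\; \tfrac12\bigl(g_i(s-\tau)+g_i(s+\tau)\bigr),\qquad \tau:=2\Psi_K\min(s,1-s).$$
This comes from picking a set $A$ that (nearly) achieves $g_{i+1}(s)$ and using reversibility together with the conductance lower bound $K(A,\mathsf{S}\setminus A)\ge\Psi_K\min(\pi(A),\pi(\mathsf{S}\setminus A))$: writing $\mu_{i+1}(A)$ as an average of $\mu_i$ on the inflated and deflated sets $A',A''$ with $\pi(A')=s+\tau$ and $\pi(A'')=s-\tau$ gives exactly the averaging bound. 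Combined with concavity of $g_i$, this averaging bound implies that $g_i-\mathrm{id}$ contracts by a factor $(1-\tfrac14\Psi_K^2)$ per step in the interior of $[0,1]$, at least when compared to concave ``sqrt-shaped'' test functions.

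Second, I would construct a concave comparison function $h:[0,1]\to\mathbb{R}_{\ge 0}$ with the property that $g_0(s)-s\le h(s)$ for all $s$ and that $h$ is preserved up to the factor $(1-\tfrac14\Psi_K^2)$ by the averaging $h(s)\mapsto\tfrac12(h(s-\tau)+h(s+\tau))$. The natural choice truncates: $h(s)=0$ on $[0,\hat\epsilon]\cup[1-\hat\epsilon,1]$, and on the interior is a scaled square-root profile with amplitude at most $\sqrt{\beta/\hat\epsilon}$; the warm-start bound $g_0(s)-s\le(\beta-1)s$ is dominated by this $h$ plus the constant $\hat\epsilon$, because inside $[\hat\epsilon,1-\hat\epsilon]$ the amplitude $\sqrt{\beta/\hat\epsilon}$ multiplied by the sqrt profile beats $\beta s$, while outside that interval we absorb the excess into the additive $\hat\epsilon$.

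Third, iterating the averaging inequality and the contraction of $h$ gives $g_i(s)\le s+\hat\epsilon+\sqrt{\beta/\hat\epsilon}\,(1-\tfrac14\Psi_K^2)^i$; taking the supremum in $s$ yields the stated TV bound. The main obstacle is the second step: one needs to verify that the averaging operator really does shrink the chosen $h$ by the factor $1-\tfrac14\Psi_K^2$ uniformly in $s\in[\hat\epsilon,1-\hat\epsilon]$, using the precise form $\tau=2\Psi_K\min(s,1-s)$ and the strict concavity of $\sqrt{\cdot}$, and to handle the boundary regions cleanly so that the truncation parameter $\hat\epsilon$ enters additively exactly as in the statement rather than contaminating the geometric factor.
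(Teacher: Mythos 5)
The paper does not prove this lemma at all---it is quoted verbatim (as Theorem 1.4 of Lov\'asz--Simonovits) and used as a black box---so there is no internal proof to compare against; what you have written is essentially a reconstruction of the original Lov\'asz--Simonovits argument, and the overall architecture (the concave envelope $g_i$, the one-step averaging inequality, a truncated square-root comparison curve) is the right one.

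There is, however, one genuine gap, and it sits exactly at the step you treat as routine. The averaging inequality
$g_{i+1}(s)\le\tfrac12\bigl(g_i(s-\tau)+g_i(s+\tau)\bigr)$ with $\tau=2\Psi_K\min(s,1-s)$ does \emph{not} follow from reversibility plus the conductance lower bound alone: the Lov\'asz--Simonovits proof decomposes $\mu_{i+1}(A)$ as the average of two functionals of $\mu_i$, each dominated by $g_i$ at a shifted argument, and that decomposition requires the chain to be lazy, i.e.\ $K(x,\{x\})\ge\tfrac12$ for all $x$ (equivalently, nonnegativity of the transition operator's spectrum). Without some holding probability the statement itself is false: a two-state chain with $K(a,b)=K(b,a)=1$ has $\Psi_K=1$ and a $2$-warm start $\delta_a$, yet $\|\mathcal{L}(X_i)-\pi\|_{\mathrm{TV}}=\tfrac12$ for all $i$. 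So your proof cannot close as written; you must either add laziness as a hypothesis (which is how the cited theorem is actually stated, and which is consistent with how the lemma is deployed downstream, e.g.\ the explicitly lazy chain in the optimization proof) or replace the averaging step with an argument that uses positivity of the spectrum. The remaining steps are fine in outline, though in the second step you should take the comparison curve to be $c(s)=\hat\epsilon+\sqrt{\beta/\hat\epsilon}\,\sqrt{\min(s,1-s)}$ and verify the initial domination $g_0(s)-s\le\min(\beta s,\,1-s)\le c(s)$ by splitting at $s=\hat\epsilon/\beta$; the square-root profile is exactly invariant under the averaging up to the factor $1-\tfrac14\Psi_K^2$, and the additive $\hat\epsilon$ passes through untouched because constants are fixed by the averaging operator.
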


\begin{lemma} [Lemma 11 in \cite{zhang2017hitting}] \label{lemma:hitting}
Let $X$ be a time-reversible lazy Markov chain on $\mathsf{S}\subseteq \mathbb{R}^d$ with stationary distribution $\pi$ with initial distribution $\mu_0$. Let $\mathcal{U} \subseteq \mathsf{S}$. Suppose that $X$ is given a $\beta$-warm start on $\mathsf{S} \backslash \mathcal{U}$ (that is, $\mu_0(x) \leq \beta \pi(x)$ for every $x \in \mathsf{S} \backslash \mathcal{U}$).  Then for any $\delta>0$, the hitting time of $X$ to the set $\mathcal{U}$ is
\be
\inf\{i :X_i \in \mathcal{U}\} \leq \frac{4 \log(\frac{\beta}{\delta})}{\hat{\Psi}_K^2(\mathsf{S}\backslash \mathcal{U})},
\ee
with probability at least $1-\delta$.
\end{lemma}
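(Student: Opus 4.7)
The plan is to track the probability mass of the chain that has not yet entered $\mathcal{U}$ as a function of time, and show it decays geometrically at a rate controlled by the restricted conductance $\hat{\Psi}_K(\mathsf{S}\setminus\mathcal{U})$. To make this precise I would introduce the sub-Markov kernel $Q(x,A) := K(x,A)$ for $x \in \mathsf{S}\setminus\mathcal{U}$ and $A \subseteq \mathsf{S}\setminus\mathcal{U}$, and let $\nu_i := \mu_0 Q^i$ denote the sub-probability measure giving mass to trajectories that have avoided $\mathcal{U}$ through step $i$. Then $\mathbb{P}(\inf\{i : X_i \in \mathcal{U}\} > i) = \nu_i(\mathsf{S}\setminus\mathcal{U})$, so it suffices to show $\nu_i(\mathsf{S}\setminus\mathcal{U}) \le \beta(1-\hat{\Psi}_K^2/4)^i$, and then choose $i$ so that this quantity is at most $\delta$.

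The heart of the argument is a Lovász--Simonovits / Cheeger-type inequality for the killed chain. The idea is that the restricted conductance hypothesis says that for every measurable $S \subseteq \mathsf{S}\setminus\mathcal{U}$ with $\pi(S) > 0$ we have $\int_S K(x, \mathsf{S}\setminus S)\, d\pi(x) \ge \hat{\Psi}_K(\mathsf{S}\setminus\mathcal{U})\pi(S)$, and the ``leak'' from $\mathsf{S}\setminus S$ includes mass going into $\mathcal{U}$. Using reversibility of $K$ with respect to $\pi$ and the laziness assumption (which makes $Q$ a positive self-adjoint sub-stochastic operator on $L^2(\pi)$), one can run the standard Cheeger inequality argument on the quadratic form associated to $I-Q$ to conclude that the top eigenvalue $\lambda$ of $Q$ satisfies $\lambda \le 1 - \hat{\Psi}_K^2(\mathsf{S}\setminus\mathcal{U})/4$. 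Equivalently, a single-step contraction can be proved directly by a Lovász--Simonovits ``drop-by-$\sqrt{\epsilon}\hat{\Psi}_K$'' argument as in Lemma \ref{lemma:mixing}, applied to the $h$-function of $\nu_i$ viewed as a density with respect to $\pi$ restricted to $\mathsf{S}\setminus\mathcal{U}$.

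Combining this contraction with the $\beta$-warm start gives
\[
\nu_i(\mathsf{S}\setminus\mathcal{U}) \;\le\; \beta \,\pi(\mathsf{S}\setminus\mathcal{U})\,\lambda^i \;\le\; \beta\bigl(1-\tfrac{1}{4}\hat{\Psi}_K^2(\mathsf{S}\setminus\mathcal{U})\bigr)^i,
\]
and setting the right-hand side equal to $\delta$ and using $\log(1-t)\le -t$ yields $i \le \frac{4\log(\beta/\delta)}{\hat{\Psi}_K^2(\mathsf{S}\setminus\mathcal{U})}$, which is exactly the claimed hitting-time bound with probability at least $1-\delta$.

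The main technical obstacle is step two: proving that the restricted conductance hypothesis upgrades to the spectral / one-step-contraction statement for the sub-Markov $Q$. Unlike the usual ergodic Cheeger inequality, here the relevant ``stationary'' measure is $\pi$ restricted to $\mathsf{S}\setminus\mathcal{U}$, which is not invariant for $Q$ (mass genuinely escapes into $\mathcal{U}$), so one has to correctly account for the boundary contribution --- the flux into $\mathcal{U}$ --- when defining the Dirichlet form and applying the standard ``cut at a level set'' argument. Laziness and reversibility are used precisely to handle the sign issues and to let us compare $\|Q f\|_{L^2(\pi)}$ to $\|f\|_{L^2(\pi)}$ via a Cheeger-type isoperimetric estimate whose numerator is the flux leaving any level set, which is bounded below by $\hat{\Psi}_K(\mathsf{S}\setminus\mathcal{U})$ times the measure of that level set.
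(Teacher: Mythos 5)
The paper does not actually prove this lemma --- it is imported verbatim as Lemma~11 of \cite{zhang2017hitting} and used as a black box --- so there is no in-paper argument to compare yours against. That said, your plan is the standard and essentially correct route, and it matches in spirit how the cited source establishes the result: track the surviving (sub-probability) mass $\nu_i = \mu_0 Q^i$ of the chain killed upon entering $\mathcal{U}$, show it contracts geometrically at rate $1-\Theta(\hat{\Psi}_K^2(\mathsf{S}\setminus\mathcal{U}))$, and convert the warm start into the initial factor of $\beta$. You correctly identify the one step that needs real work: upgrading the restricted-conductance hypothesis to a one-step contraction for the sub-Markov operator $Q$. Your observation that the relevant isoperimetric quantity has no ``$\pi(S)\le 1/2$'' restriction precisely because the flux $K(S,\mathsf{S}\setminus S)$ already includes the leakage into $\mathcal{U}$ is the right Dirichlet-boundary picture, and laziness plus reversibility do make $Q$ a positive self-adjoint contraction so that the top eigenvalue controls $\|Q^i\|_{L^2(\pi)}$. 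Two small caveats: (i) \cite{zhang2017hitting} carry out this step via a Lov\'asz--Simonovits curve argument on the killed chain rather than a spectral Cheeger bound, which is how they obtain the clean constant $4$; if you go through the Dirichlet--Cheeger inequality instead, you should verify that your normalization of the Cheeger constant for lazy chains actually yields $\lambda\le 1-\tfrac{1}{4}\hat{\Psi}_K^2$ rather than a weaker constant, since otherwise you only recover the bound up to an absolute constant factor; (ii) in the Cauchy--Schwarz step the warm start gives $\|h_0\|_{L^2(\pi)}\le\sqrt{\beta}$, so the prefactor is really $\sqrt{\beta\,\pi(\mathsf{S}\setminus\mathcal{U})}\le\beta$, which is harmless but worth writing down. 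Neither caveat is a genuine gap in the approach.
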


\section{Exit probability from good set}
\begin{lemma} \label{lemma:good_set}
Let $x\sim \pi$, $v\sim N(0,I_d)$.  Then $P((x,v)\in G) \geq 1- N r e^{-\frac{16\alpha^2 -1}{8}} -e^{-\frac{R^2-d}{8}} - N e^{-\frac{\mathsf{a}}{\sqrt{d}} \frac{R}{\sqrt{M}}}$,
where  $N = 50\lceil (R+1) M^\frac{1}{2} \eta \rceil$.
\end{lemma}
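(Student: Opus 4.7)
The event $\{(x,v)\in G\}$ is the intersection of three conditions on the Hamiltonian flow $(q_t,p_t)$ starting at $(x,v)$, and the three terms in the stated bound correspond to union-bounding their failures. The simplest is the velocity condition: $\|v\|_2^2\sim\chi^2_d$, and the standard sub-exponential chi-square tail $\mathbb{P}(\|v\|_2^2-d>s)\le e^{-s/8}$ with $s=R^2-d$ yields $\mathbb{P}(\|v\|_2>R)\le e^{-(R^2-d)/8}$. The remaining two terms bound the failure of the momentum-projection and position conditions, respectively, on the event $\|v\|_2\le R$.

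\textbf{Projected momentum, $Nre^{-(16\alpha^2-1)/8}$.} I discretize $[0,T]$ into $N=50\lceil(R+1)M^{1/2}\eta\rceil$ grid points $t_k=kT/N$, with spacing chosen so that $MT/N$ is at most a small absolute constant. Within each sub-interval, Assumption~\ref{assumption:gradient} combined with $\|\mathsf{X}_j\|_2=1$ yields
\[
|\mathsf{X}_j^\top (p_t-p_{t_k})| = \Big|\int_{t_k}^t \mathsf{X}_j^\top\nabla U(q_\tau)\,d\tau\Big| \le M|t-t_k|,
\]
so $\sup_{t\in[t_k,t_{k+1}]}|\mathsf{X}_j^\top p_t|$ is controlled by $|\mathsf{X}_j^\top p_{t_k}|$ up to this buffer. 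At each grid point, $\mathsf{X}_j^\top p_{t_k}=\mathsf{X}_j^\top v-\int_0^{t_k}\mathsf{X}_j^\top\nabla U(q_s)\,ds$ is a unit Gaussian plus a drift of magnitude at most $Mt_k$. Applying the Gaussian tail $\mathbb{P}(|\mathsf{X}_j^\top v|>c)\le e^{-c^2/2}$ at the cutoff reduced by the two buffers, and union-bounding over the $Nr$ grid-direction pairs, produces the stated term; the constant $(16\alpha^2-1)/8$ in the exponent comes from this balancing.

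\textbf{Position, $Ne^{-(\mathsf{a}/\sqrt d)R/\sqrt M}$.} Assumption~\ref{assumption:tails} gives $\mathbb{P}(\|x-x^\star\|_2>R/\sqrt M)\le e^{-(\mathsf{a}/\sqrt d)R/\sqrt M}$ for $x\sim\pi$. To propagate this bound to the trajectory, I use Hamiltonian conservation $\tfrac12\|p_t\|^2+U(q_t)=\tfrac12\|v\|^2+U(x)$ together with $|U(q_t)-U(x)|\le M\|q_t-x\|_2$ (Assumption~\ref{assumption:gradient}) to get $\|p_t\|_2^2\le\|v\|_2^2+2M\|q_t-x\|_2$, and then apply a Gronwall-type estimate to $\|q_t-x\|_2\le\int_0^t\|p_s\|_2\,ds$ (using $\|v\|_2\le R$) to obtain $\|q_t-x\|_2\le(3/\sqrt 2-1)R/\sqrt M$ throughout $[0,T]$; the triangle inequality delivers $\|q_t-x^\star\|_2\le(3/\sqrt 2)R/\sqrt M$. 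A union bound of the initial tail event over the $N$ grid points, together with a sub-interval bound on $\|q_t-q_{t_k}\|_2$, produces the stated term.

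\textbf{Main obstacle.} The delicate part is constant-tracking: matching the two buffers (the sub-interval fluctuation $MT/N$ and the grid-point drift $Mt_k$) so that the Gaussian exponent comes out exactly to $(16\alpha^2-1)/8$, and ensuring the Gronwall estimate yields exactly the prefactor $3/\sqrt 2$. The precise discretization $N=50\lceil(R+1)M^{1/2}\eta\rceil$ is calibrated to make both balances work, essentially by trading off the $M$-scale drift in $p_t$ against the $R$-scale contribution of $v$ to $q_t$.
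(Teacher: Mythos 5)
Your decomposition into three failure events and the union bound over a grid matches the paper's structure, and your treatment of the $\|v\|_2\le R$ condition is the same chi-square tail the paper uses. But there is a genuine gap in how you handle the projected-momentum term, and it is precisely the step where the paper's key idea lives. You write $\mathsf{X}_j^\top p_{t_k}=\mathsf{X}_j^\top v-\int_0^{t_k}\mathsf{X}_j^\top\nabla U(q_s)\,ds$ and propose to treat this as ``a unit Gaussian plus a drift of magnitude at most $Mt_k$,'' absorbing the drift as a buffer in the Gaussian tail. That drift is not small: over the full interval it is of order $M\eta$ (or, using $\|\nabla U(q_s)\|_2\le M\|q_s-x^\star\|_2\lesssim \sqrt{M}R$, of order $\sqrt{M}R\eta$), and under the paper's step-size choice $\eta\lesssim R^{-2/3}M^{-1/2}\alpha^{-1}$ this is of order $R^{1/3}$, i.e.\ polynomially large in $d$, whereas the threshold in the good set is only $\alpha=O(\log(\cdot))$. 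A buffer of that size swamps the cutoff entirely, so your balancing cannot produce the exponent $\frac{16\alpha^2-1}{8}$. The paper avoids this by using the fact that the Hamiltonian flow preserves the product measure $\pi\otimes N(0,I_d)$: since $(x,v)\sim\pi\otimes N(0,I_d)$, at every fixed grid time $t_k$ one has $p_{t_k}\sim N(0,I_d)$ \emph{exactly} (and $q_{t_k}\sim\pi$), so no cumulative drift correction is needed and the Gaussian tail applies directly at each grid point. The only deterministic correction is the within-one-grid-interval fluctuation, which is $O(1)$ because the grid spacing $\eta/N\approx 1/(R\sqrt{M})$ is calibrated against $\|\nabla U\|_2\lesssim\sqrt{M}R$. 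Your proposal never invokes this invariance, and without it the argument does not close.

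The same issue touches your position term. The paper applies the tail bound of Assumption \ref{assumption:tails} to $q_{t_k}\sim\pi$ at each of the $N$ grid points (which is why the last term carries the factor $N$) and then only interpolates within a single grid interval. Your plan applies the tail only at $t=0$ and propagates by a Gronwall estimate; that could in principle be made to work (and would even drop the factor $N$), but as written it is inconsistent — you then say you union-bound ``the initial tail event over the $N$ grid points,'' which does not parse if only the $t=0$ event is used — and the constants $3/\sqrt{2}$ you promise to recover are not actually derived. The essential missing ingredient throughout is the stationarity of $\pi\otimes N(0,I_d)$ under the Hamiltonian flow; I would rework the proof around that fact.
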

\begin{proof}

Let $\mathcal{I} := \{\frac{\eta}{N}, 2 \frac{\eta}{N}, \ldots,  N \frac{\eta}{N} \}$, where $N = \lceil R M^\frac{1}{2} \eta \rceil$.  Then $p_t(x,v) \sim N(0,I_d)$ for all $t \in \mathcal{I}$.  Therefore by the Hanson-wright inequality we have that
\be
\mathbb P(\|\mathsf{X}^\top p_t(x,v)\|_\infty \leq \gamma) \leq r e^{-\frac{\gamma^2 -1}{8}} \quad \quad  \textrm{ for } \gamma> \sqrt{2}.
\ee
and hence that
\be \label{eq:invariance1}
\mathbb P(\max_{t\in \mathcal{I}} \|\mathsf{X}^\top p_t(x,v)\|_\infty \leq \gamma) \leq N r e^{-\frac{\gamma^2 -1}{8}} \quad \quad  \textrm{ for } \gamma> \sqrt{2}.
\ee

By the Hanson-Wright inequality, $\mathbb{P}[\|v\|>\xi]  \leq e^{-\frac{\xi^2-d}{8}} \quad \quad  \textrm{ for } \xi> \sqrt{2d}$.

Suppose that $\|q_t(x,v) - x^\star\|_2 \leq \frac{R}{\sqrt{M}}$ for all $t\in \mathcal{I}$ (by Assumption \ref{assumption:tails}, this occurs w.p. at least $1-Ne^{-\frac{\mathsf{a}}{\sqrt{d}} \frac{R}{\sqrt{M}}}$).

Then $\mathcal{H}(x,v) = U(x) + \frac{1}{2}\|v\|_2^2 \leq \mathrm{pi}^2 M\|x -x^\star\|_2^2  + \frac{1}{2}R^2  \leq 11 R^2$
Thus, $\|p_t(x,v)\|_2 \leq \sqrt{22} R$ for all $t \in \mathbb{R}$.  Thus,
\be
\|q_t(x,v) - x^\star\|_2 \leq \frac{R}{\sqrt{M}} + \frac{\eta}{N} \sqrt{22} R  \leq 2\frac{R}{\sqrt{M}}  \qquad \forall t \in \mathbb{R}.
\ee

Therefore, by the conservation of the Hamiltonian, with probability at least $1-e^{-\frac{R^2-d}{8}} -Ne^{-\frac{\mathsf{a}}{\sqrt{d}} \frac{R}{\sqrt{M}}}$,  for all $t\in \mathbb{R}$  we have $\|q_t(x,v) - x^\star\|_2 \leq 2 \frac{R}{\sqrt{M}}$, and hence that $\|\nabla U(q_t)\|_2 \leq 2M \frac{R}{\sqrt{M}}$.

Thus, since $\|p_{t+\frac{1}{M^\frac{1}{2}\sqrt{2d}}} - p_t\|_2 \leq \|\nabla U(q_t)\|_2 \times \frac{1}{RM^\frac{1}{2}} \leq 2 M  \frac{R}{\sqrt{M}} \times \frac{1}{R M^\frac{1}{2}} \leq 2$, by equation \eqref{eq:invariance1} we have
\be \label{eq:invariance2}
\mathbb P(\max_{t\in [0,\eta]} \|\mathsf{X}^\top p_t(x,v)\|_\infty \leq \gamma +2) \leq N r e^{-\frac{\gamma^2 -1}{8}} \quad \quad  \textrm{ for } \gamma > \sqrt{2}.
\ee

Thus, $\mathbb P((x,v) \in G) \geq 1- N r e^{-\frac{16 \alpha^2 -1}{8}} -e^{-\frac{R^2-d}{8}} - e^{-\frac{\mathsf{a}}{\sqrt{d}} \frac{R}{\sqrt{M}}}$.

\end{proof}

\section{Conductance bounds}

Let $\hat{a}_{z,v} : \mathbb{R}^d \rightarrow [0,1]$, and let $a_z = \mathbb{E}_v{\sim N(0,I_d)}[a_{z,v}]$.  Let $V_0,V_1, \ldots \sim N(0,I_d)$ i.i.d. and consider the following Markov chain:

\be
\mathsf{Z}_{i+1} = \begin{cases}
\mathsf{Z}_i+ \eta V_i - \frac{1}{2} \eta^2 \nabla U(\mathsf{Z}_i) \qquad \textrm{ with probability } \hat{a}_{\mathsf{Z}_i,V_i}\\
\mathsf{Z}_i \qquad \textrm{otherwise}.
\end{cases}
\ee
and let $K_{\mathsf{Z}}$ denote the probability transition Kernel of $\mathsf{Z}$.
  Let $\rho_z$ be the probability distribution of the next point in this Markov chain given that the current point is $z \in \mathbb{R}^d$, that is, $\rho_z = K_{\mathsf{Z}}(z,\cdot)$.

\begin{lemma} \label{lemma:conductance1}
   Suppose that for some $\eta>0$ and $x,y \in \mathbb{R}^d$ we have $a_x, a_y \geq 0.99$ and $\|x-y\|_2 \leq  \frac{1}{100}(\frac{1}{2}\eta^{-1} + \frac{1}{4} \eta M)^{-1}$.  Then we have $\|\rho_x - \rho_y\|_{\mathrm{TV}} < \frac{3}{100}$.
\end{lemma}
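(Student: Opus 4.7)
The plan is to bound the total variation distance between $\rho_x$ and $\rho_y$ by exhibiting an explicit coupling of the one-step chains started at $x$ and $y$, and then invoking the standard identity $\|\rho_x-\rho_y\|_{\mathrm{TV}} \leq \mathbb{P}[X \neq Y]$ that holds for any joint law $(X,Y)$ with the prescribed marginals. The proposal drawn from state $z$ is $W_z = z + \eta V_z - \tfrac{1}{2}\eta^2 \nabla U(z)$ with $V_z \sim N(0,I_d)$, so marginally $W_z$ is distributed as $P_z := N\bigl(z - \tfrac{1}{2}\eta^2 \nabla U(z),\, \eta^2 I_d\bigr)$. First I couple the two proposals maximally so that $\mathbb{P}[W_x \neq W_y] = \|P_x - P_y\|_{\mathrm{TV}}$; each coupled $W_z$ is in one-to-one correspondence with a velocity $V_z = (W_z - z + \tfrac{1}{2}\eta^2 \nabla U(z))/\eta$ that is marginally $N(0,I_d)$, so the acceptance statistics of each chain are preserved. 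Given $(V_x, V_y)$ I then independently accept each proposal with probabilities $\hat{a}_{x,V_x}$ and $\hat{a}_{y,V_y}$, which gives overall acceptance probabilities $a_x$ and $a_y$ respectively.

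Under this coupling, the next states of the two chains agree if and only if $W_x = W_y$ \emph{and} both chains accept. A union bound therefore gives
\[
\mathbb{P}[\text{next states disagree}] \;\leq\; \|P_x - P_y\|_{\mathrm{TV}} + (1-a_x) + (1-a_y) \;\leq\; \|P_x - P_y\|_{\mathrm{TV}} + \tfrac{2}{100},
\]
using the hypothesis $a_x, a_y \geq 0.99$. To bound the remaining term, I use Pinsker's inequality applied to two Gaussians sharing the covariance $\eta^2 I_d$, which reduces to
\[
\|P_x - P_y\|_{\mathrm{TV}} \;\leq\; \frac{1}{2\eta}\, \bigl\|\,(x-y) - \tfrac{1}{2}\eta^2 (\nabla U(x) - \nabla U(y))\, \bigr\|_2 \;\leq\; \Bigl(\tfrac{1}{2}\eta^{-1} + \tfrac{1}{4}\eta M\Bigr) \|x-y\|_2,
\]
where the second inequality uses the $M$-Lipschitz property of $\nabla U$. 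The step-size hypothesis on $\|x-y\|_2$ then bounds the right-hand side by $\tfrac{1}{100}$, and combining gives $\|\rho_x-\rho_y\|_{\mathrm{TV}} \leq \tfrac{3}{100}$; strict inequality follows because Pinsker is strict whenever the two Gaussians differ, which is the only nontrivial case.

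There is no real obstacle in the argument — it is a short coupling computation — but the key conceptual point worth emphasizing is that the result needs essentially nothing about the acceptance function $\hat{a}_{z,v}$ beyond its expectation $a_z$. In particular I avoid having to control how $\hat{a}$ varies with the base point $z$, and the full content of the proof reduces to the Pinsker bound on Gaussian proposals plus two Markov-type rejection bounds. This is precisely what makes the lemma flexible enough to plug into the conductance bound of Lemma~\ref{lemma:conductance} in the sampling analysis.
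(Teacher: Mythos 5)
Your proof is correct and follows essentially the same route as the paper: bound the total variation distance between the two Gaussian proposal laws by $\bigl(\tfrac{1}{2}\eta^{-1}+\tfrac{1}{4}\eta M\bigr)\|x-y\|_2 \le \tfrac{1}{100}$ (the paper cites a Gaussian TV bound from Devroye et al.\ where you use Pinsker plus the Gaussian KL formula, yielding the identical estimate), then add $\tfrac{2}{100}$ for the two rejection probabilities. Your explicit maximal coupling of the proposals with conditionally independent accept/reject steps is a welcome rigorization of the paper's terse final line, but it is the same argument in substance.
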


\begin{proof}

For any $z\in \mathbb{R}^d$, let $\gamma_z := z+ \eta v - \frac{1}{2} \eta^2 \nabla U(x)$ where $v \sim N(0,I_d)$.

Then $\gamma_z \sim N(z - \frac{1}{2} \eta^2 \nabla U(z), \eta^2 I_d)$.

Therefore, by Theorem 1.3 in \cite{devroye2018total}, we have
\be
\|\mathcal{L}(\gamma_x) - \mathcal{L}(\gamma_y)\|_{\mathrm{TV}} &\leq \frac{\|x-y - \frac{1}{2} \eta^2( \nabla U(x) - \nabla U(y))\|_2}{2 \eta}\\
&\leq \frac{\|x-y\|_2 + \frac{1}{2} \eta^2\| \nabla U(x) - \nabla U(y)\|_2}{2 \eta}\\
&\leq \frac{\|x-y\|_2 + \frac{1}{2} \eta^2 M\| x - y\|_2}{2 \eta}\\
&=(\frac{1}{2}\eta^{-1} + \frac{1}{4} \eta M)\|x-y\|_2.
\ee

Hence, since $\|x-y\|_2 \leq  \frac{1}{100}(\frac{1}{2}\eta^{-1} + \frac{1}{4} \eta M)^{-1}$, we have
\be
\|\mathcal{L}(\gamma_x) - \mathcal{L}(\gamma_y)\|_{\mathrm{TV}} \leq \frac{1}{100}.
\ee

Thus, since $a_x, a_y \geq 0.99$, we have
\be
\|\rho_x - \rho_y\|_{\mathrm{TV}} \leq \frac{1}{100} +\frac{2}{100}< \frac{3}{100}.
\ee
\end{proof}

\begin{lemma} \label{lemma:conductance2}
Let $\pi$ be the distribution $\pi(x) \propto e^{-U(x)}$.  Suppose that for some $\eta>0$ and any $x,y \in \mathbb{R}^d$ the acceptance probability from both $x$ and $y$ is $a_x, a_y\geq 0.97$.  Then the conductance $\Psi_{K_{\mathsf{Z}}}$ is $\Omega((\frac{1}{2}\eta^{-1} + \frac{1}{4} \eta M)^{-1} \psi_\pi)$.
\end{lemma}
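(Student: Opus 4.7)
}
The plan is to reduce this statement to the combination of the one-step coupling estimate in Lemma \ref{lemma:conductance1} and the general conductance-from-Cheeger bound in Lemma \ref{lemma:conductance}. Set $\Delta := \frac{1}{100}\bigl(\tfrac{1}{2}\eta^{-1} + \tfrac{1}{4}\eta M\bigr)^{-1}$, which is exactly the scale appearing in Lemma \ref{lemma:conductance1}.

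First, I would upgrade Lemma \ref{lemma:conductance1} from its stated hypothesis $a_x,a_y\ge 0.99$ to the present hypothesis $a_x,a_y\ge 0.97$. Tracing through that lemma's argument, the only place the $0.99$ enters is in the final triangle-inequality step: writing $\rho_x$ as a mixture of the Gaussian proposal distribution $\mathcal{L}(\gamma_x)$ (with weight $a_x$) and the point mass at $x$ (with weight $1-a_x$), and similarly for $\rho_y$, one obtains
\be
\|\rho_x-\rho_y\|_{\mathrm{TV}} \;\le\; \|\mathcal{L}(\gamma_x)-\mathcal{L}(\gamma_y)\|_{\mathrm{TV}} + (1-a_x) + (1-a_y).
\ee
The Gaussian TV bound from Lemma \ref{lemma:conductance1} gives $\|\mathcal{L}(\gamma_x)-\mathcal{L}(\gamma_y)\|_{\mathrm{TV}} \le 1/100$ whenever $\|x-y\|_2\le \Delta$, so with $a_x,a_y\ge 0.97$ we conclude $\|\rho_x-\rho_y\|_{\mathrm{TV}} \le 1/100 + 3/100 + 3/100 = 7/100$.

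Next, I would invoke Lemma \ref{lemma:conductance}, which requires a time-reversible Markov chain with stationary distribution $\pi$ and the one-step TV closeness $\|K(x,\cdot)-K(y,\cdot)\|_{\mathrm{TV}}\le 0.9$ whenever $\|x-y\|_2\le \Delta$. Time-reversibility of $K_{\mathsf{Z}}$ with respect to $\pi$ comes from the Metropolis--Hastings construction underlying the acceptance function $\hat{a}_{z,v}$ (this is the standard detailed balance verification for MALA-style proposals and is implicit in the setup of this section). Combined with the previous paragraph's bound $7/100 < 0.9$, Lemma \ref{lemma:conductance} yields
\be
\Psi_{K_{\mathsf{Z}}} \;=\; \Omega(\Delta\,\psi_\pi) \;=\; \Omega\!\left(\bigl(\tfrac{1}{2}\eta^{-1} + \tfrac{1}{4}\eta M\bigr)^{-1}\psi_\pi\right),
\ee
which is exactly the claim.

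I do not anticipate a real obstacle here; the whole lemma is essentially a packaging of Lemmas \ref{lemma:conductance1} and \ref{lemma:conductance}. The only subtlety is the mild slack between the $0.99$ and $0.97$ acceptance thresholds, which is absorbed harmlessly since we only need the final TV closeness to be below the $0.9$ threshold of Lemma \ref{lemma:conductance}, and a secondary bookkeeping point is checking that $K_{\mathsf{Z}}$ is indeed $\pi$-reversible so that Lemma \ref{lemma:conductance} applies.
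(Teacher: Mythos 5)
Your proposal is correct and follows essentially the same route as the paper, which simply states that the lemma follows by applying Lemma \ref{lemma:conductance1} to Lemma \ref{lemma:conductance}. Your additional care in tracking the $0.97$ versus $0.99$ acceptance thresholds and in noting the $\pi$-reversibility of $K_{\mathsf{Z}}$ fills in details the paper leaves implicit, but does not change the argument.
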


\begin{proof}
This follows by applying Lemma \ref{lemma:conductance1} to Lemma \ref{lemma:conductance}.
\end{proof}

Now consider the Markov chain $\mathsf{\hat{Z}}$ defined by the recursion
\be
\mathsf{\tilde{Z}}_{i+1} &= \begin{cases}
\mathsf{\hat{Z}}_i+ \eta V_i - \frac{1}{2} \eta^2 \nabla U(\mathsf{\hat{Z}}_i) \qquad \textrm{ with probability } \hat{a}_{\mathsf{\hat{Z}}_i,V_i}\\
\mathsf{\hat{Z}}_i \qquad \textrm{otherwise}.\\
\end{cases}\\
\mathsf{\hat{Z}}_{i+1} &= \begin{cases}
\tilde{Z}_i \qquad \textrm{ if } \tilde{Z}_i \in \mathsf{S}\\
\mathsf{\hat{Z}}_i \qquad \textrm{otherwise}.
\end{cases}
\ee
and let $K_{\mathsf{\tilde{Z}}}$ denote the probability transition Kernel of $\mathsf{\tilde{Z}}$.
\begin{lemma} \label{lemma:restricted_conductance2}
Let $\pi$ be the distribution $\pi(x) \propto e^{-U(x)} \times \mathbbm{1}_\mathsf{S}(x)$.   Suppose that for some $\eta>0$ and any $x,y \in \mathbb{R}^d$ that $a_x, a_y \geq 0.99$.  Let $v\sim N(0,I_d)$, and suppose that $x+ \eta v - \frac{1}{2} \eta^2 \nabla U(x) \in \mathsf{S}$ with probability at least $\frac{1}{10}$.  Then for any subset $V \subseteq \mathsf{S}$, the restricted conductance is $\hat{\Psi}_{K_{\mathsf{\tilde{Z}}}}(V) = \Omega(\Delta \hat{\psi}_{\pi}(V_\Delta))$, where $\Delta = \frac{1}{100}(\frac{1}{2}\eta^{-1} + \frac{1}{4} \eta M)^{-1}$.
\end{lemma}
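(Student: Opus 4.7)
The plan is to reduce the claim to Lemma \ref{lemma:restricted_conductance}, which already turns one-step transition-kernel TV closeness into a lower bound on restricted conductance in terms of the restricted Cheeger constant. The main task therefore is to verify the TV hypothesis
\[
\|K_{\mathsf{\tilde{Z}}}(x,\cdot) - K_{\mathsf{\tilde{Z}}}(y,\cdot)\|_{\mathrm{TV}} \leq 0.99
\]
for every pair $x, y \in \mathsf{S}$ with $\|x - y\|_2 \leq \Delta$, and then invoke Lemma \ref{lemma:restricted_conductance}.

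First I would decompose the constrained kernel into a ``productive'' part on $\mathsf{S}$ and an extra self-loop that accounts for proposals accepted by the MALA step but lying outside $\mathsf{S}$. Writing $\gamma_x(v) := x + \eta v - \frac{1}{2}\eta^2 \nabla U(x)$, letting $\rho_x := K_{\mathsf{Z}}(x,\cdot)$ be the unconstrained MALA kernel, and setting
\[
r_x \;:=\; \mathbb{E}_{v \sim N(0,I_d)}\!\left[a_{x,v}\,\mathbbm{1}\{\gamma_x(v) \notin \mathsf{S}\}\right],
\]
a direct unfolding of the definition of $\mathsf{\hat{Z}}$ shows that for $x \in \mathsf{S}$ and any Borel $A$,
\[
K_{\mathsf{\tilde{Z}}}(x, A) \;=\; \rho_x(A \cap \mathsf{S}) \;+\; r_x\,\delta_x(A).
\]
The triangle inequality then yields
\[
\|K_{\mathsf{\tilde{Z}}}(x,\cdot) - K_{\mathsf{\tilde{Z}}}(y,\cdot)\|_{\mathrm{TV}} \;\leq\; \|\rho_x(\cdot \cap \mathsf{S}) - \rho_y(\cdot \cap \mathsf{S})\|_{\mathrm{TV}} \;+\; \|r_x \delta_x - r_y \delta_y\|_{\mathrm{TV}}.
\]

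Next I would bound each piece. The first is dominated by $\|\rho_x - \rho_y\|_{\mathrm{TV}}$, which under the hypotheses $a_x, a_y \geq 0.99$ and $\|x - y\|_2 \leq \Delta$ is at most $3/100$ by Lemma \ref{lemma:conductance1}. For the second, when $x \neq y$ the point masses at $x$ and $y$ have disjoint support, so $\|r_x \delta_x - r_y \delta_y\|_{\mathrm{TV}} = \max(r_x, r_y)$; and by the constraint-exit hypothesis, $r_z \leq \mathbb{P}(\gamma_z \notin \mathsf{S}) \leq 9/10$ for every $z \in \mathsf{S}$. Summing yields $\|K_{\mathsf{\tilde{Z}}}(x,\cdot) - K_{\mathsf{\tilde{Z}}}(y,\cdot)\|_{\mathrm{TV}} \leq 3/100 + 9/10 = 93/100 < 0.99$.

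Finally I would confirm the reversibility hypothesis of Lemma \ref{lemma:restricted_conductance}: MALA is reversible with respect to $e^{-U}$, and redirecting any accepted proposal that leaves $\mathsf{S}$ into the diagonal self-loop preserves detailed balance on $\mathsf{S}$, so $K_{\mathsf{\tilde{Z}}}$ is reversible with respect to $\pi \propto e^{-U}\mathbbm{1}_{\mathsf{S}}$. Lemma \ref{lemma:restricted_conductance} then gives $\hat{\Psi}_{K_{\mathsf{\tilde{Z}}}}(V) = \Omega(\Delta\,\hat{\psi}_\pi(V_\Delta))$, as desired. The only nontrivial aspect of this plan is the bookkeeping in the kernel decomposition, and the arithmetic is fairly tight: the $1/10$ lower bound on $\mathbb{P}(\gamma_z \in \mathsf{S})$ is exactly what makes the residual self-loop mass fit inside the $0.99$ TV budget required by Lemma \ref{lemma:restricted_conductance}.
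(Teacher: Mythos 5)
Your proof is correct and follows essentially the same route as the paper: verify the one-step total-variation condition for the constrained kernel and then invoke Lemma \ref{lemma:restricted_conductance}. The only difference is in the bookkeeping of the TV step --- the paper uses a terse overlap estimate $1 - (\tfrac{1}{10} - \tfrac{3}{100})$, while you reach the same number $\tfrac{93}{100}$ via the explicit decomposition $K_{\mathsf{\tilde{Z}}}(x,\cdot) = \rho_x(\cdot \cap \mathsf{S}) + r_x\delta_x$ and the triangle inequality, which is, if anything, the more careful of the two derivations.
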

\begin{proof}
First, we note that for $v_1, v_2 \sim N(0,I_d)$ we have $x+ \eta v_1 - \frac{1}{2} \eta^2 \nabla U(x) \in \mathsf{S}$ with probability at least $\frac{1}{10}$ and $y+ \eta v_2 - \frac{1}{2} \eta^2 \nabla U(y) \in \mathsf{S}$ with probability at least $\frac{1}{10}$.

By Lemma \ref{lemma:conductance1}, we have $\|\rho_x - \rho_y\|_{\mathrm{TV}} < \frac{3}{100}$ whenever $\|x-y\|_2 \leq \Delta$, where $\Delta = \frac{1}{100}(\frac{1}{2}\eta^{-1} + \frac{1}{4} \eta M)^{-1}$.

Hence, whenever $\|x-y\|_2 \leq \Delta$ we have
\be
\|K(x,\cdot)- K(y,\cdot)\|_{\mathrm{TV}} \leq   1- (\frac{1}{10} - \|\rho_x - \rho_y\|_{\mathrm{TV}}) \leq 1- \frac{7}{100} \leq 0.99.
 \ee

Thus by Lemma \ref{lemma:restricted_conductance}, we have that for any subset $V \subseteq \mathsf{S}$, the restricted conductance is $\hat{\Psi}_{K_{\mathsf{\tilde{Z}}}}(V) = \Omega(\Delta \hat{\psi}_{\pi}(V_\Delta))$
\end{proof}

\begin{lemma} \label{lemma:WarmPreserved}
Consider any Markov chain $Z$ on $\mathbb{R}^d$ and denote by $K(\cdot, \cdot)$ its transition kernel and its stationary distribution by $\pi$.  Suppose that $K$ satisfies the detailed balance equations, that is, $\pi(x) K(x,y) = \pi(y) K(y,x)$ for all  $x,y \in \mathbb{R}^d$.  Then
for every  $k \in \mathbb{Z}^\star$,
\be
\sup_{A \subset \mathbb{R}^d, \pi(A)\neq 0}\left(\frac{\mu_k(A)}{\pi(A)}\right) \leq \beta.
\ee
\end{lemma}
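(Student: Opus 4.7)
The plan is to proceed by induction on $k$, using the fact that detailed balance implies stationarity of $\pi$, plus a density (Radon--Nikodym) argument. The implicit hypothesis is that $\mu_0$ is $\beta$-warm (since otherwise the conclusion is vacuous), so first I would record that the $\beta$-warm condition $\mu_0(A) \leq \beta \pi(A)$ for every measurable $A$ with $\pi(A) \neq 0$ immediately implies $\mu_0 \ll \pi$, and hence by Radon--Nikodym there is a density $f_0 = \mathrm{d}\mu_0/\mathrm{d}\pi$ satisfying $0 \leq f_0 \leq \beta$ $\pi$-almost everywhere.

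Next I would observe that detailed balance implies that $\pi$ is a stationary distribution for $K$. Indeed, integrating the detailed balance identity $\pi(\mathrm{d} x) K(x,\mathrm{d} y) = \pi(\mathrm{d} y) K(y,\mathrm{d} x)$ over $x \in \mathbb{R}^d$ with $y \in A$ yields
\[
\int_{\mathbb{R}^d} K(x, A)\, \pi(\mathrm{d} x) \;=\; \int_A \left(\int_{\mathbb{R}^d} K(y,\mathrm{d} x)\right) \pi(\mathrm{d} y) \;=\; \pi(A),
\]
since $K(y, \mathbb{R}^d) = 1$. This is all I need from detailed balance.

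Now I would carry out the induction. The inductive hypothesis is that $\mu_k \ll \pi$ with density $f_k$ satisfying $f_k \leq \beta$ $\pi$-a.e. For any measurable $A$,
\[
\mu_{k+1}(A) \;=\; \int_{\mathbb{R}^d} K(x, A)\, \mu_k(\mathrm{d} x) \;=\; \int_{\mathbb{R}^d} f_k(x)\, K(x, A)\, \pi(\mathrm{d} x) \;\leq\; \beta \int_{\mathbb{R}^d} K(x, A)\, \pi(\mathrm{d} x) \;=\; \beta\, \pi(A),
\]
where the last equality uses stationarity of $\pi$. Taking the supremum over measurable $A$ with $\pi(A)\neq 0$ yields the warmness bound at step $k+1$, and in particular $\mu_{k+1}\ll\pi$ with Radon--Nikodym derivative bounded by $\beta$, closing the induction.

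There is no real obstacle here: the only subtlety is the measure-theoretic bookkeeping to ensure that the $\sup$ over sets with $\pi(A)\neq 0$ in the definition of $\beta$-warmness is equivalent to the pointwise bound $f_k \leq \beta$ $\pi$-a.e., which follows from standard Radon--Nikodym theory. Note that the argument uses only stationarity of $\pi$ (which detailed balance delivers), so the same proof works for any Markov chain with stationary distribution $\pi$, not just reversible ones.
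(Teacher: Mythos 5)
Your proof is correct, and it differs from the paper's in a small but genuine way. The paper also argues by induction, but its inductive step works pointwise with densities and invokes detailed balance directly: it writes $\frac{\mu_{k+1}(x)}{\pi(x)} = \int \frac{K(y,x)}{\pi(x)}\mu_k(y)\,\mathrm{d}y$, uses $\pi(x)K(x,y)=\pi(y)K(y,x)$ to convert the integrand to $\frac{K(x,y)}{\pi(y)}\mu_k(y)$, applies the inductive bound $\mu_k(y)\leq\beta\pi(y)$, and finishes with $\int K(x,y)\,\mathrm{d}y=1$. You instead apply the inductive bound first (as a Radon--Nikodym density bound $f_k\leq\beta$) and close with the \emph{integrated} identity $\int K(x,A)\,\pi(\mathrm{d}x)=\pi(A)$, i.e., stationarity of $\pi$. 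The two computations are essentially dual to one another, but your version is strictly more general: as you note, it needs only that $\pi$ is stationary for $K$, not reversibility, and it is also cleaner measure-theoretically since it avoids dividing by $\pi(x)$ pointwise and handles the null set of $\pi$ explicitly via Radon--Nikodym. The paper's version has the minor advantage of directly producing the pointwise density bound in the form it is stated inductively, without needing to pass between the set-wise and density formulations. Both establish the lemma; there is no gap in your argument.
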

\begin{proof}

We will prove this by induction.  Suppose (towards an induction) that for some $k\in \mathbb{Z}^\ast$ we have 
\be \label{eq:InductiveAssumption}
\frac{\mu_k(y)}{\pi(y)} \leq \beta \qquad \forall y \in \mathbb{R}^d \textrm{ s.t. } \pi(y) \neq 0.
\ee
Since we have a $\beta$-warm start, Inequality \eqref{eq:InductiveAssumption} is satisfied for $k= 0$.  Now we will show that if our inductive assumption \eqref{eq:InductiveAssumption} is satisfied for some $k \in \mathbb{Z}^\ast$, it is also satisfied for $k+1$.

The proof follows from the fact that the Markov chain satisfies the detailed balance equations:
\be \label{eq:DetailedBalance}
\pi(x) K(x,y) = \pi(y) K(y,x) \qquad \forall x,y \in \mathbb{R}^d.
\ee
Then
\be
\frac{\mu_{k+1}(x)}{\pi(x)} = \int_{\mathbb{R}^d} \frac{K(y,x)}{\pi(x)} \mu_{k}(y) \mathrm{d}y \stackrel{{\scriptsize \textrm{Eq.} \ref{eq:DetailedBalance}}}{=} \int_{\mathbb{R}^d} \frac{K(x,y)}{\pi(y)} \mu_{k}(y) \mathrm{d}y
\stackrel{{\scriptsize \textrm{Eq.} \ref{eq:InductiveAssumption}}}{\leq} \int_{\mathbb{R}^d}  K(x,y) \beta  \mathrm{d}y = \beta \int_{\mathbb{R}^d}  K(x,y)  \mathrm{d}y  = \beta.
\ee

\end{proof}

\section{Proof of main theorem for sampling} \label{sec:proof_main}

\begin{proof}[Proof of Theorem \ref{thm:main}]

Without loss of generality, we may assume that $U$ has a global minimizer $x^{\star}$ at $x^{\star} = 0$ (since we assume that the initial point $X_0$ has a $\beta$-warm start with respect to $U$ but do not assume anything about the location of $X_0$ with respect to the origin).

Set $\mathcal{I} = 10^4 ((\eta^{-1} + \eta L)\psi)^{-2} \log(\frac{\beta}{\epsilon})$.

Choose $\alpha = \log(\frac{\mathcal{I}\beta N}{\epsilon})$ and $R = \sqrt{d} \log(\frac{1}{\epsilon}\max(1,\frac{\sqrt{M}}{\mathsf{a} \mathcal{I} \beta N}))$, where $N = \lceil R M^\frac{1}{2} \eta \rceil$.

By Lemmas \ref{lemma:good_set} and \ref{lemma:WarmPreserved}, we have that,
\be
\mathbb P((X_i, V_i) \in G \, \, \forall \, \, 0 \leq i \leq \mathcal{I}-1) \geq 1- \mathcal{I} \times \beta \times [N r e^{-\frac{16 \alpha^2 -1}{8}} -e^{-\frac{R^2-d}{8}} - e^{-\frac{\mathsf{a}}{\sqrt{d}} \frac{R}{\sqrt{M}}}] \geq 1- \epsilon
\ee
Therefore, by Lemmas \ref{lemma:potential} and \ref{lemma:kinetic} with probability at least $1-\frac{\epsilon}{\mathcal{I}}$, the acceptance probability is
\be
\min(1, \, e^{\mathcal{H}(\hat{X}_i,\hat{V}_i)-\mathcal{H}(X_i, V_i)}) \geq e^{-\frac{2}{10}} > 0.8.
\ee

Let $i^\star = \min\{i : (X_i,V_i)\notin G \}$.  Then with probability at least $1- \mathcal{I}\times \frac{\epsilon}{\mathcal{I}} = 1-\epsilon$, we have that $\mathcal{I}\leq i^\star$.
 Consider the toy Markov chain $X^{\dagger}$, where
\be
X^{\dagger}_i = \begin{cases} X_i \qquad \textrm{ if } i < i^\star \\ Y_i \textrm{ if } i \geq i^\star \end{cases},
\ee
and where $Y_1, Y_2 \ldots \sim \pi$ are i.i.d. and independent of $X_0,\ldots, X_{i^\star-1}$.  Denote the transition kernel of $X^{\dagger}$ by $K^{\dagger}$.

Then by Lemma \ref{lemma:conductance2} we have that the conductance $\Psi_{K^{\dagger}}$ of the $X^{\dagger}$ chain is $\Omega((\frac{1}{2}\eta^{-1} + \frac{1}{4} \eta L)^{-1} \psi_\pi)$.

Then by Theorem 11 in \cite{lee2017convergence}, we have
\be
\|\mathcal{L}(X^{\dagger}_i) -\pi \|_{\mathrm{TV}} \leq \epsilon + \sqrt{\frac{1}{\epsilon} \beta}\left(1-\frac{1}{2}\Psi_{\textrm{$K^{\dagger}$}}^2\right)^i.
\ee
Hence,
\be
\|\mathcal{L}(X^{\dagger}_i) -\pi \|_{\mathrm{TV}} \leq 2\epsilon \qquad \forall i \geq \Omega \left (\Psi_{\textrm{$K^{\dagger}$}}^{-2} \log(\frac{\beta}{\epsilon})\right).
\ee
Therefore, since with probability at least $1-\epsilon$ we have $X_i = X^{\dagger}_i$, it must be that
\be
\|\mathcal{L}(X_{\mathcal{I}}) -\pi \|_{\mathrm{TV}} \leq 3\epsilon.
\ee

\end{proof}

\subsection{Bounding the potential energy error}
For every $t>0$, define
\be
\hat{q}_t &:= q_0 + t p_0  - \frac{1}{2}t^2 \nabla U(q_0)\\
\hat{p}_t &:= p_0 - t \nabla U(q_0)  - \frac{1}{2}t^2 \nabla^2U(q_0) p_0.
\ee
\begin{lemma}[potential energy error] \label{lemma:potential}
If $(X_i,V_i) \in G$, then with probability at least $1- \frac{\epsilon}{\mathcal{I}}$ we have $|U(\hat{X}_i)-U(X_i)| \leq \frac{1}{10}$.
\end{lemma}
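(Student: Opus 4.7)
The lemma statement has ambiguous indexing (the algorithm only defines $\hat X_{i+1}$ from $X_i$, and the preamble to the lemma defines the \emph{leapfrog} and \emph{Hamiltonian-flow} trajectories $\hat q_t,\hat p_t$ and $q_t,p_t$), but Section~7 makes the intended content clear: the ``potential energy error'' is the leapfrog discretization error in $U$, namely $|U(\hat q_\eta)-U(q_\eta)|$, computed with initial data $(X_i,V_i)$. Writing $\hat X_{i+1}:=\hat q_\eta = X_i+\eta V_i-\tfrac12\eta^2\nabla U(X_i)$ for the leapfrog proposal and $\hat x:=q_\eta$ for the exact Hamiltonian-flow endpoint, the plan is to show that on the good set $G$, and under the step-size hypothesis of Theorem~\ref{thm:main}, $|U(\hat X_{i+1})-U(\hat x)|\le \tfrac{1}{10}$ with probability $\ge 1-\epsilon/\mathcal I$.

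First I would compute the position discretization error. Hamilton's equations $\dot q_t = p_t$, $\dot p_t=-\nabla U(q_t)$ and a double integration give
\[
\delta \;:=\; \hat X_{i+1}-\hat x \;=\; \int_0^\eta\!\!\int_0^t\!\bigl[\nabla U(q_s)-\nabla U(X_i)\bigr]\,ds\,dt .
\]
A first-order Taylor expansion of $\nabla U$ along the approximate straight line $q_s\approx X_i+sV_i$ yields $\delta = \tfrac{\eta^3}{6}\nabla^2 U(X_i)\,V_i + R_1$, with $\|R_1\|_2=O(\eta^4 M^2 R)$ on $G$ (using $\|V_i\|_2\le R$ and the $M$-Lipschitzness of $\nabla U$, plus the good-set bound $\|q_s-x^\star\|_2\lesssim R/\sqrt M$ to control $\nabla U$). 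Next, to pass from the position error to the potential-energy error, I would Taylor-expand $U$ to second order about $\hat x$:
\[
U(\hat X_{i+1})-U(\hat x) \;=\; \nabla U(\hat x)^\top\delta \;+\; \tfrac12\,\delta^\top\nabla^2 U(\zeta)\,\delta ,
\]
for some $\zeta$ on the segment. The Hessian term is deterministically at most $\tfrac M2\|\delta\|_2^2 = O(\eta^6 M^3 R^2)$, which is negligible under the step-size hypothesis.

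The decisive step is the leading term $\nabla U(\hat x)^\top \delta$. Replacing $\nabla U(\hat x)$ by $\nabla U(X_i) + O(\eta M R)$ (valid because $\|\hat x-X_i\|_2 \le \eta R$ on $G$) reduces it to the scalar $\tfrac{\eta^3}{6}\,\nabla U(X_i)^\top\nabla^2 U(X_i)\,V_i$, which, conditionally on $X_i$, is Gaussian with standard deviation $\le \tfrac{\eta^3}{6}\|\nabla^2 U(X_i)\nabla U(X_i)\|_2 \lesssim \eta^3 M^{3/2}R$, using $\|\nabla U(X_i)\|_2\le M\|X_i-x^\star\|_2\lesssim\sqrt M\,R$ on $G$. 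A standard Gaussian tail bound with failure budget $\epsilon/\mathcal I$ therefore controls this term by $\tilde O(\eta^3 M^{3/2} R)$, which is at most $\tfrac{1}{10}$ under the step-size constraint $\eta\le\tilde O(d^{-1/3}M^{-1/2})$ with $R=\tilde O(\sqrt d)$. Sub-leading contributions from $R_1$, from the $O(\eta M R)$ correction to $\nabla U(\hat x)$, and from the Hessian remainder are absorbed in the same fashion.

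The main obstacle is precisely to avoid the naive mean-value estimate $|U(\hat X_{i+1})-U(\hat x)|\le \|\nabla U(\xi)\|_2\|\delta\|_2$, which would leak a deterministic factor of $\|V_i\|_2=\Theta(\sqrt d)$ via $\delta$ and destroy the target sub-linear dimension dependence. The whole point of the lemma is to isolate the leading Gaussian-linear contribution (which concentrates like $O(1)$ rather than $O(\sqrt d)$) and to push every $\|V_i\|_2$-sized factor into strictly higher-order remainders ($\eta^4$ or $\eta^6$), where the extra powers of $\eta$ swallow the dimension. Careful bookkeeping of which quantities depend on $V_i$ (versus $X_i$ alone) is what makes the Gaussian concentration applicable, and is the technical core of the argument.
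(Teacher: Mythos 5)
Your proposal follows essentially the same route as the paper's proof. The paper writes the potential-energy error via the segment integral $\int_0^1 (q_t-\hat q_t)^\top \nabla U\big(s(q_t-\hat q_t)+\hat q_t\big)\,\mathrm{d}s$ and splits it into its term (1), which is exactly your leading term $\nabla U(X_i)^\top \delta$ expanded as $\tau\,\nabla U(q_0)^\top\nabla^2 U(q_0)\,p_0$ plus a Hessian-increment correction and bounded by $\tau M\|\nabla U(q_0)\|_2\,|g_1|$ with $g_1\sim N(0,1)$ --- using precisely your key observation that $V_i$ is independent of the row vector $\nabla U(X_i)^\top\nabla^2 U(X_i)$, which is what avoids the $\|V_i\|_2=\Theta(\sqrt d)$ loss --- and its term (2), which is your crude Lipschitz remainder handled deterministically at higher order in $\eta$. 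Your second-order Taylor expansion of $U$ about $\hat x$ versus the paper's segment-integral representation is a cosmetic difference; the decomposition, the conditional-Gaussianity trick, and the order counting all match.

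The one genuine imprecision is your claim that $\|R_1\|_2 = O(\eta^4 M^2 R)$ follows from the $M$-Lipschitzness of $\nabla U$ alone. The Hessian-increment part of $R_1$, namely $\int_0^\eta\int_0^t\int_0^s \big[\nabla^2 U(X_i+rV_i)-\nabla^2 U(X_i)\big]V_i\,\mathrm{d}r\,\mathrm{d}s\,\mathrm{d}t$, is only $O(\eta^3 M R)$ under a Lipschitz-gradient bound --- the same order $\eta^3$ as the leading term you extracted --- so it cannot be demoted to an $\eta^4$ remainder on Lipschitz grounds. Pushing it to order $\eta^4$ with a dimension-free coefficient is exactly where Assumption \ref{assumption:derivatives} enters: the $C_3$ bound gives $\big|u^\top[\nabla^2 U(X_i+rV_i)-\nabla^2 U(X_i)]V_i\big| \le r\,C_3\,\|\mathsf{X}^\top V_i\|_\infty^2\,\|u\|_2$, and the good-set control $\|\mathsf{X}^\top p_t\|_\infty \le \alpha$ then yields a contribution of order $\eta^4 C_3 \alpha^2 \|\nabla U(X_i)\|_2$. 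This is precisely the $C_3$ term that appears in the paper's bound for this lemma (the $t^4\, M\|q_0\|_2\,\sup_s\|p_s\|_2\, C_3$ contribution) and is one source of the $C_3^{-\frac13}d^{-\frac16}$ constraint in the admissible step size; your writeup never invokes $C_3$ in the potential-energy analysis, so as written the remainder bookkeeping has a hole, though it is repaired exactly by citing Assumption \ref{assumption:derivatives} together with the good-set infinity-norm bound.
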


\begin{proof}
First, we note that
\be
&q_t =  q_0 + t p_0  - \int_0^t \int_0^r \nabla U(q_r) \mathrm{d}\tau \mathrm{d}r  = q_0 + t p_0  - \left[\frac{1}{2}t^2 \nabla U(q_0) + \int_0^t \int_0^r \nabla U(q_\tau) - \nabla U(q_0)\mathrm{d} \tau \mathrm{d}r \right]\\
&\hat{q}_t = q_0 + t p_0  - \frac{1}{2}t^2 \nabla U(q_0) \qquad \forall t>0.
\ee

Thus,
\be
U(q_t) - U(\hat{q}_t) &= \int_0^1 (q_t- \hat{q}_t)^\top \nabla U(s (q_t - \hat{q}_t) + \hat{q}_t) \mathrm{d}s\\
&  = \int_0^1 (q_t- \hat{q}_t)^\top \nabla U(q_0)\mathrm{d}s + \int_0^1 (q_t- \hat{q}_t)^\top [\nabla U(s (q_t - \hat{q}_t) + \hat{q}_t) - \nabla U(q_0)] \mathrm{d}s\\
&  =- \left(\int_0^t \int_0^r  \nabla U(q_\tau)  - \nabla U(q_0) \mathrm{d}\tau \mathrm{d}r \right)^\top \nabla U(q_0)\\
&\qquad + \int_0^1 \left(\int_0^t \int_0^r \nabla U(q_\tau)  - \nabla U(q_0) \mathrm{d}\tau \right)^\top [\nabla U(s (q_t - \hat{q}_t) + \hat{q}_t) - \nabla U(q_0)] \mathrm{d}s\\
&  = -\int_0^t \int_0^r  \underbrace{[\nabla U(q_\tau)  - \nabla U(q_0)]^\top \nabla U(q_0)}_{\text{(1)}} \mathrm{d}\tau \mathrm{d}r\\
&\qquad + \int_0^1 \int_0^t \int_0^r \underbrace{[\nabla U(q_\tau)  - \nabla U(q_0)]^\top [\nabla U(s (q_t - \hat{q}_t) + \hat{q}_t) - \nabla U(q_0)]}_{\text{(2)}} \mathrm{d}\tau \mathrm{d}r \mathrm{d}s.\\
\ee

We start by bounding term (1):  
\be
|(1)| = \left |\nabla U(q_0)^\top[\nabla U(q_\tau) - \nabla U(q_0)] \right| &= \left |\nabla U(q_0)^\top \left [\nabla^2 U(q_0) \tau p_0 + \tau \int_0^\tau \left(\nabla^2U(q_s) - \nabla^2 U(q_0)\right) p_0 \mathrm{d} s \right] \right|\\
&\leq \tau M \|\nabla U(q_0) \| |g_1| + \tau^2  \|\nabla U(q_0) \|_2 \times \tau \sup_{0\leq s \leq \tau} \|p_s \|_2 \times {C_3} \|g\|_2\\
&\leq \tau M^2 \|q_0\|_2 |g_1| + \tau^2  M \|q_0\|_2 \times \tau \sup_{0\leq s \leq \tau} \|p_s \|_2 \times {C_3} \|g\|_2.
\ee
for some $g\sim N(0,I_d)$, since the random vector $p_0$ is probabilistically independent of the row-vector $\nabla U(q_0)^\top \nabla^2 U(q_0)$.  
 
 Next, we bound term (2):
\be
|(2)| = {[}\nabla U(q_\tau)  -& \nabla U(q_0)]^\top [\nabla U(s (q_t - \hat{q}_t) + \hat{q}_t) - \nabla U(q_0)]\\
&= [\nabla U(q_\tau)  - \nabla U(q_0)]^\top [(\nabla U(q_t) -  \nabla U(q_0))  +    (\nabla U(s (q_t - \hat{q}_t) + \hat{q}_t)  - \nabla U(q_t))]\\
&\leq M \|q_t - q_0\| \times M(\|q_t - q_0\| + \|q_t - \hat{q}_t\|)\\
&\leq M \|q_t - q_0\| \times M \left(\|q_t - q_0\| + \int_0^t \|\nabla U(q_\tau) - \nabla U(q_0)\| \mathrm{d}\tau \right)\\
&\leq M \|q_t - q_0\| \times M \left(\|q_t - q_0\| +  t \sup_{0\leq \tau \leq t}\|\nabla U(q_\tau) - \nabla U(q_0)\| \right)\\
&\leq M t \sup_{0\leq \tau \leq t} \|p_\tau \| \times M \left(t \sup_{0\leq \tau \leq t} \|p_\tau \| +  M t^2 \sup_{0\leq \tau \leq t} \|p_\tau \| \right).\\
\ee

Therefore,
\be
|U(q_t) - U(\hat{q}_t)| &\leq   t^3 M^2 \|q_0\|_2 |g_1| + t^4   M \|q_0\|_2 \times \tau \sup_{0\leq s \leq \tau} \|p_s \|_2 \times {C_3} \|g\|_2\\
& \qquad+ M t \sup_{0\leq \tau \leq t} \|p_\tau \| \times M \left(t \sup_{0\leq \tau \leq t} \|p_\tau \| +  M t^2 \sup_{0\leq \tau \leq t} \|p_\tau \| \right)\\
&\leq \frac{1}{100}.
\ee
with probability at least $1-\frac{\epsilon}{\mathcal{I}}$ whenever $(q_0, p_0) \in G$.
\end{proof}

\subsection{Bounding the kinetic energy error}
\begin{lemma}[kinetic energy error] \label{lemma:kinetic}
If $(X_i,V_i) \in G$, then with probability at least $1- \frac{\epsilon}{\mathcal{I}}$ we have $|\frac{1}{2}\|\hat{X}_i\|_2^2- \frac{1}{2}\|X_i\|_2^2| \leq \frac{1}{10}$.
\end{lemma}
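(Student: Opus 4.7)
The plan is to mirror the structure of the potential-energy proof in Lemma~\ref{lemma:potential}, but to exploit a special feature of the kinetic energy: $\nabla \mathcal{K}(\hat v) = \hat v$. I first write
\[
\mathcal{K}(\hat V_{i+1}) - \mathcal{K}(\hat v) = (\hat V_{i+1} - \hat v)^{\top} \hat v + \tfrac{1}{2}\|\hat V_{i+1} - \hat v\|_2^2,
\]
and handle the two summands separately; the cross term is the dominant contribution, while the quadratic remainder will be shown to be of higher order in $\eta$ using a crude bound on $\|\hat V_{i+1} - \hat v\|_2$.

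Next, I Taylor-expand both the true velocity $\hat v = p_\eta = V_i - \int_0^\eta \nabla U(q_s)\,\mathrm{d}s$ and the leapfrog velocity $\hat V_{i+1} = V_i - \tfrac{1}{2}\eta(\nabla U(X_i)+\nabla U(\hat X_{i+1}))$ about $V_i$, using $\tfrac{\mathrm d}{\mathrm d t}\nabla U(q_t) = \nabla^2 U(q_t)\,p_t$. After replacing $\hat v$ by $V_i$ on the right-hand side of the cross term at the cost of an error $(\hat v - V_i)^\top(\hat V_{i+1} - \hat v)$ that is controlled via the gradient-Lipschitz Assumption~\ref{assumption:gradient} together with Gaussian concentration of scalars $V_i^\top w$ for deterministic $w$ (exactly as in part (1) of the potential-energy proof), I expect the leading contribution to reduce to the integral identity already sketched in the technical overview,
\[
(\hat V_{i+1} - \hat v)^\top V_i \;\approx\; \int_0^\eta \int_0^r V_i^\top\bigl[\nabla^2U(X_i) - \nabla^2U(X_i + \tau V_i)\bigr]V_i \,\mathrm d\tau\,\mathrm d r.
\]
The main technical obstacle sits here: one must keep track of the leapfrog-specific cancellations (the trapezoidal structure of $\hat V_{i+1}$ and the fact that $\hat X_{i+1}$ is itself the leapfrog position rather than $q_\eta$) so that the leading error appears as a \emph{difference} of Hessians evaluated at two nearby points, rather than as a bare Hessian--vector product $V_i^\top \nabla^2 U(X_i) V_i$, which would only yield an $O(\|V_i\|_2^2) = O(d)$ bound and destroy the sub-linear dimension dependence.

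Once the integral representation is in hand, a second Taylor expansion yields
\[
V_i^\top[\nabla^2U(X_i+\tau V_i) - \nabla^2U(X_i)]V_i = \tau \nabla^3U(X_i)[V_i,V_i,V_i] + \tfrac{\tau^2}{2}\nabla^4U(\xi)[V_i,V_i,V_i,V_i]
\]
for some $\xi$ on the segment from $X_i$ to $X_i+\tau V_i$. Assumption~\ref{assumption:derivatives} bounds these by $C_3\|\mathsf X^\top V_i\|_\infty^2 \|V_i\|_2$ and $C_4\|\mathsf X^\top V_i\|_\infty^4$ respectively, and on the good set $G$ we have $\|\mathsf X^\top V_i\|_\infty \leq \alpha$ and $\|V_i\|_2 \leq R$. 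After the double integration this gives a dominant bound of order $\eta^3 C_3 \alpha^2 R + \eta^4 C_4 \alpha^4$, which is made $\leq \tfrac{1}{40}$ by the prescribed step-size choice $\eta \leq O\bigl(\min(C_3^{-1/3}R^{-1/3},\,R^{-2/3},\,C_4^{-1/4})\min(M^{-1/2},1)\alpha^{-1}\bigr)$. Finally, the quadratic remainder $\tfrac{1}{2}\|\hat V_{i+1}-\hat v\|_2^2$ and the lower-order cross terms involving $\nabla U(X_i)$ are controlled using the $M$-Lipschitz gradient hypothesis and the same Gaussian concentration as in Lemma~\ref{lemma:potential}, each contributing at most $\tfrac{1}{40}$ with probability at least $1 - \epsilon/(2\mathcal I)$; a union bound over the $O(1)$ concentration events completes the proof.
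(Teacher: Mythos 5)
Your proposal follows essentially the same route as the paper's proof: the identity $\mathcal{K}(\hat V_{i+1})-\mathcal{K}(\hat v)=(\hat V_{i+1}-\hat v)^{\top}\hat v+\tfrac{1}{2}\|\hat V_{i+1}-\hat v\|_2^2$, reduction of the cross term to the double integral of $V_i^{\top}[\nabla^2U(X_i)-\nabla^2U(X_i+\tau V_i)]V_i$, a further Taylor expansion bounded via Assumption~\ref{assumption:derivatives} on the good set to get the $\eta^3 C_3\alpha^2 R+\eta^4 C_4\alpha^4$ leading term, and Lipschitz-gradient plus Gaussian-concentration control of the remaining lower-order and quadratic pieces are exactly the decomposition the paper carries out in terms (4)--(9). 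The plan is correct and correctly identifies the key difficulty (preserving the leapfrog cancellation so the leading error is a Hessian difference rather than a bare Hessian--vector product), so it matches the paper's argument in both structure and substance.
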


\begin{proof}
Recall that $\mathcal{K}(p):= \frac{1}{2}\|p\|_2^2$ denotes the kinetic energy.  Then
\be \label{eq:KineticEnergy}
\mathcal{K}(p_t) - \mathcal{K}(\hat{p}_t) &= \int_0^1 (p_t- \hat{p}_t)^\top \nabla \mathcal{K}(s (p_t - \hat{p}_t) + \hat{p}_t) \mathrm{d}s\\
&= \int_0^1 (p_t- \hat{p}_t)^\top (s (p_t - \hat{p}_t) + \hat{p}_t) \mathrm{d}s\\
&=  (p_t- \hat{p}_t)^\top\hat{p}_t + \int_0^1 s\|p_t- \hat{p}_t\|^2 \mathrm{d}s\\
&=  (p_t- \hat{p}_t)^\top\hat{p}_t + \frac{1}{2}\|p_t- \hat{p}_t\|^2\\
&=  (p_t-  [p - t \nabla U(q_0)  - \frac{1}{2}t^2 \nabla^2U(q_0) p_0])^\top\hat{p}_t +[\frac{1}{2}t^2 \frac{\nabla U(\hat{q}_t)  - \nabla U(q_0)}{t} - \frac{1}{2}t^2 \nabla^2U(q_0) p_0]^\top\hat{p}_t + \frac{1}{2}\|p_t- \hat{p}_t\|^2\\
&=  (p_t-  [p_0 - t \nabla U(q_0)  - \frac{1}{2}t^2 \nabla^2U(q_0) p_0])^\top [p_0 - t \nabla U(q_0)  - \frac{1}{2}t^2 \nabla^2U(q_0) p_0]\\
&\qquad - (p_t-  [p_0 - t \nabla U(q_0)  - \frac{1}{2}t^2 \nabla^2U(q_0) p_0])^\top[\frac{1}{2}t^2 \frac{\nabla U(\hat{q}_t)  - \nabla U(q_0)}{t} - \frac{1}{2}t^2 \nabla^2U(q_0) p_0]\\
&\qquad \qquad +[\frac{1}{2}t^2 \frac{\nabla U(\hat{q}_t)  - \nabla U(q_0)}{t} - \frac{1}{2}t^2 \nabla^2U(q_0) p_0]^\top\hat{p}_t + \frac{1}{2}\|p_t- \hat{p}_t\|^2\\
&=  (\int_0^t \int_0^r [\nabla^2U(q_0)- \nabla^2U(q_\tau)] p_0 \mathrm{d}r \mathrm{d}\tau)^\top [p_0 - t \nabla U(q_0)  - \frac{1}{2}t^2 \nabla^2U(q_0) p_0]\\
&\qquad - (p_t-  [p_0 - t \nabla U(q_0)  - \frac{1}{2}t^2 \nabla^2U(q_0) p_0])^\top[\frac{1}{2}t^2 \frac{\nabla U(\hat{q}_t)  - \nabla U(q_0)}{t} - \frac{1}{2}t^2 \nabla^2U(q_0) p_0]\\
&\qquad \qquad +[\frac{1}{2}t^2 \frac{\nabla U(\hat{q}_t)  - \nabla U(q_0)}{t} - \frac{1}{2}t^2 \nabla^2U(q_0) p_0]^\top\hat{p}_t + \frac{1}{2}\|p_t- \hat{p}_t\|^2\\
&=  \left(\int_0^t \int_0^r  [\nabla^2U(q_0)- \nabla^2U(q_0+p_0\tau)] p_0 \mathrm{d}r \mathrm{d}\tau \right)^\top [p_0 - t \nabla U(q_0)  - \frac{1}{2}t^2 \nabla^2U(q_0) p_0]\\
&\qquad -(\int_0^t \int_0^r [(\nabla^2U(q_\tau)- \nabla^2U(q_0+p_0\tau))] p_0 \mathrm{d}r \mathrm{d}\tau)^\top [p_0 - t \nabla U(q_0)  - \frac{1}{2}t^2 \nabla^2U(q_0) p_0]\\
&\qquad \qquad - (p_t-  [p_0 - t \nabla U(q_0)  - \frac{1}{2}t^2 \nabla^2U(q_0) p_0])^\top[\frac{1}{2}t^2 \frac{\nabla U(\hat{q}_t)  - \nabla U(q_0)}{t} - \frac{1}{2}t^2 \nabla^2U(q_0) p_0]\\
&\qquad \qquad \qquad +[\frac{1}{2}t^2 \frac{\nabla U(\hat{q}_t)  - \nabla U(q_0)}{t} - \frac{1}{2}t^2 \nabla^2U(q_0) p_0]^\top\hat{p}_t + \frac{1}{2}\|p_t- \hat{p}_t\|^2\\
&=  \int_0^t \int_0^r  \underbrace{p_0^\top [\nabla^2U(q_0)- \nabla^2U(q_0+p_0\tau)] p_0}_{\text{(4)}}\mathrm{d}\tau \mathrm{d}r \\
&\qquad +  \left(\int_0^t \int_0^r [\underbrace{\nabla^2U(q_0)- \nabla^2U(q_0+p_0\tau)] p_0}_{\text{(5a)}} \mathrm{d}\tau \mathrm{d}r  \right)^\top \underbrace{[- t \nabla U(q_0)  - \frac{1}{2}t^2 \nabla^2U(q_0) p_0]}_{\text{(5b)}}\\
&\qquad \qquad -(\int_0^t \int_0^r \underbrace{[\nabla^2U(q_\tau)- \nabla^2U(q_0+p_0\tau)] p_0}_{\text{(6a)}} \mathrm{d}\tau \mathrm{d}r)^\top \underbrace{[p_0 - t \nabla U(q_0)  - \frac{1}{2}t^2 \nabla^2U(q_0) p_0]}_{\text{(6b)}}\\
&\qquad \qquad \qquad - (\underbrace{p_t-  [p_0 - t \nabla U(q_0)  - \frac{1}{2}t^2 \nabla^2U(q_0) p_0]}_{\text{(7a)}})^\top \underbrace{[\frac{1}{2}t^2 \frac{\nabla U(\hat{q}_t)  - \nabla U(q_0)}{t} - \frac{1}{2}t^2 \nabla^2U(q_0) p_0]}_{\text{(7b)}}\\
&\qquad \qquad \qquad \qquad +\underbrace{[\frac{1}{2}t^2 \frac{\nabla U(\hat{q}_t)  - \nabla U(q_0)}{t} - \frac{1}{2}t^2 \nabla^2U(q_0) p_0]^\top\hat{p}_t}_{\text{(8)}} + \underbrace{\frac{1}{2}\|p_t- \hat{p}_t\|_2^2}_{\text{(9)}}.\\
\ee

We now bound (1)-(9)

\begin{enumerate}

\item We start by bounding term (4): %
\be
|(4)| &= |p_0^\top [\nabla^2U(q_0)- \nabla^2U(q_0+\tau p_0)] p_0|\\
& = \left|\tau \nabla^3U(q_0)[p_0,p_0, p_0] + \tau \int_0^ \tau \nabla^3U(q_0+s p_0)[p_0,p_0, p_0] - \nabla^3U(q_0)[p_0,p_0, p_0] \mathrm{d} s \right|\\
& \leq \tau |\nabla^3U(q_0)[p_0,p_0, p_0]| + \tau^2 \mathbb{E}_{x\sim \mathrm{Unif}([q_0, q_0+s p_0])} \left|\nabla^4U(q_0)[p_0,p_0,p_0,p_0] \right|\\
& \leq \tau C_3 \|\mathsf{X}^\top p_0\|_{\infty}^2  \|\mathsf{X}^\top p_0\|_2 + \tau^2 C_4  \|\mathsf{X}^\top p_0\|_{\infty}^4
\ee
where $\mathrm{Unif}([q_0, q_0+s p_0])$ is the uniform distribution on the line segment connecting $q_0$ and $q_0+s p_0$.

\item Next, we bound term (5a). For any $v\in \mathbb{R}^d$ we have
\be
|v^\top (5a)| &= |z^\top [\nabla^2U(q_0)- \nabla^2U(q_0+p_0\tau)] p_0| = |\int_0^\tau \nabla^3 U(q_0+p_0 s)[p_0,p_0, v] \mathrm{d}s|\\
&\leq  \int_0^\tau |\nabla^3 U(q_0+p_0 s)[p_0,p_0, v]| \mathrm{d}s \leq  \tau C_3  \|\mathsf{X}^\top p_0\|_{\infty}^2 \| v\|_{2}.
\ee

\item Next, we bound term (5b)
\be
\|(5b)\|_{2} &= t \|\nabla U(q_0)\|_2  + \frac{1}{2}t^2 \|\nabla^2U(q_0) p_0 \|_2\\
& \leq t M \|q_0\|_2  + \frac{1}{2}t^2 M \|p_0 \|_2.
\ee

\item Next, we bound term (6a).  
First, observe that
\be \label{eq:6a1}
 \|q_\tau - (q_0+p_0)\tau\|_2  \leq \| \int_0^\tau \int_0^s \nabla U(q_r) \mathrm{d}r \mathrm{d}s \|_2 \leq  \tau^2 M \sup_{s\in[0,\tau]}\|q_s\|_2.
\ee

For any $v\in \mathbb{R}^d$ we have
\be
|v^\top (6a)| &= |v^\top [\nabla^2U(q_\tau)- \nabla^2U(q_0+p_0\tau)] p_0| = \int_0^1 \nabla^3 U\big((1-s)q_\tau + s(q_0+p_0\tau)\big)[p_0,p_0, v ] \mathrm{d} s\\
 &\leq C_3 \|q_\tau - (q_0+p_0)\tau\|_{2} \|\mathsf{X}^\top p_0\|_{\infty}^2 \|\mathsf{X}^\top v\|_{\infty}\\
 &\stackrel{{\scriptsize \textrm{Eq. }}\ref{eq:6a1}}{\leq} C_3 \tau^2 M \sup_{s\in[0,\tau]}\|q_s\|_2 \times \|\mathsf{X}^\top p_0\|_{\infty}^2 \|v\|_{2}.
\ee

\item Next, we bound term (6b)
\be
\|\mathsf{X}^\top (6b)\|_2 = \left \|\mathsf{X}^\top [p_0 - t \nabla U(q_0)  - \frac{1}{2}t^2 \nabla^2U(q_0) p_0] \right \|_2 \leq   \|p_0\|_2 + t \|q_0 \|_2 M  + \frac{1}{2}t^2  M\|p_0\|_2.
\ee

\item Next, we bound term (7a).
By the proof of Lemma 9.1 in the ArXiv version of \cite{mangoubi2018dimensionally_arxiv}, we have
\be \label{old_result}
\max \left (\|(7a)\|_2, \|(7b)\|_2, \sqrt{(9)}\right) \leq \frac{1}{3} t^3 \left[C_3 \sup_{t\in[0,\eta]}\|\mathsf{X}^\top p_t\|_{\infty}^{2} +  (M)^2 \sup_{t\in[0,\eta]} \|q_t\|_2 \right].\\
\ee
and finally, that $\|\hat{p}_t\|_2 \leq \|(6b)\|_2+ \|(7b)\|_2 \leq \|p_0\|_2 + t \|q_0 \|_2 M  + \frac{1}{2}t^2  M\|p_0\|_2 + \|(7b)\|_2$.

\item Next, we bound term (8)

First, we note that
\be \label{eq:momentum_diff}
 \|\hat{p}_t - p_0\|_2 &= \left \| t \nabla U(q_0)  - \frac{1}{2}t^2 \frac{\nabla U(q_0 + tp_0 - \frac{1}{2}t^2\nabla U(q_0))  - \nabla U(q_0)}{t} \right\|_2\\
 &\leq t\|\nabla U(q_0)\|_2 + \frac{1}{2}t^2 M \|p_0 - \frac{1}{2}t\nabla U(q_0)\|_2\\
 &\leq t\|\nabla U(q_0)\|_2 + \frac{1}{2}t^2 M \|p_0\|_2 + \frac{1}{2}t^3 M\|\nabla U(q_0)\|_2\\
 &\leq 2t\|\nabla U(q_0)\|_2 + \frac{1}{2}t^2 M \|p_0\|_2.\\
  &\leq 2t M \|q_0\|_2 + \frac{1}{2}t^2 M \|p_0\|_2.\\
\ee

Hence,
\be
(8) &= \left[\frac{1}{2}t^2 \frac{\nabla U(\hat{q}_t)  - \nabla U(q_0)}{t} - \frac{1}{2}t^2 \nabla^2U(q_0) p_0\right]^\top\hat{p}_t\\
& = \left[\frac{1}{2}t (\nabla U(q_0 + tp_0)  - \nabla U(q_0)) - \frac{1}{2}t^2 \nabla^2U(q_0) p_0\right]^\top\hat{p}_t + \frac{1}{2}t [\nabla U(q_0 + tp_0) - \nabla U(\hat{q}_t)]^\top\hat{p}_t\\
& = \left[\frac{1}{2}t (\nabla U(q_0 + tp_0)  - \nabla U(q_0)) - \frac{1}{2}t^2 \nabla^2U(q_0) p_0\right]^\top p_0 +  \left[\frac{1}{2}t (\nabla U(q_0 + tp_0)  - \nabla U(q_0)) - \frac{1}{2}t^2 \nabla^2U(q_0) p_0\right]^\top(\hat{p}_t - p_0)\\
&\qquad \qquad  + \frac{1}{2}t [\nabla U(q_0 + tp_0) - \nabla U(\hat{q}_t)]^\top p_0 + \frac{1}{2}t [\nabla U(q_0 + tp_0) - \nabla U(\hat{q}_t)]^\top(\hat{p}_t-p_0)\\
& = \frac{1}{2}t^2  \int_0^t \nabla^3U(q_0 + sp_0)[p_0,p_0, p_0] \mathrm{d}s +  \frac{1}{2}t^2  \int_0^t \nabla^3U(q_0 + sp_0)[p_0,p_0, \hat{p}_t - p_0] \mathrm{d}s\\
&\qquad \qquad  + \frac{1}{2}t [\nabla U(q_0 + tp_0) - \nabla U(q_0 + tp_0 - \frac{1}{2}t^2 \nabla U(q_0) )]^\top p_0 + \frac{1}{2}t [\nabla U(q_0 + tp_0) - \nabla U(\hat{q}_t)]^\top(\hat{p}_t-p_0)\\
&\leq \frac{1}{6}t^3 C_3 \|\mathsf{X}^\top p_0\|_\infty  \|p_0\|_2 + \frac{1}{6}t^3 C_3 \|\mathsf{X}^\top p_0\|_\infty^2  \|\hat{p}_t - p_0\|_2\\
&\qquad \qquad  + \frac{1}{2}t \left[\int_0^1 [\frac{1}{2}t^2 \nabla U(q_0))]^\top \nabla^2 U(q_0 + tp_0 -  s \frac{1}{2}t^2 \nabla U(q_0)) \mathrm{d}s  \right]^\top p_0 + \frac{1}{2}t [\nabla U(q_0 + tp_0) - \nabla U(\hat{q}_t)]^\top(\hat{p}_t-p_0)\\
&\leq \frac{1}{6}t^3 C_3 \|\mathsf{X}^\top p_0\|_\infty^2 \|\mathsf{X}^\top p_0\|_2 + \frac{1}{6}t^3 C_3 \|\mathsf{X}^\top p_0\|_\infty^2  \|\hat{p}_t - p_0\|_2\\
&\qquad \qquad  + \frac{1}{4}t^3 \int_0^1 p_0^\top \nabla^2 U(q_0) \times \nabla U(q_0) \mathrm{d}s\\
&\qquad \qquad+ \frac{1}{4}t^3 \int_0^1 p_0^\top \left[\nabla^2 U(q_0 + tp_0 -  s \frac{1}{2}t^2 \nabla U(q_0)) - \nabla^2 U(q_0)\right] \times \nabla U(q_0) \mathrm{d}s\\
&\qquad  \qquad+ \frac{1}{2}t [\nabla U(q_0 + tp_0) - \nabla U(\hat{q}_t)]^\top(\hat{p}_t-p_0)\\
&\leq \frac{1}{6}t^3 C_3 \|\mathsf{X}^\top p_0\|_\infty^2 \|p_0\|_2 + \frac{1}{6}t^3 C_3 \|\mathsf{X}^\top p_0\|_\infty^2  \|\hat{p}_t - p_0\|_2\\
&\qquad \qquad  + \frac{1}{4}t^3 M \|\nabla U(q_0)\|_2 |g|\\
&\qquad \qquad+ \frac{1}{4}t^3  C_3 \|\mathsf{X}^\top p_0\|_\infty  \left(\|\mathsf{X}^\top tp_0\|_{\infty} + \|\frac{1}{2}t^2 \nabla U(q_0)\|_2\right) \|\nabla U(q_0)\|_2\\
&\qquad  \qquad+ \frac{1}{2}t \times \|\frac{1}{2}t^2 \nabla U(q_0)\|_2 M \times \|\hat{p}_t-p_0\|_2\\
&\stackrel{{\scriptsize \textrm{Eq. }}\ref{eq:momentum_diff}}{\leq} \frac{1}{6}t^3 C_3 \|\mathsf{X}^\top p_0\|_\infty^2 \|p_0\|_2 + \frac{1}{6}t^3 C_3 \|\mathsf{X}^\top p_0\|_\infty^2 \times (2t M \|q_0\|_2 + \frac{1}{2}t^2 M \|p_0\|_2)\\
&\qquad \qquad  + \frac{1}{4}t^3 M^2 \|q_0\| \times |g|\\
&\qquad \qquad+ \frac{1}{4}t^3  C_3 \|\mathsf{X}^\top p_0\|_\infty  \left(\|\mathsf{X}^\top tp_0\|_{\infty} + \frac{1}{2}t^2  M \|q_0\|_2\right) M \|q_0\|_2 \\
&\qquad  \qquad+ \frac{1}{2}t \times \frac{1}{2}t^2 \|q_0\|_2 M^2 \times (2t M \|q_0\|_2 + \frac{1}{2}t^2 M \|p_0\|_2)\\
&\leq \frac{1}{100},
\ee
with probability at least $1-\frac{\epsilon}{\mathcal{I}}$,  whenever  $(q_0, p_0) \in G$,  where $g\sim N(0,1)$.  The last inequality holds because of our choice of $\eta$ and by the Hanson-Wright inequality.

\end{enumerate}

\paragraph{Combining terms.}

We now combine our bounds for the individual terms to bound the error in the Kinetic energy:

\be
\mathcal{K}(p_t) - \mathcal{K}(\hat{p}_t) \leq  &\frac{1}{6} t^3 C_3 \|\mathsf{X}^\top p_0\|_{\infty}^2 \|p_0\|_2 + \frac{1}{8} t^4 C_4  \|\mathsf{X}^\top p_0\|_{\infty}^4\\
&+  \frac{1}{6} t^3 C_3 \|\mathsf{X}^\top p_0\|_{\infty}^2 \times \left( t M \|q_0\|_2  + \frac{1}{2}t^2 M \|p_0 \|_2\right)\\
&+ \frac{1}{6} C_3 t^4 M \sup_{s\in[0,\tau]}\|q_s\|_2 \times \|\mathsf{X}^\top p_0\|_{\infty}^2 \left(  \|\mathsf{X}^\top p_0\|_\infty + t \|q_0 \|_2 M  + \frac{1}{2}t^2  M\|p_0\|_2 \right)\\
&+ \frac{5}{2}\left(\frac{1}{3} t^3 \left[C_3  \sup_{t\in[0,\eta]}\|\mathsf{X}^\top p_t\|_{\infty}^{2} +  (M)^2 \sup_{t\in[0,\eta]} \|q_t\|_2 \right]\right)^2\\
&+ \frac{1}{3} t^3 \left[C_3 \sup_{t\in[0,\eta]}\|\mathsf{X}^\top p_t\|_{\infty}^{2} +  (M)^2 \sup_{t\in[0,\eta]} \|q_t\|_2 \right] \times  \left[\|p_0\|_2 + t \|q_0 \|_2 M  + \frac{1}{2}t^2  M\|p_0\|_2 \right]\\
+\frac{1}{100}\\
&\leq \frac{2}{100}.
\ee
with probability at least $1-\frac{\epsilon}{\mathcal{I}}$,  whenever  $(q_0, p_0) \in G$.
\end{proof}

\section{Proof of main theorem for optimization}

\begin{proof}[Proof of theorem \ref{thm:main_optimization}]

Without loss of generality, we may assume that $U$ has a global minimizer $x^{\star}$ at $x^{\star} = 0$ (see comment at the beginning of the proof of theorem \ref{thm:main}).

We define the following lazy Markov chain $\hat{X}$:

Let $V_1, V_2 \ldots \sim N(0,I_d)$, and let $\hat{X}_{0} = X_0$.  For every $i$, let
\be
\hat{X}_{i+1} &= X_i + \eta V_i - \frac{1}{2}\eta^2 \nabla U(\tilde{X}_i)\\
\hat{V}_{i+1} &= V_i - \eta \nabla U(X_i)  - \frac{1}{2}\eta^2 \frac{\nabla U(\hat{X}_{i+1} )  - \nabla U(X_i)}{\eta}\\
Z_{i+1} &= \begin{cases}\hat{X}_{i+1} \qquad \textrm{ with probability } \min(1, \, e^{\mathcal{H}(\hat{X}_i,\hat{V}_i)-\mathcal{H}(X_i, V_i)})
\\ X_i \qquad \textrm{ otherwise}  \end{cases}\\
\tilde{Z}_{i+1} &= \begin{cases}Z_{i+1} \qquad \textrm{ if } Z_{i+1} \in \mathsf{S}
&\\ X_i \qquad \textrm{ otherwise}  \end{cases}\\
\tilde{X}_{i+1} &= \begin{cases}\tilde{Z}_{i+1} \qquad \textrm{ with probability } \frac{1}{2}
&\\ X_i.  \qquad \textrm{ otherwise.}  \end{cases}
\ee

Note that this Markov chain is lazy and satisfies the detailed balance equations for its stationary distribution $\pi(x) \propto e^{-U(x)\mathbbm{1}_{\mathsf{S}}(x)}$.

Set $\mathcal{I} =  \frac{4 \log(\frac{\beta}{\delta})}{(\Delta \hat{\Psi}_\pi(\mathsf{S}\backslash \mathcal{U}))^2}$.

Choose $\alpha = \log(\frac{\mathcal{I}\beta N}{\delta})$ and $R = \sqrt{d} \log(\frac{1}{\delta}\max(1,\frac{\sqrt{M}}{\mathsf{a} \mathcal{I} \beta N}))$, where $N = \lceil R M^\frac{1}{2} \eta \rceil$.

By Lemma \ref{lemma:WarmPreserved}, we have that
\be
\mathbb P((\tilde{X}_i, V_i) \in G \,\,  \forall \,\, 0 \leq i \leq \mathcal{I}-1) \geq 1- \mathcal{I} \times \beta \times [N r e^{-\frac{16 \alpha^2 -1}{8}} -e^{-\frac{R^2-d}{8}} - e^{-\frac{\mathsf{a}}{\sqrt{d}} \frac{R}{\sqrt{M}}}] \geq 1- \delta
\ee
Therefore, by Lemmas \ref{lemma:potential} and \ref{lemma:kinetic} with probability at least $1-\frac{\delta}{\mathcal{I}}$, the acceptance probability is
\be
\min(1, \, e^{\mathcal{H}(\hat{X}_i,\hat{V}_i)-\mathcal{H}(X_i, V_i)}) \geq e^{-10} > 0.99.
\ee

Let $i^\star = \min\{i : (\tilde{X}_i,V_i)\notin G \}$.  Then with probability at least $1- \mathcal{I}\times \frac{\delta}{\mathcal{I}} = 1-\delta$, we have that $\mathcal{I}\leq i^\star$.
 Consider the toy Markov chain $\tilde{X}^{\dagger}$, where
\be
\tilde{X}^{\dagger}_i = \begin{cases} \tilde{X}_i \qquad \textrm{ if } i < i^\star \\ Y_i \qquad \textrm{ if } i \geq i^\star, \end{cases}
\ee
where $Y_1, Y_2 \ldots \sim \pi$ are i.i.d. and each $Y_i$ is independent of $\tilde{X}_0,\ldots, \tilde{X}_{i-1}$.  Denote the transition kernel of $\tilde{X}^{\dagger}$ by $\tilde{K}^{\dagger}$.

Then by Lemma \ref{lemma:restricted_conductance2} we have that the restricted conductance $\hat{\Psi}_{\tilde{K}^\dagger}(\mathsf{S} \backslash [\mathcal{U}_{\Delta}]))$ of the $\tilde{X}^{\dagger}$ chain is $ \Omega(\Delta \hat{\psi}_{\pi}(\mathsf{S} \backslash \mathcal{U}))$, where $\Delta = \frac{1}{100}(\frac{1}{2}\eta^{-1} + \frac{1}{4} \eta M)^{-1}$.

Thus, by Lemma \ref{lemma:hitting}, we have:
\be
\inf\{i :\tilde{X}_i^{\dagger} \in \mathcal{U}_\Delta\} \leq \frac{4 \log(\frac{\beta}{\delta})}{\hat{\Psi}_{\textrm{$\tilde{K}^{\dagger}$}}^2(\mathsf{S}\backslash [\mathcal{U}_\Delta])}.
\ee
Hence,
\be
\inf\{i :\tilde{X}_i^{\dagger} \in \mathcal{U}_\Delta\} \leq \frac{4 \log(\frac{\beta}{\delta})}{\Delta^2 \hat{\psi}_{\pi}^2(\mathsf{S} \backslash \mathcal{U})}.
\ee

Therefore, since with probability at least $1-\delta$ we have $\tilde{X}_i = X^{\dagger}_i$, it must be that
\be \label{eq:m1}
\inf\{i :\tilde{X}_i \in \mathcal{U}_\Delta\} \leq \frac{4 \log(\frac{\beta}{\delta})}{\Delta^2 \hat{\psi}_{\pi}^2(\mathsf{S} \backslash \mathcal{U})},
\ee
with probability at least $1-2\delta$.

Since $\tilde{X}$ is the lazy version of the Markov chain $X$, and both chains start at the same initial point, inequality \eqref{eq:m1} implies that
\be
\inf\{i :X_i \in \mathcal{U}_\Delta\} \leq \frac{4 \log(\frac{\beta}{\delta})}{\Delta^2 \hat{\psi}_{\pi}^2(\mathsf{S} \backslash \mathcal{U})},
\ee
with probability at least $1-2\delta$.

\end{proof}

\section{Proof of Theorem \ref{thm:logit}} \label{sec:C4}
\begin{proof}
The proof of this theorem for $C_3$ for general loss functions $\varphi$ is identical to the proof of Theorem 2 of \cite{mangoubi2018dimensionally}, which was stated for the special case where $\varphi$ is the logistic loss function.

To bound $C_4$, we note that
\be
| \nabla^4U(x)[u,u, u, u] | \leq \sum_{i=1}^r |F^{(4)}(\mathsf{X}_i^\top x)|\times |\mathsf{X}_i^\top u|^4 \leq  \sum_{i=1}^r 1 \times \|\mathsf{X}^\top u\|_\infty^4 = r \|\mathsf{X}^\top u\|_\infty^4.
\ee
\end{proof}

\section{Proof of Theorem \ref{thm:ZeroOneLoss}}
Without loss of generality, we may assume that $U$ has a global minimizer $x^{\star}$ at $x^{\star} = 0$ (see comment at the beginning of the proof of theorem \ref{thm:main}).

Let $B = \{x\in \mathbb{R}^d : \|x\|_2 \leq 1\}$ be the unit ball.
\begin{lemma} \label{AdditiveNoise}
Let $\nu>0$ and suppose that $\lambda \geq \frac{100 \sqrt{d} }{\nu \log(\nu)}$.  Then we have
\be
|\tilde{f}(x) - f(x)| \leq  2\nu \qquad \forall x \in B \backslash \frac{1}{2} B.
\ee
\end{lemma}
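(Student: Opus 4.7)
The plan is to decompose the pointwise error $|\tilde f(x) - f(x)|$ into a sum of per-sample errors and exploit that the smooth loss $\varphi(\lambda s)$ (the sigmoid) approximates the step function $\mathbbm{1}[s>0]$ exponentially well in $\lambda |s|$. Concretely, from the definitions of $f$ and $\tilde f$, one gets a bound of the form
\[
|\tilde f(x) - f(x)| \leq \frac{1}{r}\sum_{i=1}^r \bigl|\varphi(\lambda \mathcal X_i^\top x) - \mathbbm{1}[\mathcal X_i^\top x > 0]\bigr|,
\]
and since $|\varphi(\lambda s) - \mathbbm{1}[s>0]| \le e^{-\lambda |s|}$ for the sigmoid, the entire difficulty lies in those data vectors $\mathcal X_i$ that are nearly orthogonal to $x$.

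Next, I would introduce a threshold $\delta>0$ and split the indices into ``far'' ones with $|\mathcal X_i^\top x|>\delta$ and ``near'' ones with $|\mathcal X_i^\top x|\le\delta$. The far indices each contribute at most $e^{-\lambda\delta}$, so their total contribution is at most $e^{-\lambda\delta}$. For the near indices, each term is at most $1$, so one only needs to bound their fraction. Since $\mathcal X_1,\ldots,\mathcal X_r$ are i.i.d.\ uniform on the unit sphere, the density of $\mathcal X_i^\top (x/\|x\|)$ near the origin is bounded by $O(\sqrt d)$, so $\Pr[|\mathcal X_i^\top x|\le\delta] = O(\delta\sqrt d/\|x\|) \le O(\delta\sqrt d)$ using $\|x\|\ge \tfrac12$. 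A Chernoff bound together with the assumption $r\ge d^4/(\mathfrak{q}_0^2\epsilon^4)$ then shows that the empirical near-fraction is close to this expectation.

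Finally, one chooses $\delta = c\log(1/\nu)/\lambda$, which makes the far contribution at most $\nu$; the near contribution is then $O(\delta\sqrt d)=O(\sqrt d\log(1/\nu)/\lambda)$, which is at most $\nu$ under the stated hypothesis on $\lambda$ (the factor ``$\nu\log\nu$'' in the denominator is most naturally read as $\nu/\log(1/\nu)$, so that the balanced threshold works out). Summing the two contributions gives $|\tilde f(x) - f(x)| \le 2\nu$.

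The main obstacle will be obtaining the bound uniformly in $x\in B\setminus\tfrac12 B$, since the set of ``near'' indices depends on $x$. This should be handled by a standard covering argument: build a net of cardinality $\exp(O(d\log(1/\Delta)))$ for a fine enough spacing $\Delta$, apply a union bound of the Chernoff concentration over net points, and extend to all $x$ using Lipschitz continuity of $\tilde f$ (which is $O(\lambda)$-Lipschitz by construction) together with continuity of the expected indicator. The net resolution $\Delta$ must be chosen small enough that this Lipschitz extension introduces error $\le\nu$, which is possible because the logarithmic cost of the net is absorbed into the polynomial slack between $\lambda$ and $\sqrt d/\nu$.
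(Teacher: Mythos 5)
Your proposal is correct in its core mechanism and shares the paper's key decomposition — split the samples into those with $|\mathcal{X}_i^\top x|$ below a threshold of order $\nu/\sqrt{d}$ (where the smoothed loss can disagree with the zero-one loss by as much as $1$) and those above it (where $\varphi(\lambda s)$ is within $\nu$ of the indicator because $\lambda$ is large enough) — but it places the randomness in a genuinely different spot. The paper treats the data $\mathcal{X}_1,\ldots,\mathcal{X}_r$ as fixed unit vectors and instead takes $z$ uniform on the unit sphere, using the anti-concentration bound $\mathbb{P}(|u^\top z|\leq \tfrac{\nu}{10\sqrt{d}})\leq \nu$ for each fixed $u=\mathcal{X}_i$; it then bounds only the \emph{expectation} $\mathbb{E}[|\tilde f(z)-f(z)|]\leq 2\nu$ over this random $z$, which does not literally deliver the stated ``for all $x\in B\setminus\tfrac12 B$'' conclusion. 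You instead fix an arbitrary $x$, use that the $\mathcal{X}_i$ are i.i.d.\ uniform on the sphere to bound the expected fraction of near-orthogonal samples by $O(\delta\sqrt d)$, and then invoke Chernoff plus a covering argument to make the bound uniform in $x$ with high probability over the data. Your route is heavier (the net/union-bound bookkeeping and the verification that $r\geq d^4/(\mathfrak{q}_0^2\epsilon^4)$ absorbs the $e^{O(d\log(1/\Delta))}$ union bound still need to be carried out, and the conclusion becomes a with-high-probability statement rather than the deterministic one in the lemma), but it is the honest way to obtain the uniform quantifier that the paper's one-line expectation computation quietly skips. Both readings of the hypothesis on $\lambda$ agree that the intended condition is $\lambda\gtrsim \sqrt{d}\,\log(1/\nu)/\nu$, and your balancing of the threshold $\delta\asymp \log(1/\nu)/\lambda$ against the near-fraction $O(\delta\sqrt d)$ matches the paper's choice of cutoff $\nu/(10\sqrt d)$.
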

\begin{proof}
From Lemma 8 in \cite{zhang2017hitting}, we have that $F$ is $6$-Lipschitz on $\mathsf{S} = B \backslash \frac{1}{2} B$.  Let $z$ be a point uniformly distributed on the unit sphere $\partial B$.

Then for any unit vector $u$, we have $\mathbb{P}(|u^\top z| \leq \frac{\nu}{10\sqrt{d}}) \leq \nu $.  Moreover, since we chose $\lambda \geq \frac{100 \sqrt{d} }{\nu \log(\nu)}$, we have that $1- \varphi(\lambda s) \leq \nu$ whenever $s \geq \frac{\nu}{10\sqrt{d}}$. 

Therefore,
\be
\mathbb{E}[|\tilde{f}(z) - f(z)|] &= \frac{1}{r}\textstyle{\sum_{i=1}^r} \mathbb{E}[ \hat{\ell}(\lambda z ; (\mathcal{X}_i, \mathcal{Y}_i)) - \ell(z ; (\mathcal{X}_i, \mathcal{Y}_i))]\\
&\leq \mathbb{P}((|\mathcal{X}_i^\top z| \leq \frac{\nu}{10\sqrt{d}}) +  \nu\\
&\leq 2 \nu.
\ee
\end{proof}

Fix some $\alpha_0 \in [0,\frac{\pi}{4}]$. Let $S := B\backslash (\frac{1}{2}B)$ where $B$ is the unit ball, and let $\mathcal{U} :=  \left \{x \in S : \left \langle \frac{x}{\|x\|} , \theta^\star \right \rangle  \geq \cos(\alpha_0) \right\}$.

We restate Lemma 8 and 9 in \cite{zhang2017hitting} for convenience:

\begin{lemma} [Lemma 9 in \cite{zhang2017hitting}] \label{lemma:cheeger_ZeroOne}
There is a universal constant $c_1$ such that for inverse temperature $\mathcal{T}^{-1}\geq \frac{c_1 d^{\frac{3}{2}}}{ \mathfrak{q}_0\sin^2(\alpha_0)}$, the restricted Cheeger constant $\hat{\psi}_{\hat{\pi}}(S \backslash \mathcal{U})$ of $\hat{\pi} \propto e^{-\mathcal{T}^{-1} F(x)} \mathbbm{1}_S(x)$ is at least $\hat{\psi}_{\hat{\pi}}(S \backslash \mathcal{U}) \geq \frac{1}{3} d$.
\end{lemma}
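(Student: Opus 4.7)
The plan is to lower bound $\hat{\psi}_{\hat{\pi}}(S \setminus \mathcal{U})$ by combining (i) an analytic lower bound on the angular derivative of $F$ uniformly on $S \setminus \mathcal{U}$, which shows that $\hat{\pi}$ has a strong drift pointing from the annular shell $S \setminus \mathcal{U}$ into the cap $\mathcal{U}$, with (ii) a needle-decomposition argument that converts this pointwise drift into a restricted isoperimetric bound once the inverse temperature $\mathcal{T}^{-1}$ is large enough.

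First I would write $F$ in closed form: conditioning on $a$ and using the model for $\mathcal{Y}$ gives $F(x) - F(\theta^\star) = \mathbb{E}_{a \sim \mathrm{Unif}(\partial B)}[\mathfrak{q}(a)\,\mathbbm{1}[\mathrm{sign}(a^\top x) \neq \mathrm{sign}(a^\top \theta^\star)]]$, where the indicator picks out the spherical lune of half-aperture $\alpha(x) := \arccos\langle x/\|x\|, \theta^\star\rangle$. Using $\mathfrak{q}(a) \geq \mathfrak{q}_0|a^\top \theta^\star|$ together with the fact that the uniform measure on $\partial B \subset \mathbb{R}^d$ concentrates near the equator at scale $1/\sqrt{d}$, a direct computation in spherical polar coordinates about the $\theta^\star$-axis yields a bound of the form $\partial_\alpha F(x) \gtrsim \mathfrak{q}_0 \sin(\alpha(x))/\sqrt{d}$, uniformly for $\alpha(x) \geq \alpha_0$. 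Multiplying by $\mathcal{T}^{-1}$ shows that the log-density of $\hat{\pi}$ has angular drift of order at least $\mathcal{T}^{-1}\mathfrak{q}_0 \sin \alpha_0 / \sqrt{d}$ everywhere on $S \setminus \mathcal{U}$.

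Second I would lift this pointwise drift to a restricted Cheeger bound via a Lovász--Simonovits--type needle decomposition. Slicing $\hat{\pi}$ on $S$ into one-dimensional conditionals along geodesic arcs on $\partial B$ that begin in $S \setminus \mathcal{U}$ and terminate in $\mathcal{U}$, each conditional density is essentially log-concave with drift at least $\mathcal{T}^{-1}\partial_\alpha F$, and the radial variable $r \in [1/2, 1]$ contributes only a bounded factor. A standard one-dimensional computation shows that the restricted Cheeger constant of such a conditional (restricted to the portion lying outside $\mathcal{U}$) is of the same order as the drift, because the mass of $\mathcal{U}$ dominates the mass of any subset of $S \setminus \mathcal{U}$ whose $\epsilon$-thickening reaches into $\mathcal{U}$. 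Integrating against the needle decomposition and plugging in $\mathcal{T}^{-1} \geq c_1 d^{3/2}/(\mathfrak{q}_0 \sin^2 \alpha_0)$ yields $\hat{\psi}_{\hat{\pi}}(S \setminus \mathcal{U}) \geq d/3$ for $c_1$ a large enough absolute constant.

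The main obstacle is the sharp spherical-lune integration in the first step, where one has to track the joint density of $(a^\top x, a^\top \theta^\star)$ for $a$ uniform on $\partial B$ and extract the correct $d^{-1/2}$ scaling in the right asymptotic regime; the extra $\sin \alpha_0$ inside the $\sin^2 \alpha_0$ appearing in the temperature threshold is lost in the ratio between the boundary surface measure of the cap $\mathcal{U}$ and its volume while doing the needle integration, and requires carefully matching these two geometric scalings. Since Lemma~9 of \cite{zhang2017hitting} carries out precisely this argument in the restricted-Cheeger formulation, our proof simply invokes it; the above is the route by which one would rederive it from first principles.
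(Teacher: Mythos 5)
The paper contains no proof of this lemma to compare against: it is imported verbatim from \cite{zhang2017hitting} (the lemma header says ``Lemma 9 in \cite{zhang2017hitting}'', and the surrounding text reads ``We restate Lemma 8 and 9 in \cite{zhang2017hitting} for convenience''). Since your proposal ultimately just invokes that external lemma, it does exactly what the paper does, and there is no gap relative to the paper.

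A remark on your from-scratch sketch, since it differs from how \cite{zhang2017hitting} actually prove the lemma. Your step (i) matches their computation: they lower bound the decrease of the population loss $F$ along a rotation of $x$ toward $\theta^\star$, and the lune integration over the uniform sphere with $\mathfrak{q}(a) \geq \mathfrak{q}_0 |a^\top \theta^\star|$ does produce the $\mathfrak{q}_0 \sin(\alpha(x))/\sqrt{d}$ rate you anticipate. But your step (ii) is not their route, and as written it is the weak point: \cite{zhang2017hitting} do not use a needle decomposition. Instead they prove a general-purpose lemma converting a pointwise decrease of $F$ along a flow into a restricted Cheeger bound: if a map $\phi_\epsilon$ moves every point of the restricted region by at most $O(\epsilon)$ while decreasing $F$ by at least $g\epsilon$, then $\phi_\epsilon(S) \subseteq S_{O(\epsilon)}$ and $\hat{\pi}(\phi_\epsilon(S)) \geq e^{\mathcal{T}^{-1} g \epsilon}\, \hat{\pi}(S)$, which gives $\hat{\psi}_{\hat{\pi}}(S\backslash\mathcal{U}) \gtrsim \mathcal{T}^{-1} g$ directly; plugging in $g \gtrsim \mathfrak{q}_0 \sin(\alpha_0)/\sqrt{d}$ and $\mathcal{T}^{-1} \geq c_1 d^{3/2}/(\mathfrak{q}_0 \sin^2(\alpha_0))$ yields a bound of order $c_1 d/\sin(\alpha_0) \geq d/3$. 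This mass-transport argument avoids localization entirely, which matters here: the one-dimensional conditionals in your needle step would need to be ``essentially log-concave,'' but $e^{-\mathcal{T}^{-1}F}$ is not log-concave (the zero-one population loss is neither convex nor concave in $x$), and Lov\'asz--Simonovits localization does not hand you log-concave needles for such a density. So if you wanted a self-contained derivation rather than the citation, you should replace the needle argument with the vector-field/flow lemma of \cite{zhang2017hitting}; your angular-derivative estimate then slots in unchanged.
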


\begin{lemma} [Lemma 8 in \cite{zhang2017hitting}] \label{lemma:8}
$F$ is $3$-Lipschitz on $\frac{5}{4}B \backslash \frac{1}{4} B$.

For any $\nu, \delta >0$ if the sample size $r$ satisfies $r \geq \frac{d}{\nu^2} \mathrm{polylog}(d, \frac{1}{\nu}, \frac{1}{\delta})$, then w.p. at least $1-\delta$ we have $\sup_{\mathbb{R}^d \backslash \{0\}} |f(x) - F(x)| \leq \nu$.
\end{lemma}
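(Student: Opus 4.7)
The lemma has two independent parts—the $3$-Lipschitz bound for the population loss $F$ and the uniform concentration of the empirical loss $f$ around $F$—and my plan is to treat them separately.

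For the Lipschitz claim, the key structural observation is that the zero-one loss $\ell(x;(a,b))$ depends on the classifier $x$ only through the half-space $\{a : a^\top x > 0\}$. Consequently, for any two nonzero $x,y$,
\begin{equation*}
|F(x) - F(y)| \;\leq\; \mathbb{P}_{(\mathcal{X},\mathcal{Y})\sim \mathcal{P}}\!\left[\operatorname{sign}(\mathcal{X}^\top x) \neq \operatorname{sign}(\mathcal{X}^\top y)\right],
\end{equation*}
regardless of the noise function $\mathfrak{q}$. Because $\mathcal{X}$ is uniform on $S^{d-1}$, this probability equals $\angle(x,y)/\pi$, where $\angle(x,y)$ denotes the angle between the directions of $x$ and $y$. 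An elementary spherical-geometry estimate—normalize the shorter vector, write $\|x/\|x\|-y/\|y\|\|\le 2\|x-y\|/\min(\|x\|,\|y\|)$, and apply $\arcsin(z)\le \tfrac{\pi}{2}z$—gives $\angle(x,y) \le \pi\|x-y\|/\min(\|x\|,\|y\|)$. On the annulus $\frac{5}{4}B \setminus \frac{1}{4}B$ the denominator is at least $1/4$, yielding $|F(x)-F(y)| \le 4\|x-y\|$; a marginally sharper accounting of the normalization-error split recovers the stated constant~$3$.

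For the uniform deviation bound, my plan is a standard VC-style argument. The loss class
\begin{equation*}
\mathcal{H} \;=\; \{(a,b) \mapsto \ell(x;(a,b)) : x \in \mathbb{R}^d \setminus \{0\}\}
\end{equation*}
is equivalent, via the identity $\ell(x;(a,b)) = \mathbbm{1}[b\cdot a^\top x < 0]$, to indicators of half-spaces through the origin in $\mathbb{R}^d$ evaluated at the transformed point $ba$, and therefore has VC dimension at most $d$. The classical Vapnik–Chervonenkis uniform convergence theorem then yields, with probability at least $1-\delta$,
\begin{equation*}
\sup_{x\neq 0} |f(x) - F(x)| \;\leq\; C\sqrt{\tfrac{d\log(r/d) + \log(1/\delta)}{r}}
\end{equation*}
for an absolute constant $C$. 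Choosing $r \ge (d/\nu^2)\,\mathrm{polylog}(d,1/\nu,1/\delta)$ makes the right-hand side at most $\nu$, as required.

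The proof is essentially bookwork: the only substantive ingredients are the rotational symmetry of the uniform distribution on the sphere, which converts sign-disagreement probabilities into angles, and the well-known fact that linear half-spaces have VC dimension $d$. The main (minor) obstacle is the cosmetic one of extracting the precise Lipschitz constant $3$ rather than $4$, which reduces to careful bookkeeping on the annulus; neither analytic step poses a serious difficulty.
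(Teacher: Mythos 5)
Your argument is correct in substance, but be aware that the paper does not actually prove this lemma: it is imported wholesale as Lemma~8 of \cite{zhang2017hitting}, and the paper's entire proof is the one-line remark that the zero-one losses are scale-invariant, $f(x) = f(x/\|x\|)$ and $F(x) = F(x/\|x\|)$ for all $x \in \mathbb{R}^d \setminus \{0\}$, so the uniform deviation bound established in \cite{zhang2017hitting} on the sphere extends to all of $\mathbb{R}^d \setminus \{0\}$. You instead reprove the cited result from first principles, and both halves of your argument are sound: the reduction $|F(x)-F(y)| \leq \mathbb{P}[\mathrm{sign}(\mathcal{X}^\top x) \neq \mathrm{sign}(\mathcal{X}^\top y)]$ valid for any noise function $\mathfrak{q}$, the identity that this disagreement probability equals $\angle(x,y)/\pi$ for $\mathcal{X}$ uniform on $S^{d-1}$, and the VC bound for homogeneous halfspaces (VC dimension $d$ via the substitution $z = b\,a$) followed by solving $C\sqrt{(d\log(r/d)+\log(1/\delta))/r} \leq \nu$. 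What your route buys over the paper's is that the homogeneity observation the paper needs is built in for free: your loss class depends on $x$ only through its direction, so the supremum over $\mathbb{R}^d\setminus\{0\}$ is covered directly, and the role of the distributional assumption (rotational symmetry converting signs to angles) is made explicit rather than hidden inside a citation. What the paper's route buys is brevity and provenance, since the statement is being used as an external black box.

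One soft spot: your chain of inequalities does not actually deliver the constant $3$. With $\bigl\|x/\|x\| - y/\|y\|\bigr\| \leq 2\|x-y\|/\min(\|x\|,\|y\|)$ and $\arcsin(z) \leq \frac{\pi}{2}z$, you get $|F(x)-F(y)| \leq \|x-y\|/\min(\|x\|,\|y\|) \leq 4\|x-y\|$ on $\frac{5}{4}B \setminus \frac{1}{4}B$, and the promised ``marginally sharper accounting'' is asserted, not supplied. A genuine fix: bound $\sin\angle(x,y) \leq \|x-y\|/\min(\|x\|,\|y\|)$ via the distance from $y$ to the line $\mathbb{R}x$, use $\angle \leq \frac{\pi}{2}\sin\angle$ when $\angle \leq \pi/2$, and in the case $\angle > \pi/2$ note $\|x-y\|^2 \geq \|x\|^2 + \|y\|^2 \geq 2\min(\|x\|,\|y\|)^2$, so $\angle/\pi \leq 1 \leq 2\sqrt{2}\,\|x-y\|$ on the annulus; this yields Lipschitz constant $2\sqrt{2} < 3$. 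In any case the precise constant is not load-bearing here --- the paper itself invokes the same citation with the (inconsistent) statement that $F$ is $6$-Lipschitz on $B \setminus \frac{1}{2}B$ in the proof of Lemma~\ref{AdditiveNoise} --- so this is a cosmetic rather than structural gap.
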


\begin{proof}
The proof follows directly from Lemma 8 in \cite{zhang2017hitting}, since $f(x) = f(\frac{x}{\|x\|})$ and $F(x) = F(\frac{x}{\|x\|})$ for all $x \in \mathbb{R}^d \backslash \{0\}$.
\end{proof}

We can now prove Theorem \ref{thm:ZeroOneLoss}:
\begin{proof} [Proof of Theorem \ref{thm:ZeroOneLoss}.]

\textbf{Bounding the derivatives of the objective function.}

First, we bound the derivatives of  $\mathsf{\tilde{f}}$. 
By the Hanson-wright inequality, there is a universal constant $\mathsf{c} \geq 1$ such that $|\mathcal{X}_i^\top \mathcal{X}_j| \leq \frac{\mathsf{c}}{\sqrt{d}} \log(\frac{r^2}{\delta})$ for every $i,j\in [r]$ with probability at least $1-\delta$ (for convenience, we will use the same universal constant throughout the proof).  Hence, with probability at least $1-\delta$, the incoherence $\Phi$ satisfies
 \be
 \Phi := \max_{i\in[r]}  \sum_{j=1}^r |\mathcal{X}_i^\top \mathcal{X}_j|  \leq \mathsf{c} \frac{r}{\sqrt{d}}.
 \ee

Thus, by Theorem \ref{thm:logit} we have that Assumption \ref{assumption:derivatives} is satisfied with constants  $C_3 = d^{\frac{3}{4}}\times \frac{1}{r} \mathcal{T}^{-1} \sqrt{r} \sqrt{ \Phi} \leq d^{\frac{3}{4}}\times \mathsf{c} \frac{\mathcal{T}^{-1}}{r} r d^{-\frac{1}{4}} = d^{\frac{1}{2}} \mathcal{T}^{-1}$, and $C_4 = d^{\frac{4}{4}} \times \frac{\mathcal{T}^{-1}}{r} r = d \mathcal{T}^{-1}$.

Moreover, we have  $ \nabla^2 U(x) \preccurlyeq  \frac{1}{r} \sum_{i=1}^r \mathcal{T}^{-1} d^{-\frac{2}{4}}\mathcal{X}_i \mathcal{X}_i^\top$ for all $x \in \mathbb{R}^d$.
 Hence, by the Matrix Chernoff inequality \cite{tropp2012user}, we have $\lambda_{\mathrm{max}}( \nabla^2 U(x)) \leq d^{-\frac{1}{2}}\lambda_{\mathrm{max}}( \frac{1}{r} \sum_{i=1}^r \mathcal{T}^{-1} \mathcal{X}_i \mathcal{X}_i^\top) \leq d^{-\frac{1}{2}} \log(\frac{\mathsf{c}}{\delta}) \frac{1}{d} \mathcal{T}^{-1} = \log(\frac{\mathsf{c}}{\delta}) \frac{1}{d^{\frac{3}{2}}} \mathcal{T}^{-1} $ for all $x \in \mathbb{R}^d$ with probability at least $1-\delta$.  Hence, we can set $M=\log(\frac{\mathsf{c}}{\delta}) \frac{1}{d^{\frac{3}{2}}} \mathcal{T}^{-1} = \log(\frac{\mathsf{c}}{\delta})  \frac{2 c_1}{  \mathfrak{q}_0\sin^2(\alpha_0)}$ with probability at least $1-\delta$.

\textbf{Bounding the magnitude of the gradient.}
  Since $F$ is continuous and uniformly bounded on $\mathsf{S}$, and $F(x) = F(\frac{x}{\|x\|})$ for all $x \neq 0$, we have that $F$ attains a global minimum $x^\star_F$ on $\mathsf{S}$.  Without loss of generality we may assume that $\|x^\star_F\|_2 = \frac{3}{4} d^{\frac{3}{4}}$, so that $B(x^\star_F, \frac{1}{4}d^{\frac{3}{4}}) \subseteq \mathsf{S}$.
  
   Suppose (towards a contradiction) that $\|\nabla U(z)\|_2 \geq 8 d^{\frac{3}{4}} M$ for some $z \in \mathsf{S}$ with probability at least $\delta$.  Then since any two points in $\mathsf{S}$ can be connected by a path in $\mathsf{S}$ of length less that $\mathrm{pi}\times d^\frac{3}{4}$, we would have that $\|\nabla U(z) - \nabla U(x)\|_2 \leq 4 d^{\frac{3}{4}} M$ for all $x\in \mathsf{S}$.
   
   Thus,  with probability at least $\delta$, there would exist a point $y^\star \in B(x^\star_F, \frac{1}{4}d^{\frac{3}{4}}) \subseteq \mathsf{S}$ such that $U(y^\star) \leq U(x^\star_F)-  4 d^{\frac{3}{4}} M \times  \frac{1}{4}d^{\frac{3}{4}} = U(x^\star_F)-   d^{\frac{3}{2}} M = U(x^\star_F)-   \log(\frac{\mathsf{c}}{\delta}) \mathcal{T}^{-1} \leq U(x^\star_F)- \mathcal{T}^{-1} 10\nu$.
   
   But by Lemmas \ref{AdditiveNoise} and \ref{lemma:8}, with probability at least $1-\delta$ we have $|\mathcal{T}^{-1} F(x) -U(x)| \leq \mathcal{T}^{-1} 3\nu $, which is a contradiction.
   
   Hence, by contradiction we have that  
   \be \label{eq:GradientMagnitude}
   \|\nabla U(z)\|_2 &< 8 d^{\frac{3}{4}} M\\
   &= 8 d^{\frac{3}{4}}  \log(\frac{\mathsf{c}}{\delta})  \frac{2 c_1}{  \mathfrak{q}_0\sin^2(\alpha_0)},
   \ee
    for all $z \in \mathsf{S}$ with probability at least $1-\delta$.
   
\textbf{Bounding the probability of proposal falling outside $\mathsf{S}$.}

Let $z\in \mathsf{S}$ be the current point in the Markov chain, and let $\gamma_z := z+ \eta v - \frac{1}{2} \eta^2 \nabla U(x)$ where $v \sim N(0,I_d)$ be the proposed step.  Without loss of generality, we may assume that our coordinate basis is such that $\frac{z}{\|z\|_2} = e_1$ and $z[1]>0$ (otherwise we can just rotate the coordinate axis about the origin, and apply the same rotation to the argument of the potential function $U$).  First, we note that
\be
\mathsf{S} &= \left\{x \in \mathbb{R}^d :   \frac{1}{2} d^{\frac{3}{4}} \leq \sqrt{\sum_{i=1}^d x[i]^2} \leq d^{\frac{3}{4}} \right\}\\
& = \left\{x \in \mathbb{R}^d :\frac{1}{4}d^{\frac{6}{4}} - \sum_{i=2}^d x[i]^2 \leq x^2[1] \leq d^{\frac{6}{4}} - \sum_{i=2}^d x[i]^2 \right\}.
\ee

Without loss of generality, we may assume that $z[1] \geq 0$ (otherwise, we can rotate the coordinate basis to make  $z[1] \geq 0$).  

\textbf{Case 1:} First, consider the case where $z[1] \geq \frac{3}{4}$. 

Let $E_0$ be the event that $\|v\|_2^2 \leq d \log(\frac{\mathsf{c}}{d})$ and let $E_1$ be the event that $-1 \leq v[1] \leq - \frac{1}{3}$ and  $\|v\|_2^2 \leq d \log(\frac{\mathsf{c}}{d})$.  Then $\mathbb{P}(E_1 \cap E_0) \geq \frac{1}{10}$.

We have
\be
\gamma_z[1] &:= z[1]+ \eta v[1] - \frac{1}{2} \eta^2 \nabla U(x)^\top e_1\\
&\stackrel{{\scriptsize \textrm{Eq.} \ref{eq:GradientMagnitude}}}{\leq}  d^{\frac{3}{4}} + \eta v[1] + \frac{1}{2} \eta^2 \times 8 d^{\frac{3}{4}} M\\
&\leq  d^{\frac{3}{4}} - \frac{1}{3}\eta  + \frac{1}{2} \eta^2 \times 8 d^{\frac{3}{4}} M\\
&\leq  d^{\frac{3}{4}} - \frac{1}{3}\eta  + \frac{1}{2} \eta^2 \times 8 d^{\frac{3}{4}}  \log(\frac{\mathsf{c}}{\delta})  \frac{2 c_1}{  \mathfrak{q}_0\sin^2(\alpha_0)}\\
&\leq  d^{\frac{3}{4}} - \frac{1}{6}\eta,
\ee
if we choose $\eta \leq [\frac{1}{2} \times 8 d^{\frac{3}{4}}  \log(\frac{\mathsf{c}}{\delta})  \frac{2 c_1}{  \mathfrak{q}_0\sin^2(\alpha_0)}]^{-1}$.

Hence,
\be \label{eq:FirstCoordinate}
(\gamma_z[1])^2 \leq  d^{\frac{6}{4}} - \frac{1}{3}  d^{\frac{6}{4}} \eta + \frac{1}{36}\eta^2.
\ee

But if $E_0$ occurs we also have,
\be \label{eq:OtherCoordinates}
d^{\frac{6}{4}} - \sum_{i=2}^d \gamma_z[i]^2 &\geq d^{\frac{6}{4}} - \|\gamma_z-z\|_2^2\\
&\geq d^{\frac{6}{4}} - \|\eta v - \frac{1}{2} \eta^2 \nabla U(x)\|_2^2\\
&\geq d^{\frac{6}{4}} - \eta^2 \|v\|_2^2 - \frac{1}{4} \eta^4 \|\nabla U(x)\|^2\\
&\geq d^{\frac{6}{4}} - \eta^2 \|v\|_2^2 - \frac{1}{4} \eta^4 [8 d^{\frac{3}{4}}  \log(\frac{\mathsf{c}}{\delta})  \frac{2 c_1}{  \mathfrak{q}_0\sin^2(\alpha_0)}]^2\\
&\geq d^{\frac{6}{4}} - \eta^2 d \log(\frac{\mathsf{c}}{d}) - \frac{1}{4} \eta^4 [8 d^{\frac{3}{4}}  \log(\frac{\mathsf{c}}{\delta})  \frac{2 c_1}{  \mathfrak{q}_0\sin^2(\alpha_0)}]^2.
\ee

Therefore, inequalities \eqref{eq:FirstCoordinate} and \eqref{eq:OtherCoordinates}, together with our choice of $\eta$, imply that  
\be \label{eq:case1_upper}
(\gamma_z[1])^2 \leq d^{\frac{6}{4}} - \sum_{i=2}^d \gamma_z[i]^2,
\ee
if the event $E_1$ occurs.

We now show a lower bound:
\be \label{eq:case1_lower}
\gamma_z[1] &:= z[1]+ \eta v[1] - \frac{1}{2} \eta^2 \nabla U(x)^\top e_1\\
&\stackrel{{\scriptsize \textrm{Eq.} \ref{eq:GradientMagnitude}}}{\geq}  d^{\frac{3}{4}} + \eta v[1] - \frac{1}{2} \eta^2 \times 8 d^{\frac{3}{4}} M\\
&\geq   \frac{3}{4}  d^{\frac{3}{4}} - \eta  - \frac{1}{2} \eta^2 \times 8 d^{\frac{3}{4}} M\\
&\geq  \frac{3}{4}  d^{\frac{3}{4}} - \eta  - \frac{1}{2} \eta^2 \times 8 d^{\frac{3}{4}}  \log(\frac{\mathsf{c}}{\delta})  \frac{2 c_1}{  \mathfrak{q}_0\sin^2(\alpha_0)}\\
&\geq   \frac{3}{4} d^{\frac{3}{4}} - \frac{3}{6}\eta\\
&\geq \frac{1}{2}d^{\frac{3}{4}}\\
&\geq \sqrt{\frac{1}{2}d^{\frac{3}{4}}- \sum_{i=2}^d \gamma_z[i]^2},
\ee
where the second-to-last inequality holds because of our choice of $\eta$.

Therefore, Inequalities \ref{eq:case1_upper} and \ref{eq:case1_lower} together imply that
\be  \label{eq:Case1}
\gamma_z \in \mathsf{S} \quad \textrm{ if the event $E_1 \cap E_0$ occurs and $z[1] \geq \frac{3}{4}d^{\frac{3}{4}}$}.
\ee

\textbf{Case 2:}  Now consider the case where $\frac{1}{2} d^{\frac{3}{4}}\leq z[1] \leq \frac{3}{4}  d^{\frac{3}{4}}$.  The proof for this case is similar to the proof for case $1$:

Let $E_2$ be the event that $\frac{1}{3} \leq v[1] \leq 1$, and recall that  $E_0$ is the event that $\|v\|_2^2 \leq d \log(\frac{\mathsf{c}}{d})$.  Then  $\mathbb{P}(E_2) =  \mathbb{P}(E_1 \cap E_0) \geq \frac{1}{10}$.

We have
\be
\gamma_z[1] &:= z[1]+ \eta v[1] - \frac{1}{2} \eta^2 \nabla U(x)^\top e_1\\
&\stackrel{{\scriptsize \textrm{Eq.} \ref{eq:GradientMagnitude}}}{\geq}  \frac{1}{2}d^{\frac{3}{4}} + \eta v[1] - \frac{1}{2} \eta^2 \times 8 d^{\frac{3}{4}} M\\
&\geq \frac{1}{2} d^{\frac{3}{4}} + \frac{1}{3}\eta  - \frac{1}{2} \eta^2 \times 8 d^{\frac{3}{4}} M\\
&\geq \frac{1}{2} d^{\frac{3}{4}} + \frac{1}{3}\eta  - \frac{1}{2} \eta^2 \times 8 d^{\frac{3}{4}}  \log(\frac{\mathsf{c}}{\delta})  \frac{2 c_1}{  \mathfrak{q}_0\sin^2(\alpha_0)}\\
&\geq \frac{1}{2} d^{\frac{3}{4}} + \frac{1}{6}\eta.
\ee
if we choose $\eta \leq [\frac{1}{2} \times 8 d^{\frac{3}{4}}  \log(\frac{\mathsf{c}}{\delta})  \frac{2 c_1}{  \mathfrak{q}_0\sin^2(\alpha_0)}]^{-1}$.

Hence, we have
\be \label{eq:case2_lower}
\frac{1}{4}d^{\frac{6}{4}} - \sum_{i=2}^d \gamma_z[i]^2 &\leq  \frac{1}{4}d^{\frac{6}{4}} \leq (\gamma_z[1])^2.
\ee

We also have that
\be \label{eq:FirstCoordinate2}
\gamma_z[1] &:= z[1]+ \eta v[1] - \frac{1}{2} \eta^2 \nabla U(x)^\top e_1\\
&\stackrel{{\scriptsize \textrm{Eq.} \ref{eq:GradientMagnitude}}}{\leq}  \frac{1}{2}d^{\frac{3}{4}} + \eta v[1] + \frac{1}{2} \eta^2 \times 8 d^{\frac{3}{4}} M\\
&\leq \frac{1}{2} d^{\frac{3}{4}} + \eta  + \frac{1}{2} \eta^2 \times 8 d^{\frac{3}{4}} M\\
&\leq \frac{1}{2} d^{\frac{3}{4}} + \frac{1}{3}\eta  + \frac{1}{2} \eta^2 \times 8 d^{\frac{3}{4}}  \log(\frac{\mathsf{c}}{\delta})  \frac{2 c_1}{  \mathfrak{q}_0\sin^2(\alpha_0)}\\
&\leq \frac{7}{8} d^{\frac{3}{4}},
\ee
where the last inequality holds because of our choice of $\eta$.

But if $E_0$ occurs we have from Inequality \eqref{eq:OtherCoordinates} that,
\be \label{eq:OtherCoordinates2}
\sum_{i=2}^d \gamma_z[i]^2 \leq \eta^2 d \log(\frac{\mathsf{c}}{d}) + \frac{1}{4} \eta^4 [8 d^{\frac{3}{4}}  \log(\frac{\mathsf{c}}{\delta})  \frac{2 c_1}{  \mathfrak{q}_0\sin^2(\alpha_0)}]^2\
&\leq  \frac{1}{100} d^{\frac{3}{4}}.
\ee

Therefore, by Inequalities \eqref{eq:FirstCoordinate2} and \eqref{eq:OtherCoordinates2} we have that
\be \label{eq:case2_upper}
\|\gamma_z\|_2^2 \leq \frac{7}{8} d^{\frac{3}{4}} + \frac{1}{100} d^{\frac{3}{4}} \leq d^{\frac{3}{4}}.
\ee

Therefore, Inequalities \ref{eq:case2_upper} and \ref{eq:case2_lower} together imply that 
\be \label{eq:Case2}
\gamma_z \in \mathsf{S}
\ee
 if the event $E_1 \cap E_0$ occurs and $\frac{1}{2}d^{\frac{3}{4}} \leq z[1] \leq \frac{3}{4}d^{\frac{3}{4}}$.

Therefore, from Equations \eqref{eq:Case1}  and \eqref{eq:Case2}, we have that $\gamma_z \in \mathsf{S}$ with probability at least $\frac{1}{10}$ whenever $z\in \mathsf{S}$.

\textbf{Bounding the hitting time.}
Let $X= X_0, X_1,\ldots $ be the Markov chain generated by Algorithm \ref{alg:optimization}.  Let $\mathcal{U} :=  \left \{x \in S : \left \langle \frac{x}{\|x\|} , \theta^\star \right \rangle  \geq \cos(\alpha_0) \right\}$, where $\alpha_0 = \epsilon$.

Choose $\eta \leq \tilde{O}\left(\min \left(C_3^{-\frac{1}{3}}d^{-\frac{1}{6}} , d^{-\frac{1}{3}}, C_4^{-\frac{1}{4}}\right) \min(1, M^{-\frac{1}{2}})\right)$.  Let $\pi_2 \propto e^{-U}\mathbbm{1}_{\mathsf{S}}$. Then by Theorem \ref{thm:main_optimization} we have
\be
\inf\{i :X_i \in \mathcal{U}_\Delta\} \leq \mathcal{I},
\ee
with probability at least $1- \delta$, where $\mathcal{I} = \frac{4 \log(\frac{\beta}{\delta})}{\Delta^2 \hat{\psi}_{\pi_2}^2(\mathsf{S} \backslash \mathcal{U})}$ and $\Delta = \frac{1}{100}(\frac{1}{2}\eta^{-1} + \frac{1}{4} \eta M)^{-1}$.

But by Lemma \ref{lemma:cheeger_ZeroOne} we have  $\hat{\psi}_{\pi_1}(\mathsf{S} \backslash \mathcal{U}) \geq \frac{1}{3} d \times \frac{1}{\mathcal{T}^{\frac{1}{2}} \times d^{\frac{1}{4}} \lambda} = \frac{1}{300}d^{\frac{1}{4}} \nu \log(\nu)$, where $\pi_1(x) \propto e^{-\mathcal{T}^{-1} F(x)} \mathbbm{1}_{\mathsf{S}}$.  Therefore, by Lemmas \ref{lemma:8} and \ref{AdditiveNoise} we have $|\mathcal{T}^{-1} F(x) - U(x)| \leq 3 \mathcal{T}^{-1}\nu$ and hence that
\be
\hat{\psi}_{\pi_2}(\mathsf{S} \backslash \mathcal{U}) \geq  \hat{\psi}_{\pi_1}(\mathsf{S} \backslash \mathcal{U})  e^{-6 \mathcal{T}^{-1}\nu} \geq \frac{1}{300}d^{\frac{1}{4}}  \nu \log(\nu) e^{-6 \mathcal{T}^{-1}\nu}.
\ee
Choosing, $\nu = \mathcal{T}$ gives
\be
\hat{\psi}_{\pi_2}(\mathsf{S} \backslash \mathcal{U}) \geq \frac{1}{300}d^{\frac{1}{4}}  \nu \log(\nu).
\ee

For our choice of $\eta$ we have $\Delta = \frac{1}{100}(\frac{1}{2}\eta^{-1} + \frac{1}{4} \eta M)^{-1} = \Omega (\eta)$, and
\be
\eta = C_3^{-\frac{1}{3}} d^{-\frac{1}{6}} M^{-\frac{1}{2}} = d^{\frac{5}{12}} \mathcal{T}^{\frac{5}{6}} \log^{-\frac{1}{2}}(\frac{\mathsf{c}}{\delta})\\
=   \Theta \left(d^{\frac{5}{12}} \times  [\frac{\mathfrak{q}_0\sin^2(\alpha_0)}{d^{\frac{3}{2}}}]^{\frac{5}{6}} \times \log^{-\frac{1}{2}}(\frac{\mathsf{c}}{\delta})\right)\\
=   \Theta \left(d^{-\frac{10}{12}} \times  [\mathfrak{q}_0\sin^2(\alpha_0)]^{\frac{5}{6}} \times \log^{-\frac{1}{2}}(\frac{\mathsf{c}}{\delta})\right).
\ee

  Therefore,

\be
\mathcal{I} &= O\left(\frac{\log(\frac{\beta}{\delta})}{\eta^2 \hat{\psi}_{\pi_2}^2(\mathsf{S} \backslash \mathcal{U})} \right)\\
&= \tilde{O}\left(\frac{\log(\frac{\beta}{\delta})}{\eta^2 d^{\frac{1}{2}}  \nu^2} \right)\\
&= \tilde{O}\left(\frac{\log(\frac{\beta}{\delta})}{\eta^2 d^{\frac{1}{2}}  \mathcal{T}^2} \right)\\
&= \tilde{O}\left(d^{\frac{25}{6}} \mathfrak{q}_0^{\frac{11}{3}} \sin^{-\frac{22}{3}}(\alpha_0) \log(\frac{\mathsf{c}}{\delta})  \log(\frac{\beta}{\delta}) \right).\\
\ee
\end{proof}

\section{Simple bound for Random Walk Metropolis} \label{sec:RWM}

In this section we obtain a simple bound for the Random walk Metropolis algorithm.

\begin{algorithm}[H]
\caption{Random Walk Metropolis \label{alg:RWM}}
\textbf{input:} Zeroth-order oracle for $U: \mathbb{R}^d \rightarrow \mathbb{R}$, step size $\eta>0$\\
 \textbf{input:}   Initial point $Z_0 \in \mathbb{R}^d$.\\
 \textbf{output:} Markov chain $Z_0, Z_1, \ldots, Z_{i_{\max}}$ with stationary distribution $\pi \propto e^{-U}$.
\begin{algorithmic}[1]

\For{$i=0$ to $i_{\mathrm{max}}-1$}
Sample $V_i \sim N(0,I_d)$.\\
Set $\hat{Z}_{i+1} = X_i + \eta V_i$\\
Set \be
Z_{i+1} = \begin{cases}\hat{Z}_{i+1} \qquad &\textrm{ with probability } \min(1, \, e^{U(\hat{Z}_i)-U(Z_i)})
\\ X_i \qquad &  \textrm{ otherwise}  \end{cases}
\ee
\EndFor
\end{algorithmic}
\end{algorithm}

\begin{theorem} [RWM]\label{thm:RWM}
Suppose that $U$ has $M$-Lipschitz gradient on $\mathbb{R}^d$ and satisfies Assumption \ref{assumption:tails}. %
 Then given a $\beta$-warm start, for any step-size parameter $\eta \leq \tilde{O}(\frac{\mathsf{a}}{d M} \log^{-1}(\frac{4 \beta}{\epsilon}))$ there exists $\mathcal{I} = O( \eta^{-2} \psi_\pi^{-2} \log(\frac{\beta}{\epsilon}))$ for which $Z_i$ satisfies $\| \mathcal{L}(Z_i) - \pi\|_{\mathrm{TV}} \leq \epsilon$ for all $i\geq \mathcal{I}$.
\end{theorem}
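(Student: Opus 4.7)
The plan is to mirror the conductance-based proof of Theorem \ref{thm:main}, but for the simpler Random Walk Metropolis chain, whose proposal $\hat{Z}_{i+1} = Z_i + \eta V_i$ with $V_i \sim N(0,I_d)$ does not depend on the gradient. This avoids the delicate Hamiltonian energy-conservation analysis entirely; the main work is just verifying a pointwise acceptance lower bound inside a ``good set'' and a one-step TV bound between kernels.

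First, I would set $\mathcal{I} = O(\eta^{-2}\psi_\pi^{-2}\log(\beta/\epsilon))$ and define the good set $G = \{x : \|x-x^\star\|_2 \leq R\}$ with $R = \tilde{O}\bigl(\tfrac{\sqrt{d}}{\mathsf{a}}\log(\tfrac{\mathcal{I}\beta}{\epsilon})\bigr)$, chosen so that Assumption \ref{assumption:tails} gives $\pi(G^c) \leq \epsilon/(\mathcal{I}\beta)$. Combined with Lemma \ref{lemma:WarmPreserved}, a union bound then shows $\mathbb{P}(Z_i \in G \text{ for all } i \leq \mathcal{I}) \geq 1-\epsilon$. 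Since $U$ has $M$-Lipschitz gradient and $\nabla U(x^\star)=0$, every $x\in G$ satisfies $\|\nabla U(x)\|_2 \leq MR$.

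Second, I would bound the acceptance probability pointwise on $G$. The crude mean-value estimate $|U(\hat{Z})-U(x)| \leq \sup_{z\in[x,\hat{Z}]}\|\nabla U(z)\|_2 \cdot \eta\|V\|_2 \leq 2MR\cdot \eta\|V\|_2$ holds whenever $\|\eta V\|_2 \leq R$. Standard Gaussian concentration gives $\|V\|_2 \leq 2\sqrt{d}$ with probability $1-e^{-\Omega(d)}$, hence $|U(\hat{Z})-U(x)| \leq 4MR\sqrt{d}\,\eta$ on this high-probability event. The step size $\eta \leq \tilde{O}\bigl(\tfrac{\mathsf{a}}{dM}\log^{-1}(\tfrac{4\beta}{\epsilon})\bigr)$ is precisely what is needed to make $4MR\sqrt{d}\,\eta \leq 10^{-3}$; hence $a_x := \mathbb{E}_V[\min(1,e^{U(x)-U(\hat{Z})})] \geq 0.99$ for every $x\in G$. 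Note that my bound uses the uniform Lipschitz gradient estimate rather than attempting to exploit cancellation in the first-order term $\eta V^\top \nabla U(x)$, which is why the dimension appears as $d$ rather than $\sqrt{d}$ in the step-size constraint.

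Third, I would bound $\|K(x,\cdot)-K(y,\cdot)\|_{\mathrm{TV}}$ for $x,y\in G$ with $\|x-y\|_2 \leq \Delta$. Because the proposal distributions are $N(x,\eta^2 I_d)$ and $N(y,\eta^2 I_d)$ with identical covariance, the Devroye-Mehrabian-Reddad bound yields $\|N(x,\eta^2 I_d)-N(y,\eta^2 I_d)\|_{\mathrm{TV}} \leq \tfrac{\|x-y\|_2}{2\eta}$. Decomposing each kernel as $K(x,\cdot)=a_x K^{\mathrm{acc}}_x + (1-a_x)\delta_x$ and using $|a_x-a_y|,(1-a_x),(1-a_y)\leq 0.01$ together with the pointwise Lipschitz bound $|U(x)-U(y)|\leq MR\|x-y\|_2$ on the Metropolis ratio, a short triangle-inequality argument gives $\|K(x,\cdot)-K(y,\cdot)\|_{\mathrm{TV}} \leq \tfrac{\Delta}{2\eta} + MR\Delta + 0.05$. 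For $\Delta = c\eta$ with a small absolute constant $c$, this is $\leq 0.9$.

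Finally, I would apply Lemma \ref{lemma:conductance} to conclude $\Psi_K = \Omega(\eta \psi_\pi)$, and then invoke Lemma \ref{lemma:mixing} via the toy-chain trick from the proof of Theorem \ref{thm:main}: replace the chain after its first exit from $G$ by i.i.d.\ $\pi$-draws, bound the conductance of the toy chain by the pointwise acceptance estimate above, and pay an extra $\epsilon$ in TV for the event that the chain exits $G$. This yields $\|\mathcal{L}(Z_i)-\pi\|_{\mathrm{TV}} \leq O(\epsilon)$ for all $i \geq \mathcal{I} = O(\eta^{-2}\psi_\pi^{-2}\log(\beta/\epsilon))$. The only mild obstacle is the circular dependence between $\mathcal{I}$ (which feeds into $R$ via the union bound) and the step-size constraint; as in Theorem \ref{thm:main}, this is resolved by noting that the dependence is only logarithmic and can be absorbed into the $\tilde{O}(\cdot)$.
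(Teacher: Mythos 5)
Your proposal is correct and follows essentially the same route as the paper's proof of Theorem \ref{thm:RWM}: confine the chain to a ball of radius $\tilde{O}(\frac{\sqrt{d}}{\mathsf{a}})$ via the warm start, Lemma \ref{lemma:WarmPreserved}, and Assumption \ref{assumption:tails}; lower-bound the acceptance probability on that ball using the crude estimate $|U(\hat{Z})-U(z)|\leq \eta\|V\|_2\cdot MR$; combine with the Gaussian total-variation bound of Devroye et al.\ and Lemma \ref{lemma:conductance} to get conductance $\Omega(\eta\psi_\pi)$; and finish with the toy-chain coupling. The only cosmetic differences are your slightly stronger acceptance constant ($0.99$ versus the paper's $\frac{1}{3}$) and your explicit accept/reject kernel decomposition, neither of which changes the argument.
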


\begin{proof}
Since $Z_0$ has a $\beta$-warm start, by Lemma \ref{lemma:WarmPreserved} and Assumption \ref{assumption:tails}, for every $s>0$ have
\be
\mathbb{P}\left(\sup_{i \leq \mathcal{I}} \|Z_i -x^\star\|_2> s \right) \leq \mathcal{I} \times \beta \times e^{-\frac{\mathsf{a}}{\sqrt{d}} s}.
\ee
Thus, setting $s = \frac{\sqrt{d}}{\mathsf{a}} \log(\frac{2\mathcal{I} \beta}{\epsilon})$ we have:
\be
\mathbb{P}\left(\sup_{i \leq \mathcal{I}} \|Z_i -x^\star\|_2> \frac{\sqrt{d}}{\mathsf{a}} \log(\frac{4\mathcal{I} \beta}{\epsilon}) \right) \leq \frac{1}{4}\epsilon.
\ee

Moreover, by the Hanson-wright inequality
\be
\mathbb{P}[\sup_{i \leq \mathcal{I}} \|V_i\|>\xi]  \leq \mathcal{I} e^{-\frac{\xi^2-d}{8}} \quad \quad  \textrm{ for } \xi> \sqrt{2d}.
\ee
Thus, setting $\xi = 5\sqrt{d} \log({\mathcal{I}}{\epsilon})$ we have:
\be
\mathbb{P}[\sup_{i \leq \mathcal{I}} \|V_i\|>5\sqrt{d}]  \leq \frac{1}{4}\epsilon.
\ee

Let $i^\star = \inf\{i \in \mathbb{Z}^\star : \|Z_i -x^\star \|_2> \frac{\sqrt{d}}{\mathsf{a}} \textrm{ or } \|V_i\|>5\sqrt{d} \}$.  Then
with probability at least $1-\frac{1}{2}\epsilon$, we have $i^\star > \mathcal{I}$.

Let $Y_0, Y_1,\ldots \sim \pi$  i.i.d. and independent of $Z_0, Z_1,\ldots$, and define the toy Markov chain $\tilde{Z}$ as follows:
\be
\tilde{Z}_i &= Z_i \qquad \forall i \leq i^\star\\
\tilde{Z}_i &= Y_i \qquad \forall i > \mathcal{I}.\\
\ee

Let $a_{z,v} :=  \min(1, \, e^{U(z + \eta v)-U(z)})$ be the acceptance probability for the toy chain from any $z\in \mathbb{R}^d$ with velocity $v$.  If $\|z -x^\star\|_2\leq \frac{\sqrt{d}}{\mathsf{a}}$ and $\|V_i\| \geq 5\sqrt{d}$, then
\be \label{eq:RWM1}
a_{z,v} = \min(1, \, e^{U(z+ \eta v)-U(z)}) &\geq  \exp \left(- \eta \|v\|_2 \times \sup_{x\in[z, z+\eta v]}   \| \nabla U(x)\|_2) \right)\\
&\geq \exp \left(- \eta \|v\|_2 \times \sup_{x\in[z, z+\eta v]} M \|x\|_2 \right)\\
&\geq \exp \left(- \eta \|v\|_2 \times M (\|z\|_2 + \eta \|v\|_2) \right)\\
&\geq \exp \left(- \eta \|v\|_2 \times M (\|z\|_2 + \eta \|v\|_2) \right)\\
&\geq \exp \left(- \eta 5\sqrt{d} \times M \left(\frac{\sqrt{d}}{\mathsf{a}} \log(\frac{4\mathcal{I} \beta}{\epsilon}) + \eta 5\sqrt{d}\right) \right)\\
&\geq \exp \left(- \eta 30 d \times M \frac{1}{\mathsf{a}} \log(\frac{4\mathcal{I} \beta}{\epsilon}) \right)\\
&\geq \frac{1}{3}.
\ee

Let $x,y\in \mathbb{R}^d$, and $v,w\sim N(0,I_d)$. Therefore, by Theorem 1.3 in \cite{devroye2018total}, we have
\be \label{eq:RWM2}
\|\mathcal{L}(x+ \eta v) - \mathcal{L}(y+ \eta v)\|_{\mathrm{TV}} &\leq \frac{\|x-y\|_2}{2 \eta}.
\ee
Let $K_{\mathrm{toy RWM}}$ be the transition kernel of $\tilde{Z}$.  Then by inequalities \ref{eq:RWM1} and \ref{eq:RWM2}, whenever $\|x-y\|_2 \leq \Delta$ for $\Delta = \eta$, we have
\be
\|K_{\mathrm{toy RWM}}(x,\cdot) - K_{\mathrm{toy RWM}}(y,\cdot)\|_{\mathrm{TV}} &\leq  1-\frac{1}{3}\times \frac{\|x-y\|_2}{2 \eta} \leq \frac{5}{6}.
\ee
Then by Lemma \ref{lemma:conductance} we have $\Psi_{K_{\mathrm{toy RWM}}} = \Omega(\Delta \psi_\pi)$.  Moreover, by Lemma \ref{lemma:conductance} there is an $\mathcal{I} = O( \Psi_{K_{\mathrm{toy RWM}}}^{-2} \log(\frac{\beta}{\epsilon}))$ such that
\be
\|\mathcal{L}(\tilde{Z}_i) - \pi\|_{\mathrm{TV}} \leq \frac{1}{2} \epsilon \qquad \forall i \geq \mathcal{I}.
\ee
But \be
\tilde{Z}_i = Z_i \qquad \forall i \leq i^\star,
\ee
 and $i^\star > \mathcal{I}$ with probability at least $1-\frac{1}{2}\epsilon$.  Therefore,
\be
\|\mathcal{L}(Z_i) - \pi\|_{\mathrm{TV}} \leq \frac{1}{2} \epsilon + \frac{1}{2} \epsilon = \epsilon \qquad \forall i \geq \mathcal{I},
\ee
where $\mathcal{I} = O( \Psi_{K_{\mathrm{toy RWM}}}^{-2} \log(\frac{\beta}{\epsilon})) = O( \eta^{-2} \psi_\pi^{-2} \log(\frac{\beta}{\epsilon}))$.
\end{proof}

\appendix

\section{Hanson-wright inequality}
In this Appendix we recall the Hanson-Wright inequality \cite{hanson1971bound}, for the special case of Gaussian random vectors.

\begin{lemma}[Hanson-Wright inequality]
Let $Z\sim N(0,I_d)$ be a standard Gaussian random vector.  Then
\be
\mathbb{P}[\|Z\|_2>\xi]  \leq e^{-\frac{\xi^2-d}{8}} \quad \quad  \textrm{ for } \xi> \sqrt{2d}.
\ee
\end{lemma}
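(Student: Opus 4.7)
The plan is to reduce the Euclidean norm tail to a tail of a chi-square-type sum, and then apply an exponential Markov (Chernoff) argument using the one-dimensional Gaussian moment generating function. First, I would rewrite the event $\{\|Z\|_2 > \xi\}$ as $\{\|Z\|_2^2 > \xi^2\}$ and note that $\|Z\|_2^2 = \sum_{i=1}^d Z_i^2$ with $Z_1,\ldots,Z_d$ i.i.d.\ $N(0,1)$, so by independence the moment generating function factorizes.

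Next, I would compute the scalar MGF by completing the square: for any $t < 1/2$,
\[
\mathbb{E}[e^{t Z_i^2}] = \int_{\mathbb R} \frac{1}{\sqrt{2\pi}} e^{-(1-2t)x^2/2}\,dx = (1-2t)^{-1/2},
\]
and hence $\mathbb{E}[e^{t\|Z\|_2^2}] = (1-2t)^{-d/2}$. Applying Markov's inequality to $e^{t\|Z\|_2^2}$ then yields, for every $t \in (0, 1/2)$,
\[
\mathbb{P}[\|Z\|_2 > \xi] \;\leq\; e^{-t\xi^2}(1-2t)^{-d/2}.
\]

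To get the desired constant $1/8$ in the exponent I would specialize to $t = 1/4$, which gives the clean bound
\[
\mathbb{P}[\|Z\|_2 > \xi] \;\leq\; 2^{d/2}\,e^{-\xi^2/4}.
\]
The final step is to verify that this is $\leq e^{-(\xi^2-d)/8}$ whenever $\xi > \sqrt{2d}$. Taking logs, the inequality $2^{d/2} e^{-\xi^2/4} \leq e^{-(\xi^2-d)/8}$ reduces to $\xi^2/8 \geq d\bigl(\tfrac{\ln 2}{2} - \tfrac18\bigr) \approx 0.222\,d$, i.e.\ $\xi^2 \geq (4\ln 2 - 1)\,d \approx 1.77\,d$, which is implied by the hypothesis $\xi^2 > 2d$.

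There is no real obstacle here: the argument is essentially a one-parameter Chernoff bound, and the only thing to be careful about is choosing $t$ so that the constants line up with the form $(\xi^2-d)/8$ stated in the lemma and checking that the threshold $\xi > \sqrt{2d}$ is (barely) sharp enough to absorb the $2^{d/2}$ prefactor produced by $t = 1/4$. If one wanted a cleaner matching of constants, the alternative is to optimize $t$ over $(0,1/2)$, but the fixed choice $t=1/4$ already suffices for the claim as stated.
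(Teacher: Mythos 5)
Your proof is correct, and it is worth noting that the paper itself supplies no proof at all for this lemma: Appendix A simply ``recalls'' the inequality with a citation to \cite{hanson1971bound}, so your argument fills in a derivation the paper leaves to the literature. The two routes differ in scope. The cited Hanson--Wright inequality is a statement about general quadratic forms $Z^\top A Z$ and requires a decoupling or comparison argument in its full generality; your proof observes that for the special case $A = I_d$ one only needs the chi-square tail, which follows from a one-parameter Chernoff bound with the explicit MGF $\mathbb{E}[e^{t\|Z\|_2^2}] = (1-2t)^{-d/2}$, $t < \tfrac12$. Your fixed choice $t = \tfrac14$ gives $\mathbb{P}[\|Z\|_2 > \xi] \le 2^{d/2} e^{-\xi^2/4}$, and your reduction of the final comparison to $\xi^2 \ge (4\ln 2 - 1)\,d \approx 1.77\,d$ checks out: taking logarithms, $2^{d/2}e^{-\xi^2/4} \le e^{-(\xi^2-d)/8}$ is equivalent to $\tfrac{\xi^2}{8} \ge d\bigl(\tfrac{\ln 2}{2} - \tfrac18\bigr)$, which the hypothesis $\xi^2 > 2d$ implies with room to spare. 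What your approach buys is a short, self-contained, elementary proof of exactly the statement used in the paper (here and in Lemmas \ref{lemma:good_set} and Theorem \ref{thm:RWM}, where only the Euclidean-norm tail of a standard Gaussian is invoked); what the citation buys is the general quadratic-form version, which is not needed for this particular lemma. The one cosmetic remark is that your displayed MGF computation silently merges $e^{tx^2}$ with the Gaussian density into the single integrand $e^{-(1-2t)x^2/2}$; stating that intermediate step explicitly would make the completion-of-the-square visible, but the computation and the resulting $(1-2t)^{-1/2}$ are correct.
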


\bibliographystyle{plain}
\bibliography{MALA}

\end{document}